\newtheorem{theorem}{Theorem}
\newtheorem{lemma}{Lemma}
\newtheorem{proposition}{Proposition}
\theoremstyle{definition}
\newtheorem{definition}{Definition}
\theoremstyle{remark}
\newtheorem{remark}{Remark}
\newtheorem*{rep@theorem}{\rep@title}
\newcommand{\newreptheorem}[2]
{\newenvironment{rep#1}[1]
	{\def\rep@title{#2 \ref{##1}} \begin{rep@theorem}}%
		{\end{rep@theorem}}}
\newcommand{\figref}[1]{Figure~\ref{fig:#1}}
\newcommand{\secref}[1]{Section~\ref{sec:#1}}
\newcommand{\appref}[1]{Appendix~\ref{app:#1}}
\newcommand{\defref}[1]{Definition~\ref{def:#1}}
\newcommand{\lemref}[1]{Lemma~\ref{lem:#1}}
\newcommand{\propref}[1]{Proposition~\ref{prop:#1}}
\newcommand{\thmref}[1]{Theorem~\ref{thm:#1}}
\newcommand{\eqnref}[1]{equation~\eqref{#1}}
\newcommand{\one}[1]{{\mathbbm{1}}_{{#1}}}
\newcommand{\PP}[1]{\textnormal{Pr}\!\left\{{#1}\right\}} 
\newcommand{\EE}[1]{\mathbb{E}\left[{#1}\right]} 
\newcommand{\EEN}[1]{\mathbb{E}_{N}\left[{#1}\right]}
\newcommand{\EEst}[2]{\mathbb{E}\left[{#1}\ \middle| \ {#2}\right]} 
\newcommand{\PPst}[2]{\text{Pr}\!\left\{{#1}\ \middle| \ {#2}\right\}} 
\def\R{\mathbb{R}}
\newcommand{\ignore}[1]{}
\newcommand{\diff}[1]{\text{d}{#1}} 
\newcommand{\nulls}{\mathcal{H}_0}
\newcommand{\PFER}{\textnormal{PFER}}
\newcommand{\FWER}{\textnormal{FWER}}
\newcommand{\nullsc}{\mathcal{H}_0^{c}}
\newcommand{\seq}[1]{\{{#1}\}_{i=1}^{\infty}}
\newcommand\blfootnote[1]{%
	\begingroup
	\renewcommand\thefootnote{}\footnote{#1}%
	\addtocounter{footnote}{-1}%
	\endgroup
}
\newcommand{\thedate}{\today}
\newcommand{\theauthor}{}
\newcommand{\thetitle}{Online control of the familywise error rate \blfootnote{An R package called \href{http://www.texample.net/tikz/resources/}{onlineFDR}\cite{online} began with algorithms for online FDR control, but now contains all aspects of online multiple testing, including online FWER control. It is being developed by David Robertson and the authors of this paper (among others), and has already incorporated the new algorithms proposed here. The code to reproduce all figures in this paper are accessible at \href{https://github.com/jinjint/onlineFWER}{https://github.com/jinjint/onlineFWER}.}}
\date{\thedate}
\author{\theauthor}
\title{\thetitle}
\newcommand{\fdr}{\textnormal{FDR}}
\newcommand{\fwer}{\textnormal{FWER}}
\newcommand{\pfer}{\textnormal{PFER}}
\def\S{\mathcal{S}}
\def\C{\mathcal{C}}
\def\R{\mathcal{R}}
\def\F{\mathcal{F}}
\def\N{\mathbb{N}}
\def\V{\mathcal{V}}
\newcommand{\dotfrac}[2]{
	\mathchoice
	{\ooalign{$\genfrac{}{}{0pt}{0}{#1}{#2}$\cr\leavevmode\cleaders\hb@xt@ .22em{\hss $\displaystyle\cdot$\hss}\hfill\kern\z@\cr}}
	{\ooalign{$\genfrac{}{}{0pt}{1}{#1}{#2}$\cr\leavevmode\cleaders\hb@xt@ .22em{\hss $\textstyle\cdot$\hss}\hfill\kern\z@\cr}}
	{\ooalign{$\genfrac{}{}{0pt}{2}{#1}{#2}$\cr\leavevmode\cleaders\hb@xt@ .22em{\hss $\scriptstyle\cdot$\hss}\hfill\kern\z@\cr}}
	{\ooalign{$\genfrac{}{}{0pt}{3}{#1}{#2}$\cr\leavevmode\cleaders\hb@xt@ .22em{\hss $\scriptscriptstyle\cdot$\hss}\hfill\kern\z@\cr}}
}
\newcommand{\defn}{\ensuremath{:\, =}}
\tikzstyle{none} = [rectangle, rounded corners, minimum width=0.5cm, minimum height=0.5cm,text centered, draw=black, fill=red!20]
\tikzstyle{noneoff} = [rectangle, rounded corners, minimum width=0.5cm, minimum height=0.5cm,text centered, draw=black, fill=red!8]
\tikzstyle{ada} = [rectangle, rounded corners, minimum width=0.5cm, minimum height=0.5cm, text centered, draw=black, fill=orange!20]
\tikzstyle{dis} = [rectangle, rounded corners, minimum width=0.5cm, minimum height=0.5cm, text centered, draw=black, fill=blue!20]
\tikzstyle{adaoff} = [rectangle, rounded corners, minimum width=0.5cm, minimum height=0.5cm, text centered, draw=black, fill=orange!7]
\tikzstyle{adadis} = [rectangle, rounded corners, minimum width=0.5cm, minimum height=0.5cm, text centered, draw=black, fill={rgb:red,10;green,20;yellow,54}]
\tikzstyle{adadisoff} = [rectangle, rounded corners, minimum width=0.5cm, minimum height=0.5cm, text centered, draw=black, fill=green!5]
\tikzstyle{arrow} = [thick,->,>=stealth, text width=3cm]
\tikzstyle{dotarrow} = [dashed, text width=3cm]
\tikzstyle{title} = [rectangle, rounded corners, minimum width=1cm, minimum height=1cm, text centered, draw=white, fill=none]
\begin{document}
	\author{
	Jinjin Tian, Aaditya Ramdas\\
	Department of Statistics and Data Science\\
	Carnegie Mellon University\\
	\texttt{\{jinjint,aramdas\}@stat.cmu.edu}
}
\maketitle

\begin{abstract}
{Biological research often involves testing a growing number of null hypotheses as new data is accumulated over time. 
We study the problem of online control of the familywise error rate (FWER), that is testing an apriori unbounded sequence of hypotheses ($p$-values) one by one over time without knowing the future, such that with high probability there are no false discoveries in the entire sequence. This paper unifies algorithmic concepts developed for offline (single batch) FWER control and online false discovery rate  control to develop novel online FWER control methods.  Though many offline FWER methods (e.g. Bonferroni, fallback procedures and Sidak's method) can trivially be extended to the online setting, our main contribution is the design of new, powerful, adaptive online algorithms that control the FWER 
when the $p$-values are independent or locally dependent in time. Our experiments demonstrate substantial gains in power, that are also formally proved in a Gaussian sequence model.}{Multiple testing, FWER control, online setting.}
\end{abstract}

   \section{Introduction}\label{sec:intro}
    Modern genomics studies typically require large scale multiple hypotheses testing, which are sometimes conducted in an online manner (meaning one or few hypotheses at a time), not as a big large batch of hypotheses tested all at once. Thus the family of tested hypotheses is  continually growing over time, due to the accumulation of data (both types and amounts). For example, one international scientific project, the International Mouse Phenotyping Consortium (IMPC), aims to create and characterize the phenotype of 20,000 knockout mouse strains; this was launched in September 2011, and was projected to take 10 years. Available datasets and the resulting family of hypotheses constantly grow as new knockouts are studied, while discovery-inspired downstream analyses are conducted along the way.
    Thus we are faced with a nonstandard multiple hypothesis testing problem, one in which the nature or number of hypotheses (or $p$-values) are not known in advance but become known one at a time. This exemplifies the challenge of online hypothesis testing, where at each step the scientist must decide whether or not to reject the current null hypothesis without knowing the future hypotheses or their outcomes (rejected or not). 
   
	Formally speaking, online multiple testing refers to the setting in which a potentially infinite stream of hypotheses $H_1,H_2,\dots$ (respectively $p$-values $P_1,P_2,\dots$) is tested  one by one over time. It is important to distinguish the aforementioned setup from sequential hypothesis testing, where a single hypothesis is repeatedly tested as new data to test that hypothesis are collected. Our setup zooms out one level, and it is the hypotheses that appear one at a time, each one summarized by a $p$-value.
	
	At each step $t \in \mathbb{N}$, one must decide whether to reject the current null hypothesis $H_t$ or not, without knowing the outcomes of all the future tests. Typically, we reject the null hypothesis when $P_t$ is smaller than some threshold $\alpha_t$.  Let $\R(T)$ represent the set of rejected null hypotheses by time $T$, and $\nulls$ be the unknown set of true null hypotheses; then, $\V(T) \equiv \R(T) \cap \nulls$ is the set of incorrectly rejected null hypotheses, also known as false discoveries. Denoting $V(T)=|\V(T)|$, some error metrics are the false discovery rate (FDR), familywise error rate (FWER) and power which are defined as
	\begin{equation}
		\fdr(T) \equiv \EE{\frac{V(T)}{|\R(T)|\vee 1}},\ \ \ \fwer(T) \equiv \PP{V(T) \geq 1}, \ \ \ \textnormal{power}(T) \equiv \EE{\frac{|\nulls^{c}\cap \R(T)|}{|\nulls^{c}|}}.\label{fdr}
	\end{equation}
	
 It is arguably of interest to control the FDR or FWER at each time below a prespecified level $\alpha\in(0,1)$, while maximizing power. Such a goal is  also essential with respect to  providing a reliable resource for downstream analysis along the way. Notice that FWER$(T)$ is monotonically nondecreasing in T, therefore controlling FWER$(T)$ at each $T \in \N$ is equivalent to controlling $\lim_{T\to\infty}\textnormal{FWER}(T)$. Therefore we drop the index $T$, and only discuss techniques controlling  $\lim_{T\to\infty}\textnormal{FWER}(T)$.
	
	There is a large variety of procedures for (A) offline (single batch) FDR control \citep{BH95, storey2002direct, zhao2018multiple}, (B) online FDR control \citep{foster2008alpha, aharoni2014generalized, javanmard2017on, ramdas2017on, ramdas2018saffron, tian2019addis}, and (C) offline FWER control \citep{holm1979simple, hochberg1988sharper, wiens2005fallback}. However, the online FWER problem is underexplored; for example, in their seminal online FDR paper,  \citet{foster2008alpha} only mention in passing that for online FWER control, one can simply use an online Bonferroni method (that they term Alpha-Spending). We first point out that offline FWER procedures can be easily extended to have online counterparts (like Sidak's and fallback methods). However, our main contribution is to derive new ``adaptive'' procedures for online FWER control that are demonstrably more powerful under independence or local (in time) dependence assumptions.
	
	In experiments, we find that our online extensions of classical offline FWER control methods like Sidak \citep{vsidak1967rectangular} and fallback \citep{wiens2005fallback,burman2009recycling} have a fairly negligible power improvement over Alpha-Spending when non-nulls are interspersed randomly over time. Also, as presented in \figref{intro}, these algorithms are sub-optimal when they encounter a non-vanishing proportion of non-nulls, or a significant proportion of \emph{conservative} null\footnote{A common assumption on the marginal distribution of a null $p$-value $P$ is that for all $x \in [0,1]$, $\PP{P\leq x}\leq x$ holds true.  Ideally, a null $p$-value is exactly uniformly distributed, but in practice, null $p$-values are often \emph{conservative}, meaning that $\PP{P\leq x} \ll x$.}.
	In contrast, our adaptive discarding (ADDIS) algorithms are more powerful online FWER control methods that adapt to both an unknown fraction of non-nulls and unknown conservativeness of nulls (see \secref{addis} and \figref{history}).  As shown in \figref{intro}, ADDIS-Spending is significantly more powerful than Alpha-Spending, online Sidak, online fallback, etc. We are able to theoretically justify this power superiority in an idealized Gaussian model (\secref{power}), while also confirm it in the real IMPC dataset we mentioned earlier (\secref{real}). Although these powerful new adaptive methods require independence to have valid FWER control, they can be easily generalized to handle ``local dependence'', an arguably more realistic dependence structure in practice (see \secref{local}). 

\begin{figure}
		\centering
		\includegraphics[width=0.34\textwidth, valign=c]{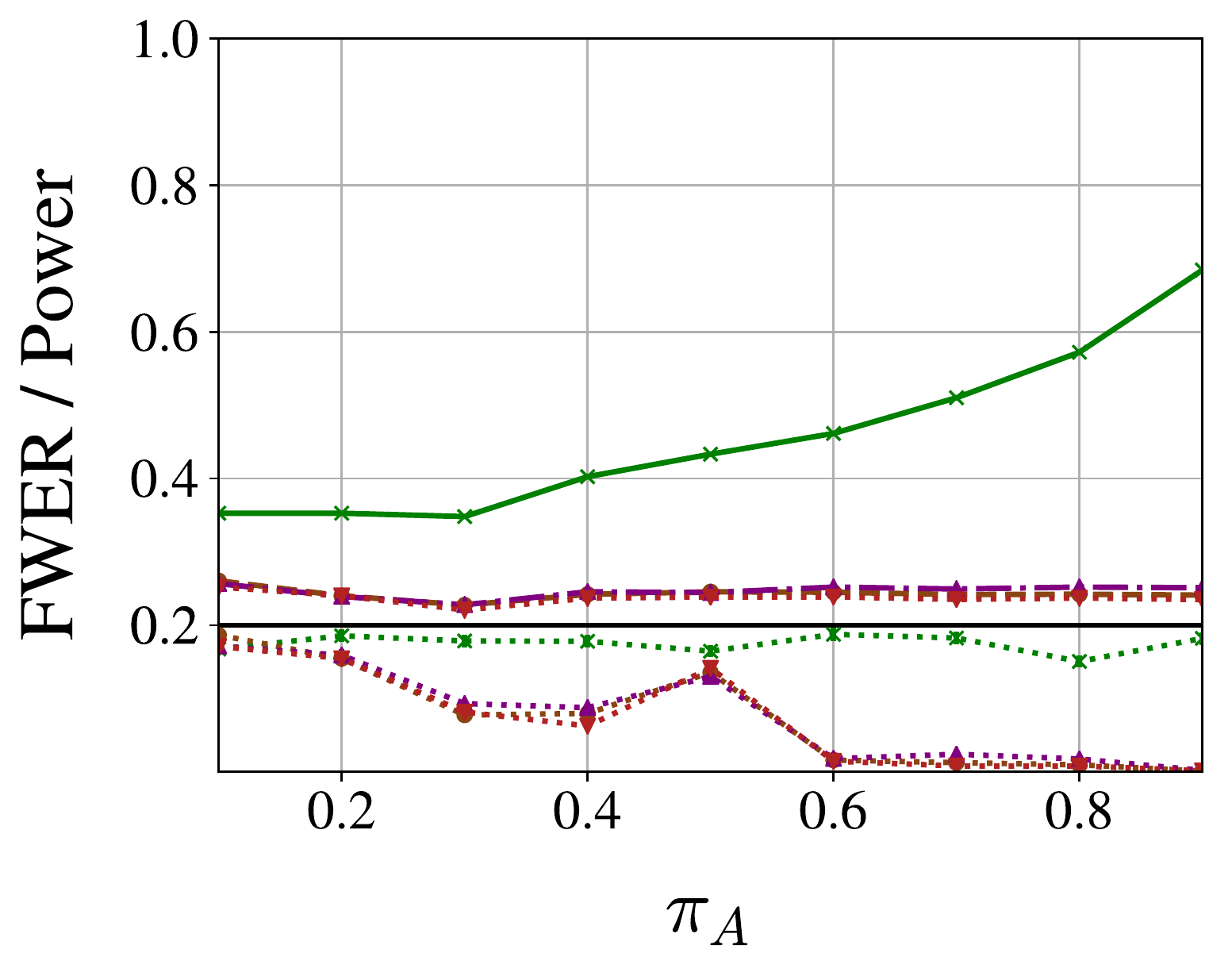}
		\includegraphics[width=0.34\textwidth, valign=c]{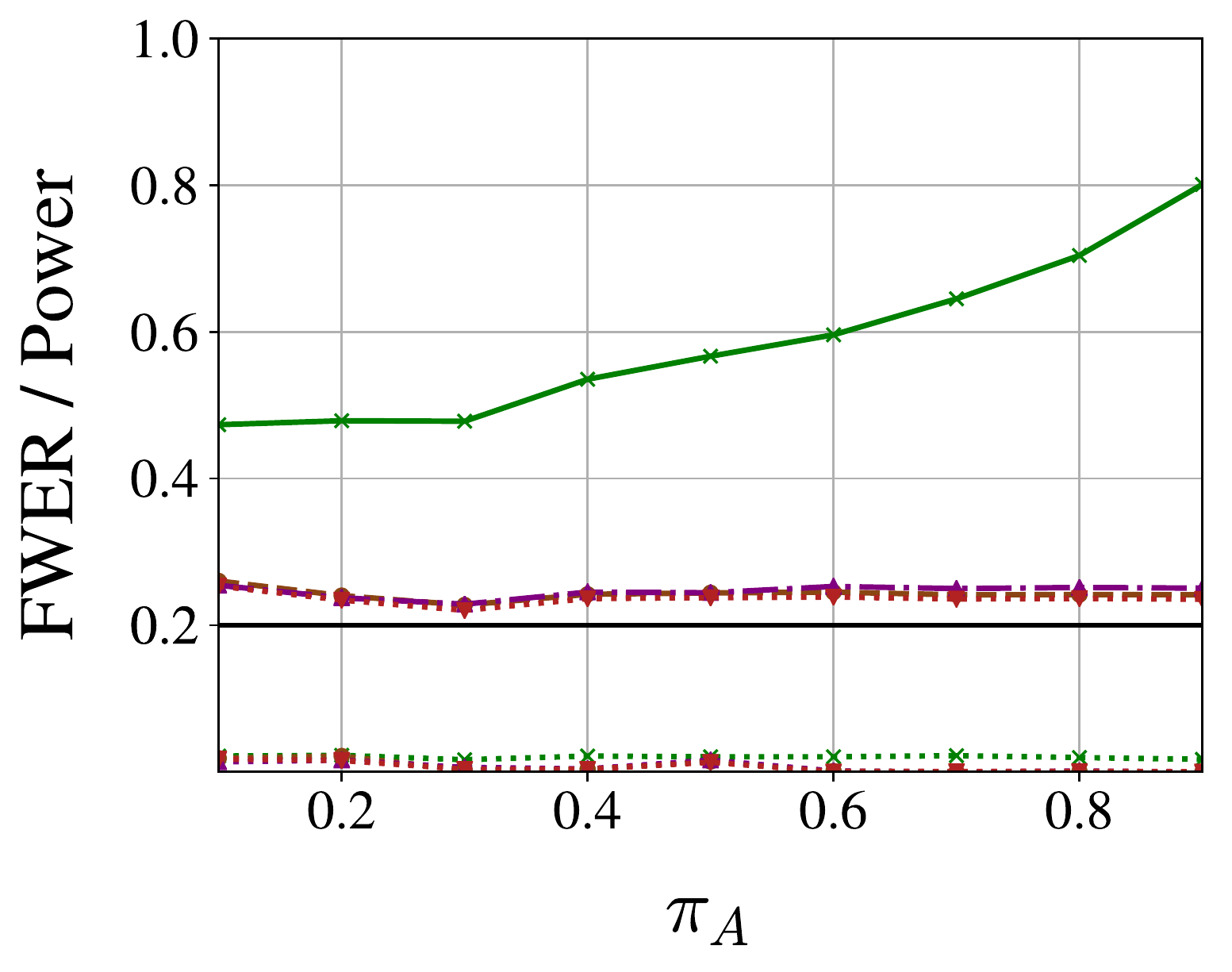}
		\includegraphics[width=0.17\textwidth, valign=c]{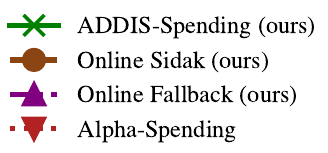} 
		\caption{Statistical power and FWER versus fraction of non-null hypotheses $\pi_A$ for Alpha-Spending (online Bonferroni) and our new algorithms (ADDIS-Spending, Online Fallback and Online Sidak) at target FWER level $\alpha= 0.2$ (solid black line). The curves above the horizontal line at $0.2$ display the achieved power of each methods versus $\pi_A$, while the lines below 0.2 display the achieved FWER of each methods versus $\pi_A$. The experimental setting is described in \secref{simu}: all observations are Gaussians, we set the alternative mean $\mu_A = 4$ for both figures, but we set the null mean $\mu_N = 0$ for the left figure and $\mu_N = -1$ for the right figure (hence the right nulls are conservative, the left nulls are not). These figures show that (a) all considered methods control the FWER at level 0.2, (b) Online Sidak and Online Fallback provide negligible improvement compared to the naive Alpha-Spending, (c) ADDIS-Spending is much more powerful than all other algorithms in both settings.}\label{fig:intro}
	\end{figure}   

	\begin{minipage}{0.9\columnwidth}
		\begin{center}
			\begin{tikzpicture}[node distance=1.1cm]
			\node (offline) [title] {Offline};    
			\node (bonf) [noneoff, right of=offline, xshift = 1cm] {\scriptsize{Bonferroni}};
			\node (adabonf) [adaoff, right of=bonf, xshift = 5cm, align = center] {\scriptsize{Adaptive} \scriptsize{Bonferroni} \\ \citep{schweder1982plots} \\ \citep{ hochberg1990more}};
			
			\node (online) [title, below of=offline, yshift = -2cm] {Online};
			\node (spend) [none, right of=online, xshift = 2cm, align = center] {\scriptsize{ Alpha-Spending}\\ \citep{foster2008alpha}};
			\node (ada-spend) [ada, right of=spend,  xshift = 5cm, yshift = 0.8cm, align = center] {\scriptsize{Adaptive-Spending}\\ \scriptsize{ (\secref{adaptive})}};
			\node (dis-spend) [dis, below of=ada-spend, yshift = -0.8cm, align = center] {\scriptsize{Discard-Spending}\\ \scriptsize{ (\secref{discarding})}};  
			\node (addis-spend) [adadis, right of=spend, xshift = 10cm, align = center] {\scriptsize{ADDIS-Spending} \\ \scriptsize{ (\secref{addis})}};
			
			
			\draw [arrow] (bonf) --node[anchor=south, align = center]{\scriptsize{adaptivity}}(adabonf);
			\draw [arrow] (spend) --node[anchor=south, align = center]{\scriptsize{adaptivity}}(ada-spend);
			\draw [arrow] (spend) --node[anchor=north, align = center]{\scriptsize{discarding}}(dis-spend);
			\draw [arrow] (ada-spend) --node[anchor=south, align = center]{}(addis-spend);
			\draw [arrow] (dis-spend) --node[anchor=south, align = center]{}(addis-spend);
			\draw [dotarrow] (bonf) --node[anchor=west]{\scriptsize{online analog}}(spend);
			\draw [dotarrow] (adabonf) --node[anchor=west]{\scriptsize{online analog}}(ada-spend);
			
	\end{tikzpicture}
	
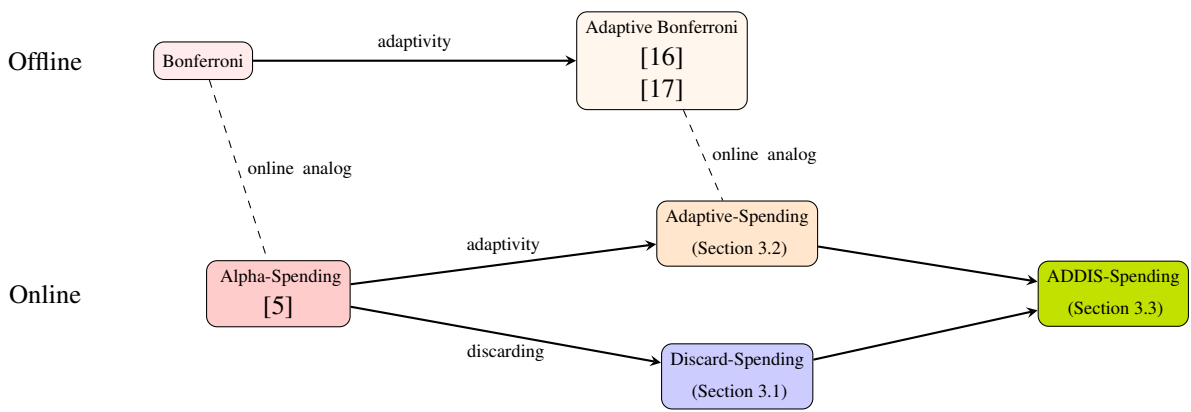
\captionof{figure}{Historical context for comparing adaptive offline and online methods for FWER control.}\label{fig:history}
	\end{center}
	\end{minipage}
	
	\paragraph{Paper outline}
	In \secref{current}, we extend some offline FWER methods to the online setting, and prove their FWER control. Then we derive ADDIS-Spending in \secref{ADDIS-Spending}, while we present its generalizations to handle local dependence in \secref{local}. \secref{power} contains theoretical results about the statistical power of the online FWER control procedures (both known and newly derived), particularly \secref{spending} derives optimal choices for hyperparameters of naive Alpha-Spending within an idealized Gaussian setup, and \secref{more} provides some theoretical justification of ADDIS-Spending being more powerful than naive Alpha-Spending.  Numerical studies are included in \secref{numerical} to empirically demonstrate the improved power of our new methods using both synthetic and real data. In the end, we also present several extensions of our main contribution, together with interesting followup directions for future work in \secref{conclude}.

	\section{Online extensions of classical offline FWER control methods}\label{sec:current}
	As we mentioned before, one may regard Alpha-Spending as an online generalization of Bonferroni correction in the offline setting. Specifically, for FWER level $\alpha$, given an infinite nonnegative sequence $\seq{\gamma_i}$ that sums to one, the Alpha-Spending tests individual hypothesis $H_i$ at level
	\begin{equation}
	    \alpha_i \defn \alpha \gamma_i. \label{spend}
	\end{equation}
	
	In this section, we seek online variants of other offline FWER methods such as Holm's \citep{holm1979simple} and Hochberg's methods \citep{hochberg1988sharper}; the fallback procedure \citep{wiens2005fallback} and alpha-recycling \citep{burman2009recycling},  and the correction proposed by \citet{vsidak1967rectangular}. Gladly, most of the methods (like Sidak, fallback and its generalization) can be trivially extended to the online setting, while maintaining strong FWER control, while some of them (like Holm and Hochberg) which depends on ordering are hard to incorporate with the online setting. In the following, we present two general classes of algorithms which uniformly improves Bonferonni, together with numerical analysis of their performance.  Specifically, those algorithms are Online Sidak (requires independence), an online analog of offline Sidak correction  \citep{vsidak1967rectangular}; and Online Fallback (works under arbitrary independence), an online analog of the offline generalized fallback  \citep{bretz2009graphical}. We expect more online procedures to be developed from offline methods in the future.
	
	\subsection{\textbf{Online Sidak}}
	One well known improvement of Bonferroni is the Sidak correction proposed by \citet{vsidak1967rectangular} which requires independence. Given $m$ different null hypotheses and FWER level $\alpha$, the Sidak method uses the testing level $1-(1-\alpha)^{\frac{1}{m}}$ for each hypothesis. Analogously, Online Sidak chooses an infinite nonnegative sequence $\seq{\gamma_i}$ which sums to one and tests each hypothesis $H_i$ at level 
	\begin{equation}\label{onsidak}
		\alpha_i \defn 1 - (1-\alpha)^{\gamma_i}.
	\end{equation}
	Just as Sidak is only slightly less stringent than Bonferroni, Online Sidak also improves very little of Alpha-Spending, as shown in \figref{denser}. In addition, Online Sidak also requires independence for valid FWER control, as stated below.
	\begin{proposition}\label{prop:sidak}
		Online Sidak controls FWER in a strong sense if the null $p$-values are independent with each other, and is at least as powerful as the corresponding Alpha-Spending procedure.
	\end{proposition}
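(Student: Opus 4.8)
The statement has two independent parts: (i) strong FWER control when the null $p$-values are mutually independent, and (ii) a power comparison with Alpha-Spending run on the same sequence $\seq{\gamma_i}$. I would prove them in that order.

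\textbf{FWER control.} Fix an arbitrary configuration of the true nulls $\nulls$ (this is what "strong" requires) and a time $T$. Since $\alpha_i = 1-(1-\alpha)^{\gamma_i}$ is a deterministic sequence, a false discovery has occurred by time $T$ precisely when $P_i \le \alpha_i$ for some $i \in \nulls$ with $i \le T$, so
\begin{equation*}
\fwer(T) \;=\; \PP{\exists\, i \in \nulls,\ i \le T:\ P_i \le \alpha_i} \;=\; 1 - \PP{\textstyle\bigcap_{i \in \nulls,\, i \le T}\{P_i > \alpha_i\}}.
\end{equation*}
The event in the last probability involves only null $p$-values and the thresholds are non-random, so mutual independence of the nulls lets me factor it, and the super-uniformity of null $p$-values ($\PP{P_i \le x}\le x$, as in the footnote) gives $\PP{P_i > \alpha_i} \ge 1-\alpha_i = (1-\alpha)^{\gamma_i}$. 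Hence
\begin{equation*}
\PP{\textstyle\bigcap_{i \in \nulls,\, i \le T}\{P_i > \alpha_i\}} \;\ge\; \prod_{i \in \nulls,\, i \le T}(1-\alpha)^{\gamma_i} \;=\; (1-\alpha)^{\sum_{i \in \nulls,\, i \le T}\gamma_i} \;\ge\; (1-\alpha)^{1} \;=\; 1-\alpha,
\end{equation*}
where the last step uses $\sum_{i \in \nulls,\, i \le T}\gamma_i \le \sum_{i=1}^\infty \gamma_i = 1$ and the fact that $x \mapsto (1-\alpha)^x$ is decreasing because $1-\alpha \in (0,1)$. Thus $\fwer(T) \le \alpha$ for every $T$, and since $\fwer(T)$ is monotone in $T$ this also gives $\lim_{T\to\infty}\fwer(T) \le \alpha$.

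\textbf{Power.} It is enough to show that, for the same $\seq{\gamma_i}$, Online Sidak tests at a weakly larger level than Alpha-Spending at every index, i.e. $1-(1-\alpha)^{\gamma_i} \ge \alpha\gamma_i$ for all $i$. Both procedures use deterministic thresholds, so this pointwise inequality implies that the rejection set of Alpha-Spending is contained in that of Online Sidak on every sample path, hence $|\nulls^c \cap \R(T)|$ is pathwise no smaller for Online Sidak, and taking expectations gives the power comparison. The inequality $1-(1-\alpha)^{\gamma} \ge \alpha\gamma$, equivalently $(1-\alpha)^{\gamma} \le 1-\alpha\gamma$ for $\gamma \in [0,1]$, follows because $\gamma \mapsto (1-\alpha)^{\gamma}$ is convex on $[0,1]$ while $\gamma \mapsto 1-\alpha\gamma$ is the chord joining its endpoint values at $\gamma=0$ and $\gamma=1$, so the convex function lies below the chord.

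\textbf{Main obstacle.} There is no real difficulty here; the proof mirrors the classical offline Sidak argument, with the telescoping identity $\prod_i (1-\alpha)^{\gamma_i} = (1-\alpha)^{\sum_i \gamma_i}$ doing the work. The only points needing care are confirming that only independence \emph{among} the null $p$-values is invoked (the false-discovery event never references the non-nulls, and the thresholds are non-random), and the harmless passage to the limit $T\to\infty$.
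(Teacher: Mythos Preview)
Your proof is correct and follows essentially the same route as the paper's: factor $\PP{V=0}$ over the nulls using independence and super-uniformity, collapse the product via $\prod_i (1-\alpha)^{\gamma_i}=(1-\alpha)^{\sum_i\gamma_i}\ge 1-\alpha$, and for power compare thresholds pointwise. You simply add more detail than the paper does (the finite-$T$ step and the convexity justification of $1-(1-\alpha)^{\gamma}\ge \alpha\gamma$), which is fine.
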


	\begin{proof}
		The probability of no false discovery among all infinite decisions is 
		\[
		\PP{V=0} = \EE{\prod_{i \in \nulls} \one{P_i > \alpha_i} } \geq \prod_{i \in \nulls} (1-\alpha_i) = \prod_{i \in \nulls} (1-\alpha)^{\gamma_i} = (1-\alpha)^{\sum_{i \in \nulls} \gamma_i} \geq 1-\alpha.
		\]
		Further, since $\alpha_i > \gamma_i \alpha$, Online Sidak is at least as powerful as the corresponding Alpha-Spending procedure.
	\end{proof}

	\subsection{Online fallback procedures}
	The fallback procedure \citep{wiens2005fallback} and its graphical generalization \citep{bretz2009graphical} for offline FWER control can be easily extended to the online setting, improving the power of Alpha-Spending by ``recycling'' the significance level of previous rejections. Below, we briefly present the online variants of the generalized fallback \citep{bretz2009graphical}, specifically named Online Fallback.
	\paragraph{\textbf{Online Fallback}} For FWER level $\alpha$, Online Fallback chooses an infinite sequence $\seq{\gamma_i}$ that sums to one, and tests each $H_i$ at level 
	\begin{equation}\label{recycling}
		\alpha_i := \gamma_i \alpha + \sum_{k=1}^{i-1}w_{k,i} R_{k} \alpha_{k},
	\end{equation}
    where $R_i = \one{P_i \leq \alpha_i}$, and $\{w_{ki}\}_{i = k+1}^{\infty}$ being some infinite sequence that is nonnegative and sums to one for all $k \in \N$. Specifically, we call $\{w_{ki}\}_{i = k+1}^{\infty}$ the transfer weights for the $k$-th hypothesis. \\
	
	\noindent
	Note that when we choose $\{w_{ki}\}_{i = k+1}^{\infty}$ with $w_{ki} = \one{k=i-1}$, the individual testing level of Online Fallback reduces to 
    \begin{equation}\label{eqsaving}
	 \alpha_i := \alpha \gamma_i  + R_{i-1} \alpha_{i-1}
	\end{equation}
	for all $i\geq 2$, and $\alpha_1 = \alpha\gamma_1$, which happens to be the offline fallback procedure \citep{wiens2005fallback} applying on the infinite hypotheses sequence. We specifically refer to this procedure as \textbf{Online Fallback-1}. 
	
	The following Proposition states the FWER control of Online Fallback and is proved in \appref{recycling}. In fact, Online Fallback can be treated as an instantiation of a more general sequential rejection principle proposed by \citet{goeman2010sequential}, and \propref{recycling} follows trivially from their results. However, here we present a more tangible proof for \propref{recycling} from a separate direction, since it allows us to get more insights about the connection between the offline methods and their online variants. 
	\begin{proposition}\label{prop:recycling}
		Online Fallback controls FWER for arbitrarily dependent $p$-values, and is at least as powerful as the corresponding Alpha-Spending procedure.
	\end{proposition}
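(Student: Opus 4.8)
The plan is to treat the two assertions separately. The power assertion is immediate from the recursion \eqref{recycling}: since $\alpha_i = \gamma_i\alpha + \sum_{k<i}w_{k,i}R_k\alpha_k \ge \gamma_i\alpha$ (every summand is nonnegative), whenever the corresponding Alpha-Spending rejects $H_i$ (i.e. $P_i\le\gamma_i\alpha$) so does Online Fallback; hence the rejection set of Online Fallback contains that of Alpha-Spending at every time, in particular it contains at least as many non-nulls, so its power is at least as large.

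For FWER control the key idea is to dominate the \emph{random} testing levels $\alpha_i$ by \emph{deterministic} ones, after which a union bound plus super-uniformity of null $p$-values suffices; notably no independence enters, which is why arbitrary dependence is allowed. Let $\tau \defn \inf\{\,i\in\nulls : P_i\le\alpha_i\,\}$ be the time of the first false discovery (with $\tau=\infty$ if there is none), so that $\{V\ge 1\}=\{\tau<\infty\}$. Define deterministic levels $\hat\alpha_i$ by the recursion in which only non-nulls recycle significance,
\[
\hat\alpha_i \defn \gamma_i\alpha + \sum_{k<i,\ k\in\nullsc} w_{k,i}\,\hat\alpha_k ,
\]
which depends only on $\seq{\gamma_i}$, the transfer weights, and the fixed (unknown) set $\nulls$. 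A short induction on $i$ shows $\alpha_i\le\hat\alpha_i$ for every $i\le\tau$: for such $i$, each $k<i$ with $k\in\nulls$ has $R_k=0$ because no null is rejected strictly before $\tau$, while for $k\in\nullsc$ one bounds $R_k\le 1$ and invokes the inductive hypothesis $\alpha_k\le\hat\alpha_k$ (valid since $k<i\le\tau$); the base case $i=1$ is $\alpha_1=\gamma_1\alpha=\hat\alpha_1$. Consequently, on $\{\tau<\infty\}$ we have $\tau\in\nulls$ and $P_\tau\le\alpha_\tau\le\hat\alpha_\tau$, so $\{\tau<\infty\}\subseteq\bigcup_{i\in\nulls}\{P_i\le\hat\alpha_i\}$, and therefore
\[
\fwer = \PP{\tau<\infty}\ \le\ \sum_{i\in\nulls}\PP{P_i\le\hat\alpha_i}\ \le\ \sum_{i\in\nulls}\hat\alpha_i .
\]

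It remains to establish the ``conservation of $\alpha$-wealth'' inequality $\sum_{i\in\nulls}\hat\alpha_i\le\alpha$. I would do this with a potential-function argument: for $n\ge 0$ set
\[
\Phi_n \defn \sum_{i\le n,\ i\in\nulls}\hat\alpha_i\ +\ \sum_{j>n}\Bigl(\gamma_j\alpha + \sum_{k\le n,\ k\in\nullsc} w_{k,j}\,\hat\alpha_k\Bigr),
\]
the significance ``already absorbed by nulls up to time $n$'' plus the significance ``still in transit toward times $>n$''. Then $\Phi_0=\sum_{j\ge 1}\gamma_j\alpha=\alpha$, and a case split on whether or not $n\in\nulls$, using $\sum_{j>k}w_{k,j}=1$ and the defining recursion for $\hat\alpha_n$, gives $\Phi_n=\Phi_{n-1}$ for all $n\ge 1$; since every term of $\Phi_n$ is nonnegative, $\sum_{i\le n,\ i\in\nulls}\hat\alpha_i\le\Phi_n=\alpha$, and letting $n\to\infty$ completes the argument. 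I expect this last step to be the main obstacle: with an infinite stream, significance can ``escape to infinity'' and a naive interchange of the defining double sums can produce $\infty-\infty$, so one genuinely needs the monotone, finite-$n$ accounting above rather than a one-line rearrangement. Setting up $\hat\alpha_i$ and the domination induction so that arbitrary dependence is truly never invoked is the other place to be careful. This route is also more transparent than quoting the sequential rejection principle of \citet{goeman2010sequential}: it exhibits Online Fallback as the offline fallback procedure run on the infinite stream, with the twist that, in the worst-case analysis, only confirmed (non-null) rejections are allowed to recycle their significance.
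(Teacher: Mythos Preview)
Your proof is correct and takes a genuinely different route from the paper's. The paper reduces the infinite online procedure to the finite offline graphical procedure of \citet{bretz2009graphical}: for each $n$, it identifies Online Fallback restricted to $H_1,\dots,H_n$ with their Algorithm~1 applied to the upper-triangular transfer matrix $W_n$, invokes their regularity and monotonicity conditions to obtain $\PP{V_n\ge 1}\le\alpha$, and then passes to the limit via the monotone sequence of events $\{V_n=0\}$. The power claim is dispatched the same way in both proofs, by the trivial domination $\alpha_i\ge\gamma_i\alpha$.

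Your argument is more self-contained and makes explicit \emph{why} arbitrary dependence is permitted: by freezing the null rejections to zero prior to the first false discovery $\tau$, you replace the random levels $\alpha_i$ by deterministic $\hat\alpha_i$, after which only the marginal super-uniformity of each null $p$-value is invoked --- no joint distribution ever enters. The potential-function bookkeeping $\Phi_n\equiv\alpha$ then makes the conservation of $\alpha$-wealth concrete and handles the infinite stream directly, without importing any external result. The paper's route, by contrast, buys brevity and a clean conceptual link to the offline fallback/graphical literature, at the cost of relying on the machinery of \citet{bretz2009graphical} (or, as the paper also remarks, the sequential rejection principle of \citet{goeman2010sequential}). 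Either approach is valid; yours would be preferable if one wants the argument to stand on its own.
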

	\noindent
    Though \propref{recycling} states the power superiority of Online Fallback over Alpha-Spending, one may have the intuition that the power improvement of Online Fallback over Alpha-Spending would be little when encounters randomly arrived signals, since much of the recycled significance level are inevitably wasted on the nulls. \figref{denser} demonstrate this intuition, where the left figure describes the setting of randomly arrived signals and plot against varied signal frequency $f$; while the right figure describes the setting of clustered signals at the beginning of the sequence and plot against varied cluster range $r$. As shown in the left figure of \figref{denser}, where the non-nulls are evenly distributed, Online Fallback has basically the same power with Alpha-Spending. 
    Meanwhile, in the extreme case when the beginning of the sequence consists of only non-nulls (e.g. when $r = 1$ in the right figure), Online Fallback (especially Online Fallback-1) are much more powerful than Alpha-Spending, while this advantage diminishes fast as more non-nulls plug in the signal cluster (i.e. as $r$ increases). 

	\begin{figure}
		\begin{subfigure}{\textwidth}
     	\includegraphics[width=0.4\textwidth]{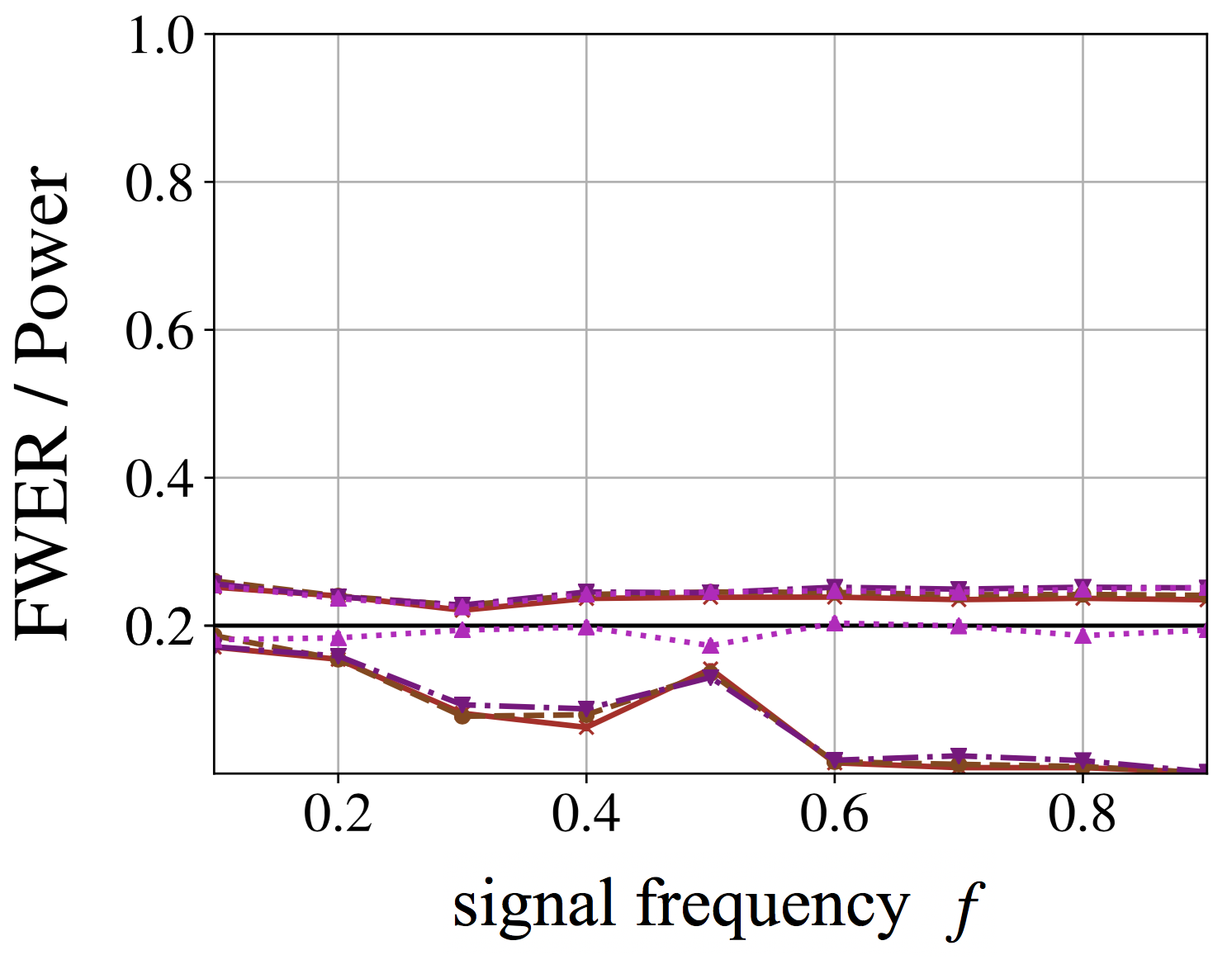}
    	\includegraphics[width=0.4\textwidth]{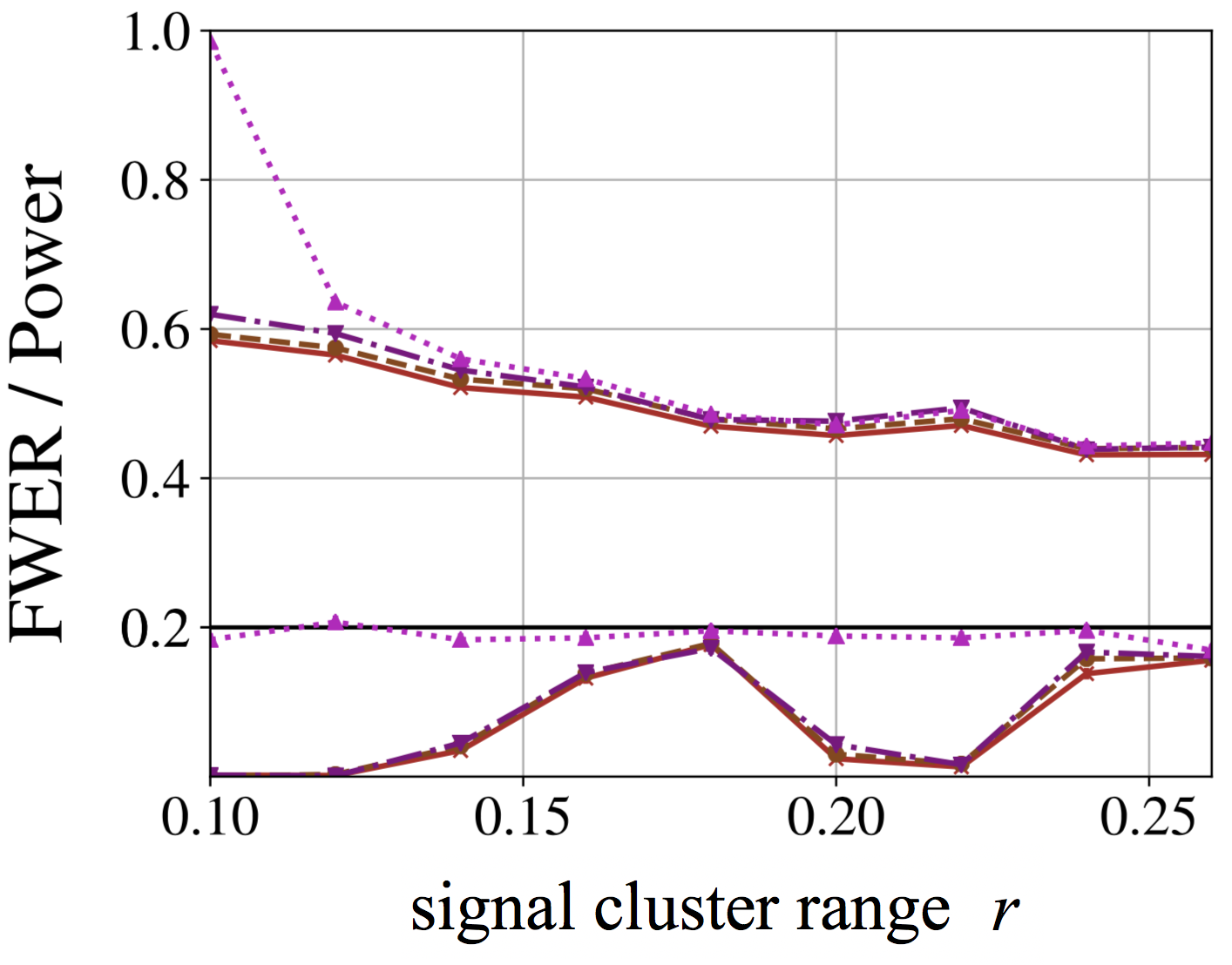}
    	\includegraphics[width=0.18\textwidth]{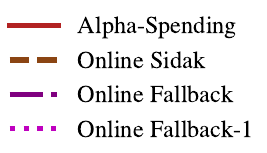}
		\end{subfigure}
		\caption{Statistical power and FWER for Alpha-Spending, Online Sidak, Online Fallback-1 and Online Fallback at target FWER level $\alpha= 0.2$ (solid black line). The curves above 0.2 line display the power of each methods, while the lines below 0.2 display the FWER of each methods.  The specific experimental setting follows those in  \secref{simu}, with few alterations. The main experimental setting is described in \secref{simu}, while we allow $\pi_A$ to be different for each hypotheses $H_i$, and denote it as $\pi_{Ai}$ instead. We set $\mu_N \equiv 0, \mu_A \equiv 4$ for all the figures; and we set $\pi_{A_i} = f $ for $i$ less than $\lfloor T r\rfloor$ and $\pi_{A_i} = 0$ otherwise. Particularly, we set $f \in \{0.1, 0.2, \dots, 0.9\}$ for the left figure, while $f \equiv 0.1$ for the right figure; and we set $r = 1$ for the first figure, 
	    and $r \in \{0.1, 0.12, \dots, 0.26\}$ for the right figure. The underlying $\seq{\gamma_i}$ is $\seq{6/\pi^2 i^2}$ for all the methods, and particularly we set  $\{w_{ki}\}_{i=k+1}^{\infty}$ with $w_{ki} = \gamma_{i-k}$ for all $k \in \N$ in Online Fallback. These figures display that (a) all the considered methods do control FWER at level 0.2; (b) the new methods (Online Sidak, Online Fallback-1 and Online Fallback) have almost the same power as Alpha-Spending given evenly distributed non-nulls; (c) Online Fallback-1 and Online Fallback have noticeable power improvement over Alpha-Spending when non-nulls cluster in the beginning of the testing sequence, with greater improvement if the non-nulls are more compact in the beginning of the testing sequence. }\label{fig:denser}
	\end{figure}
     
     The above new online FWER control methods are guaranteed to be uniformly more powerful than Alpha-Spending, though the improvements are usually minor, except for extreme cases when the non-nulls are clustered at the beginning of the testing sequence. This motivates the development of a new class of algorithms in the following section.
	
	\section{Adaptive discarding (ADDIS) algorithms}\label{sec:ADDIS-Spending}
	In \secref{current}, we refined Alpha-Spending by refining its offline variants. Since we find that the resulting methods rarely improve power much over Alpha-Spending, we refine Alpha-Spending from another angle in this section, directly addressing the looseness in the proof of its FWER control. We end up with a series of new adaptive algorithms that are much more powerful than Alpha-Spending, though at the cost of requiring independence. We ease the requirement in \secref{local} to allow the methods to handle a local dependence structure, which is a more practical and reliable dependence structure as discussed in \cite{zrnic2019asynchronous}.
	
	First, we explain the looseness in the proof of FWER control of Alpha-Spending. Recall the Alpha-Spending procedure in \eqref{spend}. Its FWER is controlled because an even stronger error metric, the per-family wise error rate (PFER) is also controlled at $\alpha$:
	\begin{equation}\label{pfspending}
		\PFER \defn \EE{V} = \EE{\sum_{i\in\nulls} \one{P_i \leq \alpha_i} } = \sum_{i \in \nulls} \EE{\one{P_i \leq \alpha_i}} \overset{(\mathbbm{A})}{\leq} \sum_{i \in \nulls} \alpha_i \overset{(\mathbbm{B})}{\leq} \sum_{i \in \N} \alpha_i = \alpha.
	\end{equation} 
    \noindent
    (One can verify that FWER $\leq$ PFER by Markov's inequality.)
	The above proof indicates that the PFER of Alpha-Spending may be far less than $\alpha$ (and hence the power would be much lower than necessary) if either of the two inequalities in \eqref{pfspending} are loose. Specifically, the first inequality $(\mathbbm{A})$ would be loose if the null $p$-values are very conservative, and the second inequality $(\mathbbm{B})$ would be loose if there are a large number of non-nulls. These facts expose two weaknesses of Alpha-Spending: it will become suboptimal when encounters conservative nulls or a constant fraction (like 20\%) of non-nulls, both are arguably quite likely in real applications. These weaknesses are also problems for the methods derived in \secref{current}, since they have almost the same power with Alpha-Spending in those simulations.
	
	By addressing these two weaknesses, we develop a more powerful family of methods called ADDIS-Spending, which are adaptive discarding algorithms that not only benefit from adaptivity to the fraction of nulls, but also gain by exploiting conservative nulls (if they exist). Instead of directly presenting the ADDIS-Spending algorithm, we first introduce the idea of discarding and adaptivity in \secref{discarding} and \secref{adaptive} respectively, each addressing one of the looseness mentioned above, and in \secref{addis} we combine the ideas together to develop our ADDIS-Spending algorithm, which addresses both loosenesses. There is a price to pay for these improvements. Alpha-Spending works even when the $p$-values are arbitrarily dependent, but the idea of discarding and adaptivity essentially requires independence between $p$-values. We ease this requirement later in \secref{local}, by generalizing our ADDIS-Spending algorithm to handle a local dependence structure. 
	
	Before we proceed, it is useful to set up some notation. Recall that $P_j$ is the $p$-value resulted from testing hypothesis $H_j$. Given infinite sequences $\{\tau_t\}_{t=1}^{\infty}$, $\{\lambda_t\}_{t=1}^{\infty}$ and $\{\alpha_t\}_{t=1}^{\infty}$, where every element is in $(0,1)$, define the indicators
	\begin{equation}\label{indicators}
		S_j = \one{P_j \leq \tau_j}, \quad C_j = \one{ P_j \leq \lambda_j }, \quad R_j = \one{ P_j \leq \alpha_j},    
	\end{equation}
	which respectively indicate whether $H_j$ is selected for testing (used for adapting to conservative nulls), whether $H_j$ is a candidate for rejection (used for adapting to fraction of nulls), and whether $H_j$ is rejected. Accordingly define $R_{1:t}= \{R_1,\dots,R_{t}\}$, $C_{1:t}= \{C_1,\dots,C_{t}\}$ and $S_{1:t} =\{S_1,\dots,S_{t}\}$. Similarly, let $\R = \{i\in \N: R_i =1\}$, $\C = \{i\in \N: C_i =1\}$, $\S = \{i\in \N: S_i =1\}$. In what follows, we say  $\alpha_t$, $\lambda_t$ and $\tau_t$ are ``predictable'' to mean that they are measurable with respect to some filtration $\F^{t-1}$, meaning that they are mappings from $\F^{t-1}$ to $(0,1)$. The form of $\F^{t-1}$ may change across algorithms and it is denoted $\sigma(\cdot)$, that is a sigma field of certain random variables.
	
	\subsection{Discarding conservative nulls}\label{sec:discarding}
	Here, we develop a method named Discard-Spending to address the first looseness $(\mathbbm{A})$ in \eqnref{pfspending} due to overlooking the conservativeness of null $p$-values. Specifically, we consider the null $p$-values to be \emph{uniformly conservative} as defined below, which include uniform nulls (the ideal case) and the majority of conservative nulls (realistic case).
	
	\begin{definition}\label{def:conserve} (Uniformly conservative)
    A null $p$-value $P$ is said to be uniformly conservative if for all $x,\tau \in [0,1]$, 
	\begin{equation}\label{conserve-def}
		\PPst{P \leq x\tau}{P \leq \tau} \leq x.
	\end{equation} 
	\noindent
	\end{definition}
	The above condition is easy to interpret; for example when $\tau=0.6$ and $x=0.5$, it means the following---if we know that $P$ is at most $0.6$, then it is more likely between $[0.3,0.6]$ than in $[0,0.3]$.
	As an obvious first example, uniform null $p$-values are uniformly conservative. As for more general examples, note that the definition  \eqnref{conserve-def} is mathematically equivalent to requiring that for the CDF $F$ of the $p$-value $P$, we have
	\begin{equation}\label{conserve-def-equal}
		F(\tau x) \leq x F(\tau) \quad \text{ for all } x,\tau \in [0,1].
	\end{equation}
	\noindent
	 Hence, null $p$-values with convex CDF are uniformly conservative. 
	Moreover, noting that monotonically nondecreasing density implies convex CDF for null $p$-values, \citet{zhao2018multiple} presented the following \propref{expo}.
	
	\begin{proposition}\label{prop:expo}
	\citep{zhao2018multiple} For a one-dimensional exponential family with true parameter $\theta \leq \theta_0$, the $p$-value resulting from the uniformly most powerful (UMP) test of $H_0: \theta \leq \theta_0$ versus $H_1: \theta > \theta_0$ is uniformly conservative.
	\end{proposition}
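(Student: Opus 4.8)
The plan is to reduce the claim to a single monotonicity statement about the density of the $p$-value. Specifically, I will show that for \emph{every} true parameter value $\theta\le\theta_0$ the $p$-value $P$ admits a density on $[0,1]$ that is monotonically nondecreasing. By the observation already recorded before \eqref{conserve-def-equal} (nondecreasing density $\Rightarrow$ convex CDF), this makes the CDF $F$ of $P$ convex with $F(0)=0$, and then for all $x,\tau\in[0,1]$ convexity gives $F(\tau x)=F\big(x\tau+(1-x)\cdot 0\big)\le xF(\tau)+(1-x)F(0)=xF(\tau)$, which is exactly \eqref{conserve-def-equal}, i.e.\ uniform conservativeness. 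So the whole argument rests on the monotonicity of $F'$.

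To carry this out, I would first put the family in canonical form so that $f_\theta(x)=h(x)\exp(\eta(\theta)T(x)-A(\theta))$ with $\eta$ strictly increasing; this family has monotone likelihood ratio (MLR) in the sufficient statistic $T=T(X)$, meaning $f_{\theta'}/f_\theta$ is nondecreasing in $T$ whenever $\theta'>\theta$. The UMP test of $H_0:\theta\le\theta_0$ versus $H_1:\theta>\theta_0$ then rejects for large $T$. Working in the continuous case, where under $\theta$ the statistic $T$ has density $g_\theta$ and under $\theta_0$ its survival function $\overline{G}_{\theta_0}$ is continuous and strictly decreasing on the support, the associated $p$-value is $P=\overline{G}_{\theta_0}(T)$, which is $\unif$ when $\theta=\theta_0$.

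Next I would compute the law of $P$ under a general true parameter $\theta\le\theta_0$. Writing $q_p=\overline{G}_{\theta_0}^{-1}(p)$ for the upper-$p$ quantile of $T$ under $\theta_0$, the event $\{P\le p\}$ equals $\{T\ge q_p\}$, so $F(p)=\overline{G}_\theta(q_p)$; differentiating and using $q_p'=-1/g_{\theta_0}(q_p)$ yields the $p$-value density
\[
F'(p)=\frac{g_\theta(q_p)}{g_{\theta_0}(q_p)}.
\]
Now as $p$ increases the quantile $q_p$ \emph{decreases}, while by the MLR property together with $\theta\le\theta_0$ the likelihood ratio $t\mapsto g_\theta(t)/g_{\theta_0}(t)$ is \emph{nonincreasing} in $t$; composing a nonincreasing function with a decreasing function gives a nondecreasing function of $p$, so $F'$ is nondecreasing on $[0,1]$, as required. (The case $\theta=\theta_0$ is immediate since then $F'\equiv 1$.)

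The main obstacle is one of regularity rather than of ideas. The clean identity $F'(p)=g_\theta(q_p)/g_{\theta_0}(q_p)$ needs the distribution of $T$ under $\theta_0$ to be continuous, so that $P$ is genuinely $\unif$ under the null and $\overline{G}_{\theta_0}$ is invertible; for discrete or lattice exponential families one must instead adopt the usual convention $P=\PP{T\ge t \,\middle|\, \theta_0}$ (or its randomized variant) and re-derive star-shapedness, i.e.\ that $p\mapsto F(p)/p$ is nondecreasing, directly from the MLR property rather than through a density. I would present the continuous case in full as above and then indicate the discrete modification; alternatively one can simply defer to \citet{zhao2018multiple} for the general statement, the continuous argument already exhibiting the mechanism.
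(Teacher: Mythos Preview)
Your argument is correct and matches the mechanism the paper itself points to: immediately before stating the proposition the paper notes that ``monotonically nondecreasing density implies convex CDF for null $p$-values,'' and your proof simply supplies the missing step---namely, that the MLR property of the exponential family forces the $p$-value density $F'(p)=g_\theta(q_p)/g_{\theta_0}(q_p)$ to be nondecreasing---before invoking that implication. The paper does not give its own proof at all; \propref{expo} is stated as a cited result from \citet{zhao2018multiple} (and, per the subsequent remark, traceable to \cite{whitt1980uniform}), so you have in fact written out what the paper only gestures at.
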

	Particularly, the corresponding nulls in \propref{expo} are exactly uniform at the boundary of the null set ($\theta=\theta_0$), while conservative at the interior. 
	Since the true underlying state of nature is rarely \emph{exactly} at the boundary, it is common in practice to encounter uniformly conservative nulls that are indeed conservative. One simple example following \propref{expo} is the one-sided test of Gaussian mean, where we test the null hypothesis $H_0 : \mu  \leq 0$ against the alternative $H_1: \mu > 0$ with observation $Z \sim N(\mu,1)$.  From \propref{expo}, the $p$-value resulting from UMP test, that is $P = \Phi(-Z)$ is uniformly conservative, and is conservative if $\mu < 0$, while the conservativeness increases as $\mu$ decreases, where $\Phi$ is the standard Gaussian CDF.
	
	\begin{remark}
	As pointed out by \cite{ellis2017gaining}, the \emph{uniformly conservative} condition is also known as the uniform conditional stochastic order (UCSO) defined by \cite{whitt1980uniform} relative to uniform distribution $U(0,1)$. \cite{zhao2018multiple} also mentioned that their argument with regard one-dimensional exponential family that we stated above can also be viewed as a special case of Theorem 1.1 proposed by  \cite{whitt1980uniform}. We refer the readers to \cite{whitt1980uniform} for more details.
	\end{remark} 

	There have been some works addressing uniformly conservative $p$-values in the offline setting \citep{zhao2018multiple, ellis2017gaining}, which both boil down to one simple idea: \emph{discard} (do not test) large $p$-values, and test the others after some rescaling.  In particular, \cite{zhao2018multiple} used it in the global null test setting while \cite{ellis2017gaining} used it with regard  FWER/FDR control. It has recently been utilized for more powerful online FDR control by \cite{tian2019addis}. Here we extend this \emph{discarding} idea to the online setting for FWER control, stated specifically as the following Discard-Spending algorithm.
	
	\paragraph{\textbf{Discard-Spending}} Recalling the definitions in and right after \eqnref{indicators}, we call any online FWER algorithm as a Discard-Spending algorithm if it updates the $\alpha_i$ in a way such that $\seq{\alpha_i}$ satisfy the following conditions: (1) $\alpha_i$ is predictable, where the filtration $\F^{i-1} = \sigma(R_{1:i-1}, S_{1:i-1})$; (2) $\tau_i$ is predictable, and $\alpha_i < \tau_i$ for all $i \in \N\ $ and \begin{equation}\label{eq:discarding}
	\sum_{i \in \S} \frac{\alpha_i}{\tau_i} \leq \alpha.\\
	\end{equation}
	
	It is not hard to show that Discard-Spending is equivalent to the following strategy: if $P_i > \tau_i$, then we do not test it (we \emph{discard} it), and if $P_i \leq \tau_i$, we test $P_i $ at some level $\alpha_i \in (0,1)$ satisfying \eqref{eq:discarding}. As a concrete example of Discard-Spending, choose a nonnegative sequence $\seq{\gamma_i}$ that sums to one, and test every $H_i$ at the adapted level 
	\begin{equation}\label{concretedis}
		\alpha_i := \alpha \tau_i \gamma_{t(i)}, \quad \text{where } t(i) = 1+\sum_{j < i} S_j.
	\end{equation}
	With simple algebra, one can see that this procedure satisfies $\sum_{i\in \S }\frac{\alpha_i}{\tau_i} \leq \alpha$,  $\alpha_i < \tau_i$, and that $\alpha_i$ is $\F^{i-1}$-measurable for all $i\in \N$. In the following, we claim the FWER control of general Discard-Spending algorithms. 
	
	\begin{proposition}\label{prop:discard}
		Discard-Spending controls PFER and thus FWER in a strong sense when the null $p$-values are uniformly conservative as defined in \eqnref{conserve-def}, while being independent of each other and of the non-nulls. 
	\end{proposition}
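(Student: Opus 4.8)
The plan is to bound the stronger quantity $\PFER = \EE{V}$ and then invoke $\FWER \leq \PFER$ (Markov's inequality, as noted after \eqnref{pfspending}). Since $V = \sum_{i\in\nulls}\one{P_i \leq \alpha_i}$, Tonelli gives the identity $\PFER = \sum_{i\in\nulls}\EE{\one{P_i \leq \alpha_i}}$, so it suffices to control the contribution of each true null index. The argument will mirror the chain in \eqnref{pfspending} but replace the crude bound $\EE{\one{P_i\le\alpha_i}}\le\alpha_i$ by a sharper \emph{conditional} bound that exploits both discarding and conservativeness.

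The heart of the proof is a conditioning step. I would work with the filtration $\F^{i-1} = \sigma(R_{1:i-1},S_{1:i-1})$, against which $\alpha_i$ and $\tau_i$ are, by definition of a Discard-Spending algorithm, predictable. A short induction shows $\F^{i-1}\subseteq\sigma(P_1,\dots,P_{i-1})$: at each stage $j\le i$ the thresholds $\alpha_j,\tau_j$ are $\F^{j-1}$-measurable, and $R_j,S_j$ are deterministic functions of $P_j$ and those thresholds. Hence, for a true null $i\in\nulls$, the hypothesis that the null $p$-values are mutually independent and independent of the non-nulls forces $P_i \independent \F^{i-1}$. Conditioning on $\F^{i-1}$, the quantities $\alpha_i,\tau_i$ are then constants while $P_i$ retains its marginal law $F_i$, so
\[
\EEst{\one{P_i\le\alpha_i}}{\F^{i-1}} = F_i(\alpha_i), \qquad \EEst{S_i}{\F^{i-1}} = \PPst{P_i\le\tau_i}{\F^{i-1}} = F_i(\tau_i).
\]
Since $0<\alpha_i<\tau_i$, I can apply \defref{conserve} in its CDF form \eqnref{conserve-def-equal} with $\tau=\tau_i$ and $x=\alpha_i/\tau_i\in(0,1)$, which yields $F_i(\alpha_i)\le(\alpha_i/\tau_i)F_i(\tau_i)$, i.e.
\[
\EEst{\one{P_i\le\alpha_i}}{\F^{i-1}} \;\le\; \EEst{\tfrac{\alpha_i}{\tau_i}S_i}{\F^{i-1}},
\]
using $\F^{i-1}$-measurability of $\alpha_i/\tau_i$ once more.

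Taking expectations, summing over $i\in\nulls$, and then enlarging the index set from $\nulls$ to all of $\N$ (each added term is nonnegative) gives
\[
\PFER \;\le\; \EE{\sum_{i\in\N}\frac{\alpha_i}{\tau_i}S_i} \;=\; \EE{\sum_{i\in\S}\frac{\alpha_i}{\tau_i}} \;\le\; \alpha,
\]
where the last inequality is exactly the defining constraint \eqref{eq:discarding} of a Discard-Spending algorithm, which holds almost surely. The concrete scheme \eqnref{concretedis} then inherits the conclusion, since it was already checked to obey all the defining conditions.

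I expect the only genuine obstacle to be the measurability/independence bookkeeping in the middle step, namely establishing $P_i\independent\F^{i-1}$ for a null index $i$; this is precisely where the independence assumption on the nulls enters, and it has no analogue in the Alpha-Spending proof, where no conditioning is needed. Some routine care is also warranted for degenerate cases (e.g. $F_i(\tau_i)=0$, which makes the bound trivial) and to confirm that invoking \eqnref{conserve-def-equal} conditionally on $\F^{i-1}$ is legitimate — it is, because conditioning on that field leaves the law of $P_i$ unchanged.
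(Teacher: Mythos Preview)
Your proof is correct and follows the same overall strategy as the paper: bound each null's contribution by $\EE{(\alpha_i/\tau_i)S_i}$ via uniform conservativeness, then sum and invoke the constraint \eqref{eq:discarding}. The only real difference is packaging. The paper writes $V=\sum_{i\in\nulls}R_iS_i$ and conditions on the pair $(S_i,\F^{i-1})$, applying the conditional-probability form \eqnref{conserve-def} of uniform conservativeness on the event $\{P_i\le\tau_i\}$ and then unwinding the iterated expectation; you condition on $\F^{i-1}$ alone and apply the equivalent CDF form \eqnref{conserve-def-equal} directly, which avoids that extra layer and the subsequent manipulation. Your version is slightly more streamlined for exactly the reason you identify: once $P_i\independent\F^{i-1}$ is established, the CDF inequality $F_i(\alpha_i)\le(\alpha_i/\tau_i)F_i(\tau_i)$ gives the desired bound in one line. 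Both routes rely on the same measurability/independence bookkeeping you flag, and both land on $\PFER\le\EE{\sum_{i\in\S}\alpha_i/\tau_i}\le\alpha$.
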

	\begin{proof}
	Recall that V is the number of false discoveries, which can be written as 
	$V = \sum_{i \in \nulls} R_i S_i$, using the fact that $\alpha_i < \alpha < \tau_i$ by construction. Therefore, 
	\begin{equation}
	    \EE{V} = \sum_{i \in \nulls} \EE{R_i S_i} = \sum_{i\in\nulls} \EE{\EEst{R_i S_i }{S_i,\F^{i-1}}} 
			= \sum_{i\in\nulls} \EE{\EEst{R_i}{S_i=1,\F^{i-1}}\PP{S_i=1}}
	\end{equation}
	where we used linearity of expectation twice. Since $\alpha_i, \lambda_i, \tau_i \in \F^{i-1}$, therefore each term inside the last sum equals
		\begin{align}
			\EE{\PPst{\frac{P_i}{\tau_i} \leq \frac{\alpha_i}{\tau_i}}{P_i \leq \tau_i,\F^{i-1}} \PP{S_i=1}} &\leq \EE{\frac{\alpha_i}{\tau_i} \PP{S_i=1}}
		\end{align}
		using the property of uniformly conservative nulls in \eqnref{conserve-def}. Then using the fact that $\alpha_i < \tau_i$ are measurable with regard $\F^{i -1}$, we have that
		\begin{align}
			\EE{\frac{\alpha_i}{\tau_i} \PP{S_i=1}} &= \EE{\frac{\alpha_i}{\tau_i} \EEst{\one{P_i \leq \tau_i}}{S_i = 1, \F^{i-1}} \PP{S_i=1}} \nonumber\\
			&= \EE{ \EEst{\frac{\alpha_i}{\tau_i}\one{P_i \leq \tau_i}}{S_i = 1, \F^{i-1}} \PP{S_i=1}} \nonumber\\
			& = \EE{\EEst{ \frac{\alpha_i}{\tau_i} \one{ P_i \leq \tau_i}}{S_i, \F^{i-1}}} = \EE{\frac{\alpha_i}{\tau_i} \one{ P_i \leq \tau_i}},
		\end{align}
		 where we again use the law of iterated expectation in the last step.
		Therefore
		\begin{equation}\label{pfdiscard}
		    \EE{V} \leq \sum_{i\in \nulls}\EE{\frac{\alpha_i}{\tau_i} \one{ P_i \leq \tau_i}} =\EE{\sum_{i\in \nulls \cap \S} \frac{\alpha_i}{\tau_i}}\stackrel{(i)}{\leq}  \EE{\sum_{i\in \S} \frac{\alpha_i}{\tau_i}} \leq\alpha
		\end{equation}
		by construction. Therefore, $\pfer \leq \alpha$ and thus $\fwer \leq \alpha$ as claimed.
	\end{proof}

	Discarding can lead to higher power when there are many conservative nulls. However, inequality (i) in \eqref{pfdiscard} will be really loose if we have $|\nulls \cap \S| \ll |\S|$, meaning that most of the contents of $\S$ are non-nulls, which is possible since the indices in $\S$ are those with small $p$-values. \figref{discarding} demonstrates both the strength and weakness of Discard-Spending, where we use $\alpha_i$ as described in the concrete example \eqref{concretedis} mentioned above, and set $\tau_i=0.5$ and $\gamma_i = \frac{6}{\pi^2 i^2}$ for all $i$.  
	
	\begin{figure}
		\centering
		\includegraphics[scale=0.33]{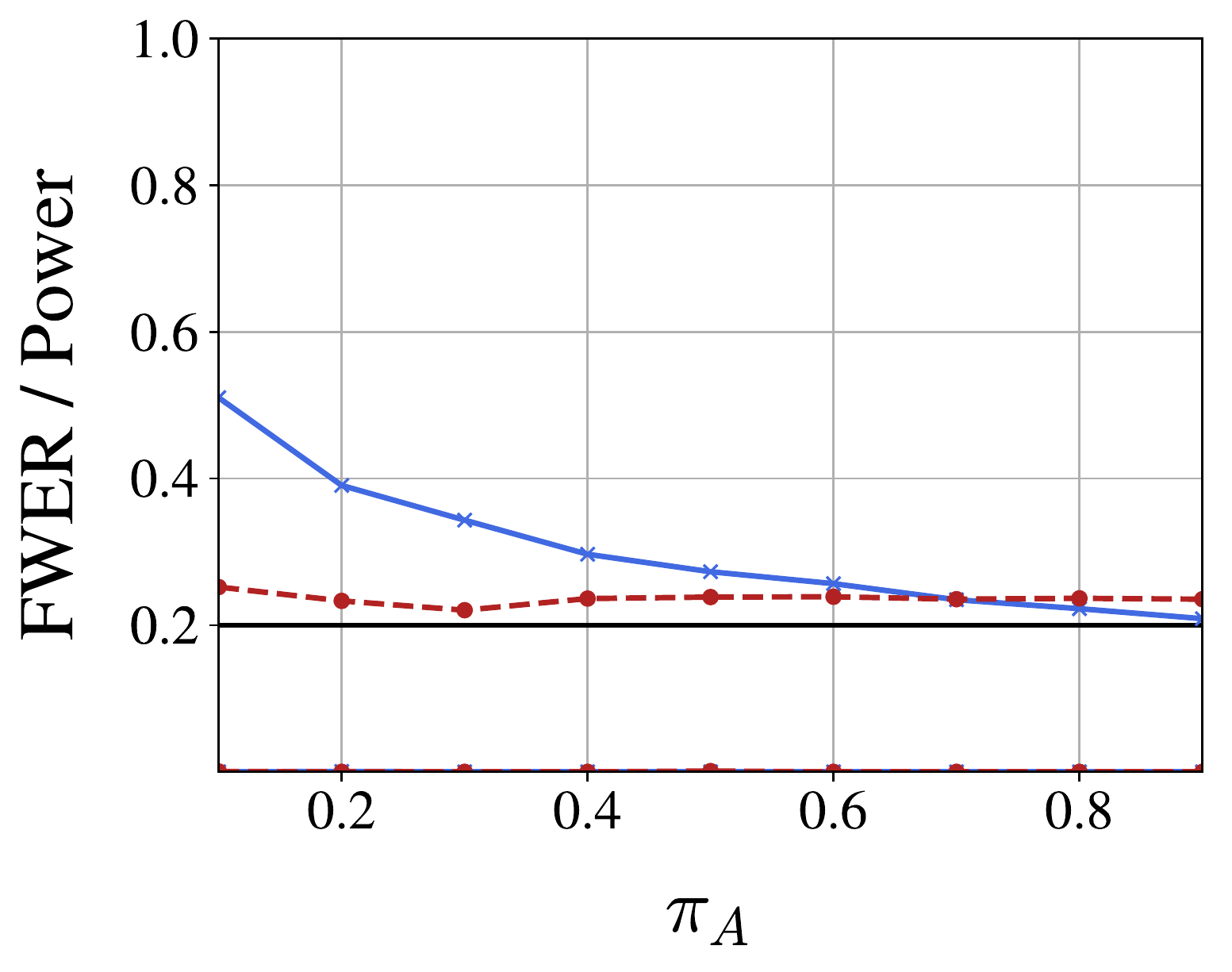}
		\includegraphics[scale=0.33]{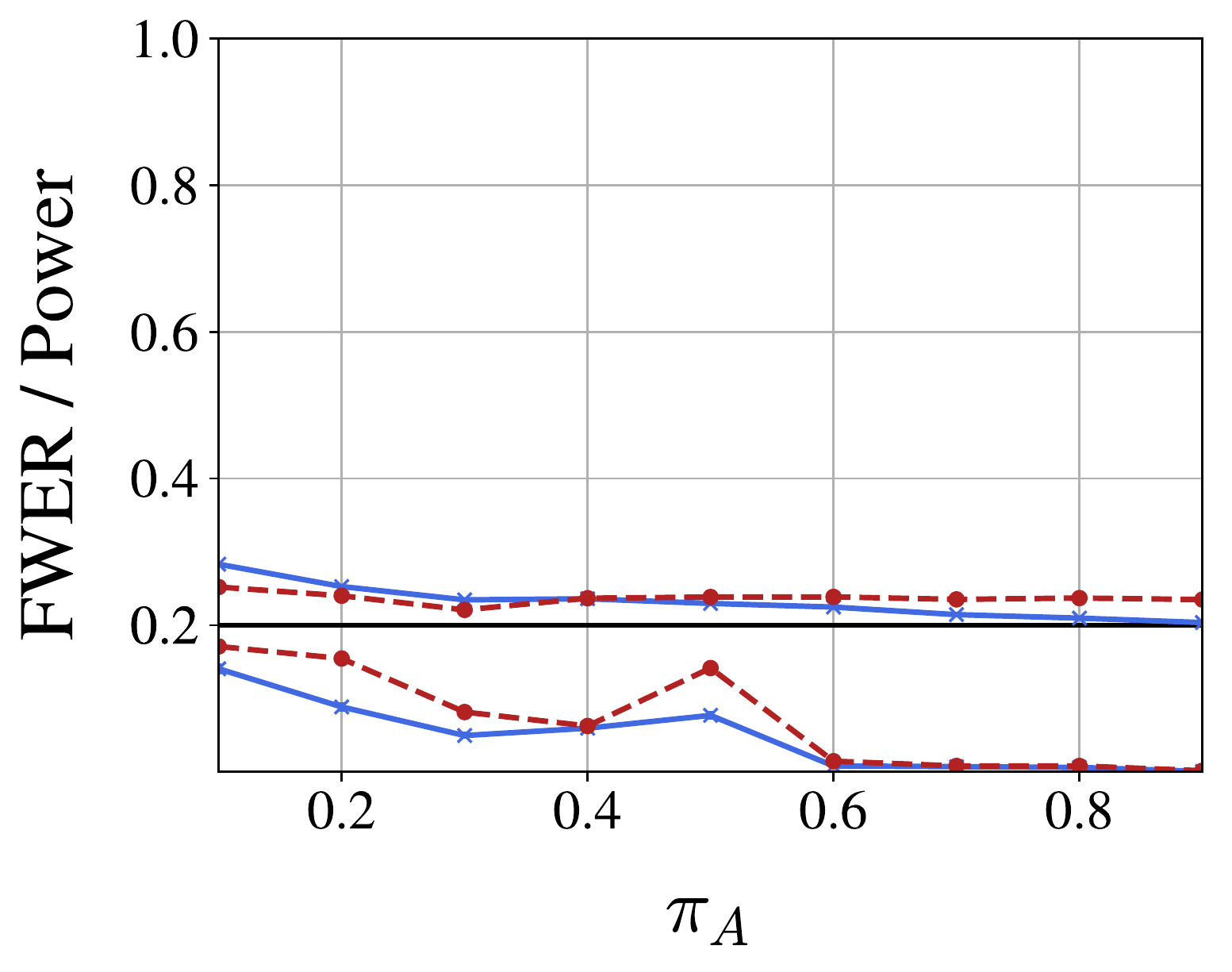}\\
		\includegraphics[scale=1.1]{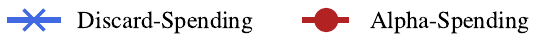}
		\caption{Statistical power and FWER versus fraction of non-null hypotheses $\pi_A$ for Discard-Spending and  Alpha-Spending at target FWER level $\alpha= 0.2$ (solid black line). The curves above 0.2 line display the power of each methods versus $\pi_A$, while the lines below 0.2 display the FWER of each methods versus $\pi_A$. The experimental setting is described in \secref{simu}: we set $\mu_A = 4$ for both figures, but $\mu_N = -2$ for the left figure and $\mu_N =0$ for the right figure (hence the left nulls are conservative, the right nulls are not). These figures show that: (a) Discard-Spending do control FWER  at level 0.2; (b) Discard-Spending is more powerful than naive Alpha-Spending when the nulls are conservative (as shown in the left figure); (c) Discard-Spending loses power when encounter high proportion of non-nulls (as shown in the right figure).}\label{fig:discarding}
	\end{figure}
	
	

	\subsection{Adaptivity to unknown proportion of nulls.}\label{sec:adaptive}
	Here, we develop a method to address the second looseness $(\mathbbm{B})$ in \eqnref{pfspending}, which is due to lack of adjustment for the proportion of true nulls. 
	Related ideas have been proposed during the development of offline FWER methods, which can be improved by incorporating an estimate of the true null proportion. This led to a series of adaptive methods like the adaptive Bonferroni  \citep{schweder1982plots, hochberg1990more}, and other generalizations like adaptive Sidak, adaptive Holm and adaptive Hochberg, which are rigorously proved to have FWER control by  \citet{finner2009controlling, guo2009note} and  \citet{sarkar2012adaptive}. Inspired by those efforts, we introduce the Adaptive-Spending procedure next, which could be regarded as an online variant of adaptive Bonferroni. 
	
	\paragraph{\textbf{Adaptive-Spending}} Recalling the definitions in and right after \eqnref{indicators}, we call any online FWER algorithm as Adaptive-Spending if it updates $\alpha_i$ in a way such that $\seq{\alpha_i}$ satisfy the the following conditions: 
	(i) $\alpha_i$ is predictable, where the filtration $\F^{i-1} = \sigma(R_{1:i-1}, C_{1:i-1})$; (ii) for a predictable sequence $\seq{\lambda_i}$, we have 
	\begin{equation}\label{eq:adaptivity}
	    \sum_{i \notin \C} \frac{\alpha_i}{1-\lambda_i} \leq \alpha.
	\end{equation}
	
	In other words, the testing process of Adaptive-Spending is that, whenever $P_i \leq \lambda_i$, we don't lose any error budget for testing it at level $\alpha_i$; but whenever $P_i > \lambda_i$, we lose $\alpha_i/(1-\lambda_i)$ from our error budget. As a concrete example of Adaptive-Spending, suppose $\seq{\gamma_i}$ is a nonnegative sequence that sums to one, then we test each $H_i$ at the predictable level 
	\begin{equation}\label{concretead}
		\alpha_i \defn \alpha(1-\lambda_i) \gamma_{t(i)}, \quad \text{where } t(i) = i - \sum_{j < i} C_j.
	\end{equation}
	\noindent
	Again with simple algebra, one may easily verify that procedure \eqref{concretead} satisfies all the conditions in the definition of Adaptive-Spending algorithm. 
	Next, we prove the FWER control of Adaptive-Spending. 
	\begin{proposition}\label{prop:thmadaptive}
		Adaptive-Spending controls PFER and thus FWER, when the null $p$-values are independent of each other and of the non-nulls.
	\end{proposition}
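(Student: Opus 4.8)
The plan is to follow the template of the proof of \propref{discard}, replacing the discarding device by the adaptivity device. Since Adaptive-Spending never discards a hypothesis, the number of false discoveries is simply $V = \sum_{i\in\nulls} R_i$. By linearity of expectation and the tower rule with respect to the filtration $\F^{i-1} = \sigma(R_{1:i-1},C_{1:i-1})$, I would write $\EE{V} = \sum_{i\in\nulls}\EE{\EEst{R_i}{\F^{i-1}}}$ and then bound each conditional expectation.

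The key observation is that for a null index $i$, the $p$-value $P_i$ is independent of $\F^{i-1}$: by induction on the recursive construction of $\seq{\alpha_i}$ and $\seq{\lambda_i}$, the $\sigma$-field $\F^{i-1}$ is contained in $\sigma(P_1,\dots,P_{i-1})$, and a null $P_i$ is independent of all the other $p$-values by assumption. Combined with the standing assumption that null $p$-values are stochastically larger than uniform ($\PP{P_i \le x}\le x$) and the predictability (i.e.\ $\F^{i-1}$-measurability) of $\alpha_i$ and $\lambda_i$, this gives on the one hand $\EEst{R_i}{\F^{i-1}} = \PPst{P_i\le\alpha_i}{\F^{i-1}}\le\alpha_i$, and on the other hand $\PPst{P_i\le\lambda_i}{\F^{i-1}}\le\lambda_i$, hence $\PPst{P_i>\lambda_i}{\F^{i-1}}\ge 1-\lambda_i$. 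The second fact is exactly what is needed to ``reverse'' direction:
\[
\EEst{\frac{\alpha_i}{1-\lambda_i}\one{P_i>\lambda_i}}{\F^{i-1}} \;=\; \frac{\alpha_i}{1-\lambda_i}\,\PPst{P_i>\lambda_i}{\F^{i-1}} \;\ge\; \alpha_i \;\ge\; \EEst{R_i}{\F^{i-1}},
\]
i.e.\ $\frac{\alpha_i}{1-\lambda_i}\one{i\notin\C}$ is a conditionally (over-)conservative surrogate for $R_i$ whenever $i$ is a null.

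Summing over $i\in\nulls$, pulling the expectation out, and using $\{i\in\nulls : i\notin\C\}\subseteq\{i\notin\C\}$ together with the budget constraint \eqref{eq:adaptivity}, I would obtain
\[
\EE{V} \;\le\; \sum_{i\in\nulls}\EE{\frac{\alpha_i}{1-\lambda_i}\one{i\notin\C}} \;=\; \EE{\sum_{i\in\nulls,\, i\notin\C}\frac{\alpha_i}{1-\lambda_i}} \;\le\; \EE{\sum_{i\notin\C}\frac{\alpha_i}{1-\lambda_i}} \;\le\; \alpha,
\]
so $\PFER = \EE{V}\le\alpha$, and then $\FWER\le\PFER\le\alpha$ by Markov's inequality.

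The step that demands the most care — more bookkeeping than genuine difficulty — is the measurability/independence argument: verifying that $\F^{i-1}$ is a function of $P_1,\dots,P_{i-1}$ so that the data-dependent level $\alpha_i$ is independent of a null $P_i$, and that the two super-uniformity bounds survive being evaluated at these random but $\F^{i-1}$-measurable thresholds. I would also highlight one structural contrast: unlike \propref{discard}, this proof needs only ordinary super-uniformity of the nulls, not the stronger uniform conservativeness of \defref{conserve}, because adaptivity exploits the upper-tail event $\{P_i>\lambda_i\}$ (whose probability is automatically at least $1-\lambda_i$ for a super-uniform null) rather than a rescaled lower tail.
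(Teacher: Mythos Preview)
Your proposal is correct and follows essentially the same argument as the paper: condition on $\F^{i-1}$, use super-uniformity of null $P_i$ to bound $\EEst{R_i}{\F^{i-1}}\le\alpha_i$, then use super-uniformity again to insert the factor $\EEst{\one{P_i>\lambda_i}/(1-\lambda_i)}{\F^{i-1}}\ge 1$, and finally extend the sum from $\nulls$ to all of $\N$ and invoke the budget constraint \eqref{eq:adaptivity}. Your added remarks on the measurability/independence bookkeeping and the contrast with \propref{discard} (only super-uniformity is needed, not uniform conservativeness) are accurate elaborations of points the paper leaves implicit.
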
	
	\begin{proof}
		We prove the PFER (and hence FWER) control of Adaptive-Spending using the law of iterated expectation and the fact that the null $p$-values are valid, i.e. $\PP{P_i\leq x} \leq x$ for any $x \in [0,1]$ and $i \in \nulls$. Recalling that $V$ is the number of false discoveries, specifically, we have
		\begin{align}\label{pfadaptive}
			\EE{V} &= \EE{ \sum_{i \in \nulls} \one{P_i \leq \alpha_i} } 
			= \sum_{i \in \nulls} \EE{\EEst{\one{P_i \leq \alpha_i}}{\F^{i-1}}}\stackrel{(i)}{\leq} \sum_{i \in \nulls} \EE{\alpha_i}\nonumber\\&
			\stackrel{(ii)}{\leq} \sum_{i \in \nulls} \EE{\alpha_i \EEst{\frac{\one{P_i > \lambda_i}}{1-\lambda_i}}{\F^{i-1}}}
			= \sum_{i \in \nulls} \EE{\alpha_i \frac{\one{P_i > \lambda_i}}{1-\lambda_i}}
			\leq \EE{\sum_{i \in \N} \alpha_i \frac{\one{P_i > \lambda_i}}{1-\lambda_i}} = \EE{\sum_{i \notin \C} \frac{\alpha_i}{1-\lambda_i}}
			\leq \alpha.
		\end{align}
		Therefore, $\pfer \leq \alpha$ and thus $\fwer \leq \alpha$ as claimed.
	\end{proof}	
	
	Adaptive procedures can improve power substantially if there is a non-negligible proportion of signals. However, their power can also suffer considerably if the null $p$-values are very conservative, since inequalities (i) and (ii) in \eqref{pfadaptive} would be extremely loose when $\PP{P_i \leq x} \ll x$ and $\PP{P_i > x} \gg 1-x$. These strengths and weaknesses of Alpha-Spending are demonstrated in \figref{adaptive}, where we use $\alpha_i$ as described in the concrete example \eqref{concretead} mentioned above, and set $\lambda_i= 0.5$ and $\gamma_i = \frac{6}{\pi^2 i^2}$ for all $i \in \N$. 
	
	\begin{figure}
		\centering
		\includegraphics[scale=0.33]{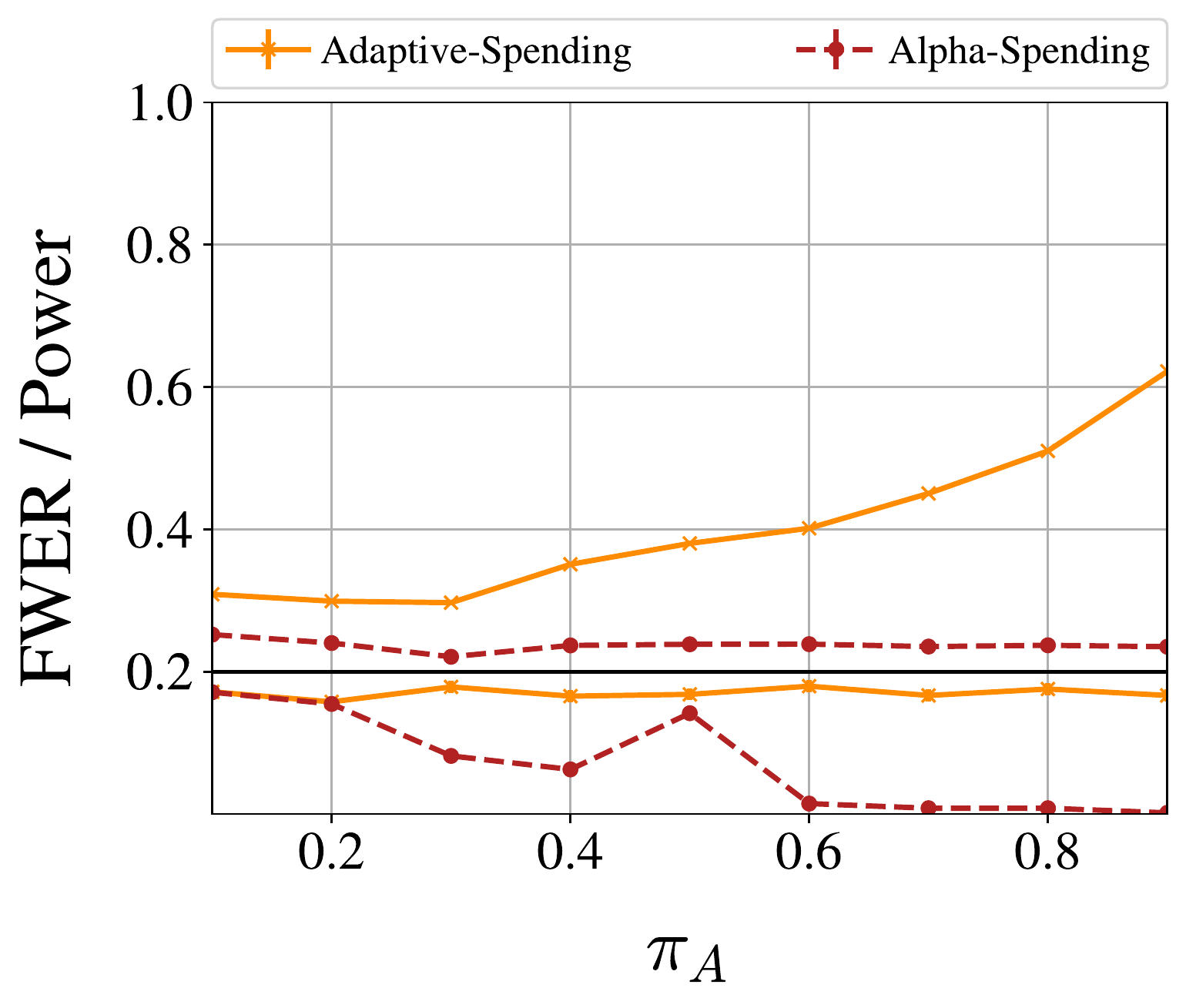}
		\includegraphics[scale=0.33]{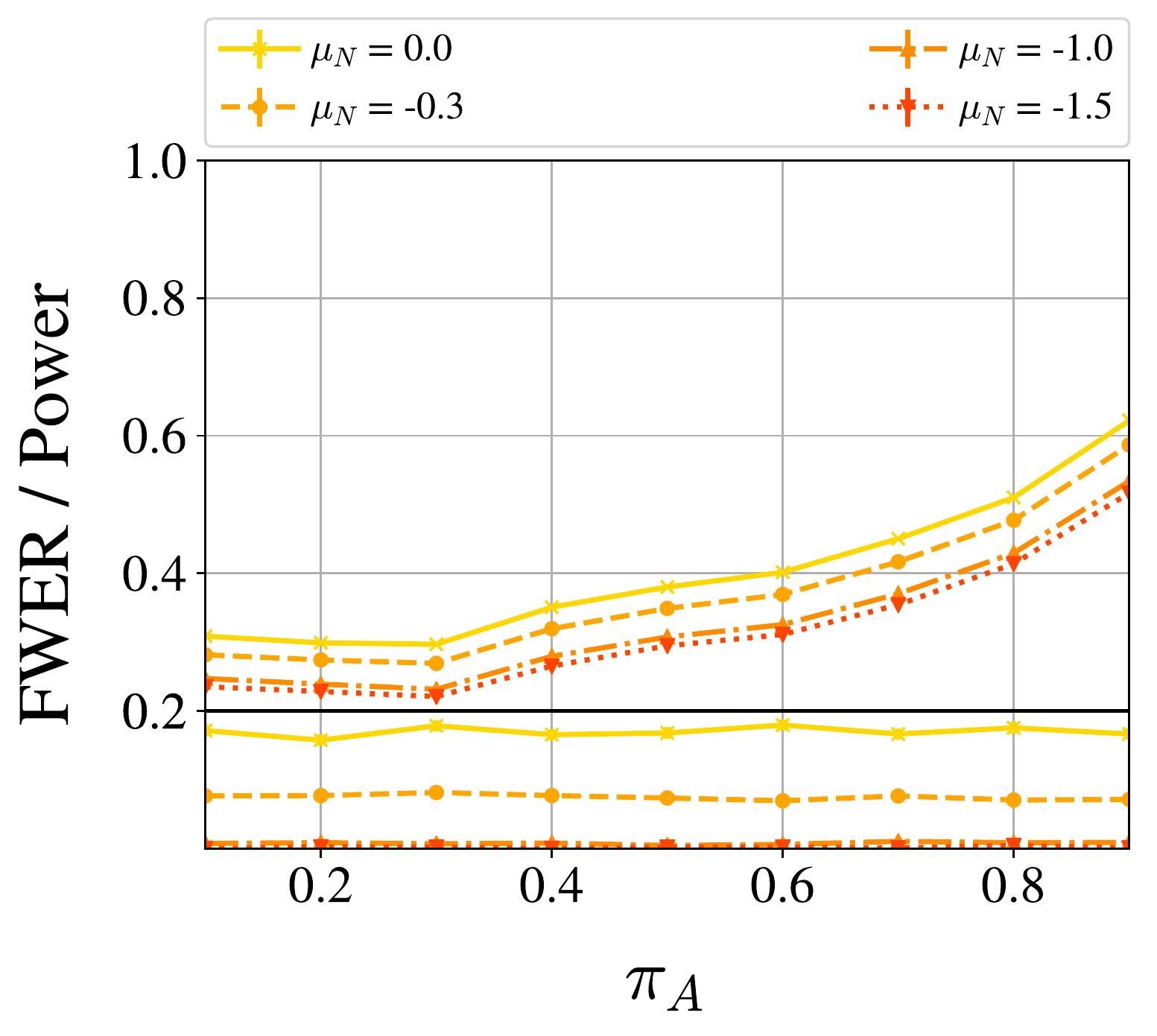}
		\caption{Statistical power and FWER versus fraction of non-null hypotheses $\pi_A$ for Adaptive-Spending, Alpha-Spending at target FWER level $\alpha= 0.2$ (solid black line). The curves above 0.2 line display the power of each methods versus $\pi_A$, while the lines below 0.2 display the FWER of each methods versus $\pi_A$. The $p$-values are drawn in the same way as described in \secref{simu}: we set $\mu_A = 4$ for both figures, but $\mu_N = 0$ for the left figure and $\mu_N \in \{0, -0.3, -1, -1,5\}$ for the right figure (hence the left nulls are conservative, the right nulls are not). These figures show that: (a) Adaptive-Spending do control FWER  at level 0.2; (b) Adaptive-Spending is more powerful than naive Alpha-Spending when there are many non-nulls (as shown in the left figure); (c) Adaptive-Spending loses power  when encounter many conservative nulls (as shown in the right figure). }\label{fig:adaptive}
	\end{figure}
	
	

	\subsection{Combining the two ideas: ADDIS-Spending, an adaptive and discarding algorithm}\label{sec:addis}
	From the discussion in \secref{discarding} and \secref{adaptive}, we find adaptivity and discarding both have their strength and weakness, however these are complementary. Therefore, we next combine those two ideas for a more sophisticated online FWER control method. Specifically, we present the following ADDIS-Spending algorithm, where ``ADDIS" stands for ``\underline{AD}aptive \underline{DIS}carding". In the following \secref{power} and \secref{numerical}, we demonstrate the power superiority of ADDIS-Spending over Alpha-Spending, with both theoretical justifications in \secref{more}, and numerical analysis using synthetic and real data in \secref{numerical}.
	
	\paragraph{\textbf{ADDIS-Spending}} Recall the definitions in and right after \eqnref{indicators}, we call any online FWER control method as ADDIS-Spending if it updates individual testing level $\alpha_i$ in a way satisfying the following conditions: (1) $\alpha_i$ is predictable, where the filtration $\F^{i-1} = \sigma(R_{1:i-1}, C_{1:i-1}, S_{1:i-1})$; (2) predictable sequences $\seq{\tau_i}$ and $\seq{\lambda_i} $ are such that $\lambda_i < \tau_i$ and $\alpha_i < \tau_i$ for all $i$ and 
	\begin{equation}\label{eq:addis-condition}
	\sum_{i \in \S \setminus \C} \frac{\alpha_i}{\tau_i-\lambda_i} \leq \alpha.
	\end{equation}


	Again we give a concrete example for ADDIS-Spending following similar logic in the previous two sections: choosing $\seq{\gamma_i }$ as a nonnegative sequence that sums to one, we test each $H_i$ at predictable level
	\begin{equation}\label{concreteaddis}
		\alpha_i := \alpha(\tau_i-\lambda_i) \gamma_{t(i)}, \quad \text{where } t(i) = 1+ \sum_{j < i}S_j -C_j.
	\end{equation}
	
	ADDIS-Spending can be regarded as a unification of the Adaptive-Spending and Discard-Spending algorithms we mentioned earlier: when setting $\lambda_i \equiv 0$ for all $i$, ADDIS-Spending recovers Discard-Spending; and when setting $\tau_i \equiv 1$ for all $i$, ADDIS-Spending recovers Adaptive-Spending. At a high level, the advantages of ADDIS-Spending come from this unification of adaptivity and discarding. 
	We obtained similar success in the ADDIS algorithm for online FDR control  \citep{tian2019addis} suggesting that our work may be regarded as a variant of ADDIS for a more stringent error metric.  
		Finally, we prove the FWER control of ADDIS-Spending. 
	\begin{theorem}\label{thm:addis}
		ADDIS-Spending controls the PFER (hence FWER) when null $p$-values are uniformly conservative as defined in \eqnref{conserve-def}, while being independent of each other and of the non-nulls.
	\end{theorem}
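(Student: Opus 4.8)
The plan is to bound the PFER directly, establishing $\PFER = \EE{V} \le \alpha$, from which $\FWER = \PP{V \ge 1} \le \EE{V} \le \alpha$ follows by Markov's inequality exactly as in \eqref{pfspending}. The argument merges the two devices already in play: the discarding rescaling from the proof of \propref{discard} and the candidate/iterated-expectation bookkeeping from the proof of \propref{thmadaptive}.

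The key new ingredient is an elementary inequality for uniformly conservative $p$-values. I claim that if $P$ is a uniformly conservative null with CDF $F$ and $0 < \lambda < \tau \le 1$, $0 < \alpha \le \tau$, then
\[
  \PP{P \le \alpha} \;\le\; \frac{\alpha}{\tau - \lambda}\,\PP{\lambda < P \le \tau}.
\]
To see this, apply the CDF form \eqref{conserve-def-equal} of uniform conservativeness twice: with ratio $\alpha/\tau$ it gives $F(\alpha) \le (\alpha/\tau) F(\tau)$, and with ratio $\lambda/\tau$ it gives $F(\lambda) \le (\lambda/\tau) F(\tau)$, hence $F(\tau) - F(\lambda) \ge \tfrac{\tau-\lambda}{\tau} F(\tau)$; multiplying this by $\alpha/(\tau-\lambda)$ and comparing with the first bound yields the claim. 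Writing $S = \one{P \le \tau}$ and $C = \one{P \le \lambda}$, the right-hand side equals $\tfrac{\alpha}{\tau-\lambda}\EE{S - C}$, since $\lambda < \tau$ forces $C \le S$.

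With this in hand I would run the usual conditioning argument. Because $\alpha_i,\lambda_i,\tau_i$ are predictable, an induction shows $\F^{i-1} = \sigma(R_{1:i-1},C_{1:i-1},S_{1:i-1}) \subseteq \sigma(P_1,\dots,P_{i-1})$, so for $i \in \nulls$ the null $P_i$ is independent of $\F^{i-1}$. Freezing $\F^{i-1}$ and applying the inequality above (valid since $\alpha_i < \tau_i$ and $\lambda_i < \tau_i$),
\[
  \EEst{R_i}{\F^{i-1}} \;=\; \PPst{P_i \le \alpha_i}{\F^{i-1}} \;\le\; \frac{\alpha_i}{\tau_i - \lambda_i}\,\EEst{S_i - C_i}{\F^{i-1}}.
\]
Since $\alpha_i/(\tau_i-\lambda_i)$ is $\F^{i-1}$-measurable, taking expectations and using the tower rule replaces the right side by $\EE{\frac{\alpha_i}{\tau_i-\lambda_i}(S_i-C_i)}$. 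Summing over $i \in \nulls$, enlarging the sum to all $i \in \N$ (every summand is nonnegative as $S_i \ge C_i$), and invoking Tonelli,
\[
  \EE{V} \;=\; \sum_{i \in \nulls}\EE{\EEst{R_i}{\F^{i-1}}} \;\le\; \EE{\sum_{i\in\N}\frac{\alpha_i}{\tau_i-\lambda_i}(S_i-C_i)} \;=\; \EE{\sum_{i\in\S\setminus\C}\frac{\alpha_i}{\tau_i-\lambda_i}} \;\le\; \alpha,
\]
the last step being the defining condition \eqref{eq:addis-condition}, which holds on every sample path.

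The only real content is the three-line inequality on uniformly conservative $p$-values; the rest is bookkeeping that collapses, on setting $\lambda_i \equiv 0$, to the proof of \propref{discard}, and on setting $\tau_i \equiv 1$, to the proof of \propref{thmadaptive}. The step to be careful about is the freezing — confirming that $\PPst{P_i \le \alpha_i}{\F^{i-1}}$ really is the CDF of $P_i$ evaluated at the random predictable threshold $\alpha_i$, which uses both $P_i \independent \F^{i-1}$ and the predictability of all three sequences — together with the degenerate case $\PP{P_i \le \tau_i}=0$, where $F_i(\alpha_i)=0$ as well so the inequality is trivial and the conditional form \eqref{conserve-def} never divides by zero.
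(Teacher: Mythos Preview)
Your proof is correct and follows essentially the same approach as the paper's (which proves the more general local-dependence \thmref{local} and specializes): both apply uniform conservativeness twice, at ratios $\alpha_i/\tau_i$ and $\lambda_i/\tau_i$, to reach $\EE{V}\le\EE{\sum_{i\in\S\setminus\C}\alpha_i/(\tau_i-\lambda_i)}\le\alpha$. Your packaging---isolating the CDF inequality $F(\alpha)\le\tfrac{\alpha}{\tau-\lambda}(F(\tau)-F(\lambda))$ up front and then conditioning only on $\F^{i-1}$---is a bit cleaner than the paper's explicit conditioning on $S_i$, but the content is identical.
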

     \thmref{addis} is actually a special case of a more general version \thmref{local} that we are going to introduce later in \secref{local}, where we relax the independence assumption to local dependence, which includes independence as a special case. Therefore we omit the proof for the special case and present the proof for the more general version later.

	
	\subsection{Handling local dependence}\label{sec:local}	
	In \secref{addis}, we introduced ADDIS-Spending, a new powerful variants of Alpha-Spending, though at the cost of requiring independence among the $p$-values, while naive Alpha-Spending works even when $p$-values are arbitrarily dependent. Indeed, the assumption of independence is rarely met in real applications: tests that occur nearby in time may share the same dataset; null hypotheses are often constructed given the information of recent testing results; etc. On the other hand, arbitrary dependence between sequential p-values is also arguably unreasonable: the dataset used for testing or the testing results from the distant past is usually considered having no impact on the current testing. In light of this, we consider another dependence structure that is more realistic---local dependence, which is firstly proposed by \citet{zrnic2019asynchronous}, and is defined as the following:
	\begin{equation}\label{localdep}
		\textnormal{ For all } i \in \N, \textnormal{ there exists } L_i \in \mathbb{N},  \textnormal{ such that } P_i \perp P_{i-L_i-1}, P_{i-L_i-2}, \dots, P_1 ,
	\end{equation}
	where $\{L_i\}_{i=1}^{\infty}$ is a fixed sequence of parameters that we refer to as lags. We assume $L_{i+1} \leq L_i +1$, which is a reasonable requirement that the observable information does not decrease with time.  Implicitly, $P_i$ may be arbitrarily dependent on $P_{i-1},\dots,P_{i-L_i}$; and particularly, when $L_i \equiv 0 $ for all i, the local dependence reduces to the independence. We refer readers to the paper of \cite{zrnic2019asynchronous} for more detailed definition and discussions. 
	
	Here, we give simple alterations of the procedures in \secref{ADDIS-Spending} that allows them to deal with local dependence. The way we accomplish this is to follow the ``principle of pessimism'' \citep{zrnic2019asynchronous}. Specifically, this principle suggests ignoring what really happened in the previous $L_t$ steps when deciding what to do at time $t$, and hallucinate a pessimistic outcome for those steps instead. Formally, the alterations we made for procedures in \secref{ADDIS-Spending} insist that
	\begin{equation}
	    	\alpha_i, \tau_i, \lambda_i \in \F^{i-L_i-1}, \text{ for all } i \in \N,
	\end{equation}
	while still satisfying the other requirements in the corresponding original definitions. 
	
	As for concrete examples to implement the altered procedures described above, we present the altered concrete example of ADDIS-Spending in \eqref{concreteaddis}: we choose $\seq{\gamma_i }$ as an infinite nonnegative sequence that sums to one, we test each $H_i$ at predictable level
	\begin{equation}\label{concretelocal}
		\alpha_i := \alpha(\tau_i-\lambda_i) \gamma_{t(i)}, \quad\text{where } t(i)= 1+ L_i \land {(i-1)} + \sum_{j < i-L_i}S_j -C_j.
	\end{equation}
	Note that for this example, when $L_i = 0$ for all $i$, that is the local dependence structure reduces to independence, the modified procedures reduce to ADDIS-Spending in \eqref{concreteaddis} under independence.
	
	In the following, we specifically present the PFER (and hence FWER) control of altered ADDIS-Spending for local dependence in \thmref{local}, which is proved in \appref{pflocal}. 
	\begin{theorem}\label{thm:local}
		Altered ADDIS-Spending controls PFER (and hence FWER) in a strong sense when the null $p$-values are uniformly conservative as defined in \eqnref{conserve-def} and follow the local dependence defined in \eqnref{localdep}.
	\end{theorem}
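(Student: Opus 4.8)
The plan is to bound the per-family error rate $\PFER = \EE{V}$ directly, with $V = \sum_{i\in\nulls} R_i$ the number of false rejections (note $R_i \le S_i$ since $\alpha_i < \tau_i$), and then invoke $\fwer \le \PFER$ by Markov's inequality. The argument mirrors the proofs of \propref{discard} and \propref{thmadaptive}, but conditions on the \emph{delayed} filtration that the principle of pessimism imposes. The structural fact to exploit is that for altered ADDIS-Spending the triple $(\alpha_i,\tau_i,\lambda_i)$ is measurable with respect to $\F^{i-L_i-1} = \sigma(R_{1:i-L_i-1},C_{1:i-L_i-1},S_{1:i-L_i-1})$, which (since $L_i$ is a fixed lag, so $i-L_i-1$ is a deterministic index) is a function of $P_1,\dots,P_{i-L_i-1}$ only. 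By local dependence \eqref{localdep}, $P_i$ is independent of $(P_1,\dots,P_{i-L_i-1})$, hence of $\F^{i-L_i-1}$. Thus, for a null $i\in\nulls$, conditionally on $\F^{i-L_i-1}$ the thresholds freeze to constants while $P_i$ keeps its marginal null CDF $F_i$, giving $\EEst{R_i}{\F^{i-L_i-1}} = F_i(\alpha_i)$ and, using $\lambda_i<\tau_i$ so that $C_i\le S_i$ and $S_i - C_i = \one{\lambda_i < P_i \le \tau_i}$, also $\EEst{S_i-C_i}{\F^{i-L_i-1}} = F_i(\tau_i) - F_i(\lambda_i)$.

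The crux --- where discarding and adaptivity merge into a single inequality --- is that whenever $0\le\lambda_i<\tau_i\le 1$ and $0<\alpha_i<\tau_i$,
\[
F_i(\alpha_i) \;\le\; \frac{\alpha_i}{\tau_i-\lambda_i}\bigl(F_i(\tau_i)-F_i(\lambda_i)\bigr).
\]
This follows from the uniformly conservative property in its CDF form \eqref{conserve-def-equal}: taking $x=\alpha_i/\tau_i$ and $\tau=\tau_i$ gives $F_i(\alpha_i)\le(\alpha_i/\tau_i)F_i(\tau_i)$, taking $x=\lambda_i/\tau_i$ gives $F_i(\lambda_i)\le(\lambda_i/\tau_i)F_i(\tau_i)$ so that $F_i(\tau_i)-F_i(\lambda_i)\ge\tfrac{\tau_i-\lambda_i}{\tau_i}F_i(\tau_i)$, and chaining the two bounds yields the claim (the degenerate case $F_i(\tau_i)=0$ being immediate). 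Setting $\lambda_i\equiv0$ recovers the discarding bound $F_i(\alpha_i)\le(\alpha_i/\tau_i)F_i(\tau_i)$ of \propref{discard}, and $\tau_i\equiv1$ recovers the adaptivity bound $F_i(\alpha_i)\le\frac{\alpha_i}{1-\lambda_i}\bigl(1-F_i(\lambda_i)\bigr)$ of \propref{thmadaptive}; this is precisely the sense in which ADDIS unifies the two.

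Assembling, for each $i\in\nulls$ the key inequality (evaluated at the realized frozen thresholds) together with the tower property gives
\[
\EE{R_i} = \EE{\EEst{R_i}{\F^{i-L_i-1}}} \le \EE{\frac{\alpha_i}{\tau_i-\lambda_i}\,\EEst{S_i-C_i}{\F^{i-L_i-1}}} = \EE{\frac{\alpha_i}{\tau_i-\lambda_i}(S_i-C_i)},
\]
after pulling the $\F^{i-L_i-1}$-measurable factor $\alpha_i/(\tau_i-\lambda_i)$ inside the conditional expectation. Summing over $i\in\nulls$, enlarging the index set to $\N$ (all summands are nonnegative), and using $\sum_{i\in\N}\frac{\alpha_i}{\tau_i-\lambda_i}(S_i-C_i)=\sum_{i\in\S\setminus\C}\frac{\alpha_i}{\tau_i-\lambda_i}$, the ADDIS condition \eqref{eq:addis-condition} (which holds surely) yields $\EE{V}\le\EE{\sum_{i\in\S\setminus\C}\frac{\alpha_i}{\tau_i-\lambda_i}}\le\alpha$, so $\PFER\le\alpha$ and hence $\fwer\le\alpha$.

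The conditioning manipulations are routine; the two places that need genuine care are isolating and proving the unified conservativeness inequality above, and making sure one conditions on $\F^{i-L_i-1}$ rather than $\F^{i-1}$ --- under local dependence $P_i$ is in general \emph{not} independent of $\F^{i-1}$, and it is precisely the requirement $\alpha_i,\tau_i,\lambda_i\in\F^{i-L_i-1}$ built into the altered algorithm that restores the independence needed to evaluate the conditional CDFs. The monotonicity requirement $L_{i+1}\le L_i+1$ is not used in the PFER bound itself (each null index is handled in isolation); it only ensures that running-index constructions such as \eqref{concretelocal} are well defined.
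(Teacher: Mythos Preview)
Your proof is correct and follows the same high-level strategy as the paper: condition on the delayed filtration $\F^{i-L_i-1}$, exploit uniform conservativeness to bound each null term by $\frac{\alpha_i}{\tau_i-\lambda_i}\one{\lambda_i<P_i\le\tau_i}$, then sum and invoke the ADDIS budget constraint \eqref{eq:addis-condition}.

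The one substantive difference is in how the uniformly-conservative step is packaged. The paper conditions additionally on the event $\{S_i=1\}$ and applies the conditional-probability form \eqref{conserve-def} twice in sequence (first to get the $\alpha_i/\tau_i$ bound, then to insert the $\one{P_i>\lambda_i}/(1-\lambda_i/\tau_i)$ factor), unwinding via iterated expectations. You instead stay with $\F^{i-L_i-1}$ alone and isolate the single CDF-level inequality $F_i(\alpha_i)\le\frac{\alpha_i}{\tau_i-\lambda_i}\bigl(F_i(\tau_i)-F_i(\lambda_i)\bigr)$, derived by chaining two instances of \eqref{conserve-def-equal}. Your route is a bit more transparent---it makes explicit that the whole argument is a pointwise CDF inequality plus one conditioning step, and it sidesteps the somewhat delicate $(S_i,\F^{i-L_i-1})$ mixed-conditioning bookkeeping in the paper's display \eqref{eqlocal}. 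The paper's version, on the other hand, parallels the earlier proofs of \propref{discard} and \propref{thmadaptive} more visibly. Both lead to the identical final bound.
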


\section{Statistical Power}\label{sec:power}
    In this section, we study the statistical power of Alpha-Spending and ADDIS-Spending under an idealized Gaussian setting with randomly arriving signals. Specifically, in \secref{spending} we examine the power of Alpha-Spending, and we derive some optimal choices of the underlying $\seq{\gamma_i}$ in a certain range, for either fixed or varying signal strength and density. Then we provide theoretical justification for the benefits of adaptivity and discarding in \secref{more}: we prove that if $\seq{\gamma_i}$ lies in the aforementioned optimal range, ADDIS-Spending is more powerful than Alpha-Spending.
	
	Before we proceed with the analysis of power, it is useful to set up a few definitions. 
	\begin{definition}\label{def:q-series}($q$-series and log-$q$-series) For any $q > 1$, we call an infinite sequence $\seq{\gamma_i}$ which is nonnegative and sums to one as $q$-series if $\gamma_i \propto i^{-q}$ for all $i$, and similarly, as a log-$q$-series if $\gamma_i \propto \frac{1}{(i+1)\log^{q}{(i+1)}}$ for all $i$. 
	\end{definition}
	
	\begin{definition}\label{def:meantest}
		(Gaussian mean testing problem) We call the problem of testing a possibly infinite sequence of hypotheses $\seq{H_i}$ as Gaussian mean testing problem, if, each observation $Z_i$ follows the following mixed distribution:
		\begin{equation*}\label{mixguassian}
			Z_i = 
			\begin{cases}
				X_i + \mu_{A} & \text{with probability } \pi_{A};\\
				X_i + \mu_{N}, & \text{with probability } 1-\pi_{A},
			\end{cases}
		\end{equation*}
        where constants $\mu_{A}>0,\ \mu_{N}\leq 0,\ \pi_{A}\in (0,1)$ for all $i$, $\ X_i \stackrel{iid}{\sim} N(0,1)$, and we test the null hypothesis $H_i: \mu_i \leq 0$, where  $\mu_i= \EE{Z_i}$. 
	\end{definition}
	
	\noindent
Recalling \secref{discarding}, if the $p$-values calculated are one-sided that is $P_i = \Phi^{-1}(-Z_i)$, then we know that the nulls are uniformly conservative as defined in \eqnref{conserve-def}, strictly conservative when $\mu_{N}<0$, and uniform when $\mu_{N}=0$. 
	
    In the following, we only consider the online Gaussian mean testing problem described above. 
    We compare the algorithms that are presented as concrete examples of each method, which are formulas \eqref{spend}, \eqref{concreteaddis} for Alpha-Spending and ADDIS-Spending respectively, with the same underlying sequence $\seq{\gamma_i}$. Also, for simplicity, we use the number of true discoveries $D$ as one of the performance measures. Noted that $D$ is the numerator inside the expectation of the power function \eqref{fdr}. Since the denominator inside the expectation of the power function in \eqnref{fdr} remains the same for different algorithms given the same testing sequence, the expectation of numerator $D$ may arguably serve as a nice substitution for power function in respect of comparison. Hence we refer $\EE{D}$ also as the power of online FWER control methods in this section. 
	
	
	\subsection{Getting optimal power using naive Alpha-Spending}\label{sec:spending}
	In this section, we derive optimal choices of $\seq{\gamma_i}$ in the range of $q$-series for Alpha-Spending with regard the Gaussian mean testing problem in \defref{meantest}. As we discussed before, the expectation of number of true discoveries serves as a reasonable measurement for comparing the power of testing procedures. Recall that $D$ is the number of true discoveries. In the Gaussian mean testing problem, we have
	\begin{align}\label{ED}
		\EE{D}& = \EE{\sum_{i=1}^{\infty} \one{P_i\leq \alpha_i , i \in \nullsc}} \stackrel{(i)}{=} \sum_{i=1}^{\infty} \EE{\one{P_i \leq \alpha_i , i \in \nullsc}}\nonumber\\& = \sum_{i=1}^{\infty} \PP{P_i\leq \alpha_i , i \in \nullsc} =  \sum_{i=1}^{\infty}\pi_{A}\Phi(\Phi^{-1}(\alpha_i) + \mu_{A}) = \sum_{i=1}^{\infty}\pi_{A}\Phi(\Phi^{-1}(\alpha\gamma_i) + \mu_{A}),
	\end{align}
	where $\Phi$ is the standard Gaussian CDF, and (i) is true due to the fact that each entry in the summation is nonnegative, and the last step uses the fact that $\alpha_i = \alpha\gamma_i$ for all $i$ in Alpha-Spending. Additionally, for each $N \in \mathbb{N}$, we denote $\EEN{D}$ as the expectation of true discoveries among the first $N$ hypotheses, which means 
	\begin{equation}\label{END}
		\EEN{D} \defn \sum_{i=1}^{N}\pi_{A} \Phi(\Phi^{-1}(\alpha \gamma_i) + \mu_{A}).
	\end{equation}
	
	\noindent
    It is obvious that the power of Alpha-Spending does not depend on $\mu_{N}$, therefore we only consider how to choose $\seq{\gamma_i}$ to optimize the power given different  $\mu_{A}$ and $\pi_{A}$. Specifically, we derive that for the Gaussian mean testing problem in \defref{meantest} using Alpha-Spending, the optimal sequence $\seq{\gamma_i}$ in the range of  $q$-series will be  $q=1^{+}$ for any choice of $\mu_A>0$ and $\pi_A \in (0,1)$. This result provides some heuristic of choosing $\seq{\gamma_i}$: one should resort to log-$q$-series for higher power when applicable. The formal results are stated in the following \thmref{fixed}.

	
	
	\begin{theorem}\label{thm:fixed}
		Recall the definition of $\EE{D}$ and  $\EEN{D}$ in equations \eqref{ED} and \eqref{END}. For Alpha-Spending \eqref{spend} at level $\alpha < 1/2$, if the underlying sequence $\seq{\gamma_i}$ is a $q$-series where $q>1$, then for the Gaussian mean testing problem in \defref{meantest}, 
		we have
		\begin{itemize}
			\item[(a)]$\EEN{D}$ is a function increasing with $q$ first and then decreasing with $q$ for $N\geq 2$. Additionally, defining 
			\begin{equation}
				q^{*}(N, \mu_A) \equiv \textnormal{argsup}_{q>1} \EEN{D},
			\end{equation}
			we have that $q^{*}(N, \mu_{A})$ is monotonically decreasing with $N$ for any $\mu_A > 0$, and
			\begin{equation}
				\lim_{N \to \infty} q^{*}(N, \mu_A) = 1.
			\end{equation}
			
			\item[(b)] $\EE{D}$ is finite for any fixed $q$, and is a function monotonically decreasing with $q$. 
		\end{itemize}
	\end{theorem}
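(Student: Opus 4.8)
The plan is to reduce every claim to the analysis of the single scalar function $f(x) \defn \Phi(\Phi^{-1}(\alpha x) + \mu_A)$, with $f(0)\defn 0$, so that $\EEN{D} = \pi_A\sum_{i=1}^N f(\gamma_i)$ and $\EE{D} = \pi_A\sum_{i=1}^\infty f(\gamma_i)$, where $\gamma_i = \gamma_i(q) = i^{-q}/\zeta(q)$ and $\zeta(q) = \sum_{j\geq 1}j^{-q}$. Changing variables $u = \Phi^{-1}(\alpha x)$ gives $f'(x) = \alpha e^{-\mu_A^2/2}e^{-\mu_A\Phi^{-1}(\alpha x)}$; since $\mu_A>0$ and (because $\gamma_i\leq\gamma_1 = 1/\zeta(q) < 1$ and $\alpha<1/2$) $\Phi^{-1}(\alpha x)<0$ on the relevant range, $f$ is strictly increasing and strictly concave there, with $f'(0^+)=+\infty$ and $xf'(x)\to 0$ as $x\to 0^+$. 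I will also use the elementary Gaussian tail estimate $f(\gamma_i(q)) = O(i^{-q+\varepsilon})$ for every $\varepsilon>0$ (because $\Phi^{-1}(\alpha\gamma_i)\asymp -\sqrt{2q\ln i}$, so the factor $e^{-\mu_A\Phi^{-1}(\alpha\gamma_i)}$ is subpolynomial in $i$); the same bound holds for the term-by-term $q$-derivatives and is uniform on compact subsets of $(1,\infty)$, which justifies differentiating the infinite sums termwise.

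\textbf{Part (b).} Finiteness of $\EE{D}(q)$ is immediate from the tail bound since $q>1$. For monotonicity, write $\partial_q\gamma_i(q) = \gamma_i(q)(\bar L(q)-\ln i)$ with $\bar L(q)\defn -\zeta'(q)/\zeta(q) = \sum_j\gamma_j(q)\ln j$, so that $\partial_q\EE{D}(q) = \pi_A\sum_i\gamma_i f'(\gamma_i)(\bar L-\ln i)$. Here $f'(\gamma_i)$ is nondecreasing in $i$ (as $f'$ is decreasing and $\gamma_i$ is decreasing in $i$), $\bar L-\ln i$ is decreasing in $i$, and $\sum_i\gamma_i(\bar L-\ln i)=0$; Chebyshev's sum inequality for oppositely ordered sequences under the probability weights $\{\gamma_i\}$ then gives $\partial_q\EE{D}(q)\leq \pi_A\big(\sum_i\gamma_i f'(\gamma_i)\big)\big(\sum_i\gamma_i(\bar L-\ln i)\big) = 0$, strictly because $\mu_A>0$ makes $f'$ strictly decreasing. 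Hence $\EE{D}$ is strictly decreasing in $q$.

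\textbf{Part (a).} The key reduction is $\partial_q\EEN{D}(q) = \pi_A\big(\sum_{i=1}^N w_i(q)\big)\big(\bar L(q)-\bar M_N(q)\big)$, where $w_i(q) = \gamma_i(q)f'(\gamma_i(q)) > 0$ and $\bar M_N(q) = \sum_{i=1}^N w_i\ln i/\sum_{i=1}^N w_i\in[0,\ln N]$ is a weighted average of $\ln 1,\dots,\ln N$. Thus $\partial_q\EEN{D}$ has the sign of $g_N(q)\defn\bar L(q)-\bar M_N(q)$. Three observations: (i) $\bar L$ is strictly decreasing, since $\bar L'(q) = -\mathrm{Var}_{\gamma(q)}(\ln J)<0$, with $\bar L(1^+)=+\infty$ and $\bar L(\infty)=0$; (ii) hence $g_N(1^+)=+\infty>0$; (iii) as $q\to\infty$ the blow-up of $f'$ at $0$ makes $w_i\asymp i^{-q}e^{\mu_A\sqrt{2q\ln i}}$ for $i\geq 2$ while $w_1\to f'(1)>0$, so $\bar M_N(q)/\bar L(q)\to\infty$ and $g_N(\infty)<0$ (this uses $N\geq 2$ and $\mu_A>0$; for $N=1$, $g_1=\bar L>0$ throughout, matching the exclusion of $N=1$). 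Granting that $g_N$ crosses zero exactly once, from $+$ to $-$, $\EEN{D}$ is strictly increasing then strictly decreasing, with $q^*(N,\mu_A)$ the unique zero of $g_N$ in $(1,\infty)$. The remaining two claims then follow easily. Monotonicity in $N$: at $q=q^*(N)$ we have $\bar L(q^*(N))=\bar M_N(q^*(N))\leq\ln N<\ln(N+1)$, so $\partial_q[\pi_A f(\gamma_{N+1})]\big|_{q^*(N)} = \pi_A w_{N+1}(\bar L(q^*(N))-\ln(N+1))<0$, whence $\partial_q\mathbb{E}_{N+1}[D](q^*(N))<0$ and, by unimodality of $\mathbb{E}_{N+1}[D]$, its peak satisfies $q^*(N+1)<q^*(N)$. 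The limit: $q^*(N)$ decreases to some $q_\infty\geq 1$; if $q_\infty>1$, pick $q_1\in(1,q_\infty)$, note that for every $N$ the map $\EEN{D}$ is increasing on $(1,q^*(N)]\supseteq(1,q_\infty]$ so $\EEN{D}(q_1)\leq\EEN{D}(q_\infty)$, yet letting $N\to\infty$ and using part (b) gives $\EE{D}(q_1)>\EE{D}(q_\infty)$, a contradiction; hence $q^*(N)\to 1$.

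\textbf{The main obstacle} is the single-crossing of $g_N = \bar L-\bar M_N$. This is not plain monotonicity of $g_N$: $\bar L$ decreases, but $\bar M_N$ is generically decreasing in $q$ as well (e.g. in the degenerate linear case $\mu_A=0$, $\bar M_N = \sum_{i\leq N}i^{-q}\ln i/\sum_{i\leq N}i^{-q}$, which is decreasing), so the sign change comes from a race between the two rates and genuinely uses $\mu_A>0$. The route I would take is to prove the implication $g_N(q_0)\leq 0\Rightarrow g_N'(q_0)<0$, which forbids a second crossing; this requires comparing $\bar L'(q_0) = -\mathrm{Var}_{\gamma(q_0)}(\ln J)$ with $\bar M_N'(q_0) = \mathrm{Cov}_{w(q_0)}(\rho(\gamma_i)(\bar L-\ln i),\ \ln i)$ at a point where $\bar L=\bar M_N$, where $\rho(x) = 1-\mu_A\Phi(\Phi^{-1}(\alpha x))/\phi(\Phi^{-1}(\alpha x))$ is the logarithmic elasticity of $xf'(x)$. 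Since $\rho$ need not be sign-definite when $\mu_A$ is large, one must split the index set at $i\approx e^{\bar L(q)}$ and invoke the precise Gaussian-tail/log-concavity estimates for $f'$; this quantitative step is where $\alpha<1/2$ and $\mu_A>0$ do real work, and I expect it to be the bulk of the proof.
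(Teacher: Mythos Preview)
Your route diverges from the paper's in several places, and where it does your arguments are cleaner. For part (b) the paper establishes finiteness via a second-derivative analysis of $t(x)=\Phi^{-1}(Cx^{-q})$ (showing convexity for large $x$ and then comparing $\Phi(\Phi^{-1}(\alpha_{K(N)+i})+\mu_A)$ to $\alpha_{N+i}$), and obtains monotonicity only as a corollary of part (a). Your Chebyshev-sum argument---weights $\gamma_i$, oppositely ordered sequences $f'(\gamma_i)$ and $\bar L-\ln i$, zero mean of the latter---is a self-contained one-liner that avoids that circularity. Likewise for the limit $q^*(N)\to 1$: the paper runs a contradiction through a somewhat delicate rearrangement inequality (comparing two ratios of series), whereas your argument---$\EEN{D}$ increasing on $(1,q_\infty]$, pass to the limit, contradict strict monotonicity of $\EE{D}$ from part (b)---is shorter and reuses only what has already been proved. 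Your monotonicity-in-$N$ step (add the $(N{+}1)$-st term, whose $q$-derivative is negative at $q^*(N)$ because $\bar L(q^*(N))=\bar M_N\le\ln N<\ln(N{+}1)$) is essentially the paper's argument in different notation.

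The real divergence is in how unimodality of $\EEN{D}$ is handled. You reduce it to a single sign change of $g_N=\bar L-\bar M_N$, correctly note that neither summand is monotone in the helpful direction, and leave the implication $g_N(q_0)\le 0\Rightarrow g_N'(q_0)<0$ as the outstanding step. The paper does \emph{not} go through $g_N$. Instead it works term-by-term: for each $i\ge 2$ the $q$-derivative of $\Phi(\Phi^{-1}(\alpha_i(q))+\mu_A)$ has the sign of $\beta_i(q)=-\ln i-\zeta'(q)/\zeta(q)$, which is strictly decreasing in $q$ with a unique zero $q^{o}(i)$ (itself decreasing in $i$), so each term is unimodal; the paper then asserts that the sum over $i\ge 2$ is unimodal, adjoins the monotone-increasing $i=1$ term, and rules out pure monotonicity by an explicit asymptotic showing $\partial_q\mathbb{E}_2[D]<0$ as $q\to\infty$. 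So the paper trades your analytic obstacle for a structural one (a sum of unimodal terms with ordered peaks), which it treats briskly. Your decomposition would reproduce the paper's endpoint computations ($g_N(1^+)=+\infty$, $g_N<0$ for large $q$) almost verbatim; the part you are missing relative to the paper is precisely the middle, and the paper's way around it is the per-term sign argument rather than the covariance/elasticity comparison you sketch.
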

	\noindent
	\thmref{fixed} in fact suggests that the slower $\seq{\gamma_i}$ is decaying, the higher the power will get, which corresponds to  our intuition that the power will be higher if we protect the ability to detect the signals in the long run, or we try not to run out of our total budget too fast. However, if the testing process stops at a certain point, then the sequence $\seq{\gamma_i}$ that decays too slow will hurt the power, for the testing will not get the benefit from the long run. This trade-off implies that there is an optimal sequence, which is not decaying too slow or too fast, making the testing process achieve the highest power. \thmref{fixed} provides theoretical verification for those intuitions and is proved in \appref{fixedpf}.
	
	Therefore, when we have no prior information on the hypotheses, which means we could only treat the non-null fraction and the non-null mean as some fixed arbitrary value, we could always resort to the sequence $\seq{\gamma_i}$ that sums to one with slower decay rate to obtain higher power of Alpha-Spending. For example, among $q$-series, we should choose  $q$ as close to one as possible, and one should resort to log-$q$-series for higher power.
	
     The above results are all in the regime of fixed signal strength and density, which is a bit unrealistic in practice, though potentially suitable for deriving interpretable heuristics. For completeness, we also go beyond the setting of fixed signal strength and density: in \appref{varied}, we consider different $\mu_{Ai}$ and $\pi_{Ai}$ for each $i$ in \defref{meantest}, and we show that, if $\seq{\mu_{Ai}}$ and $\seq{\pi_{Ai}}$ satisfy some reasonable conditions, then there exists a function $h$ with closed form, such that $\gamma_i = h(\pi_{Ai}, \mu_{Ai})$ achieves the highest power. We refer the reader to the \appref{varied} for  details.
	
	 \subsection{The adaptive discarding methods are more powerful}\label{sec:more}
	In \secref{spending}, we showed that the optimal choice of $\seq{\gamma_i}$ for Alpha-Spending in the range of $q$-series and log-$q$-series for the Guassian mean testing problem with fixed signal strength and density lies in the regime of log-$q$-series. Here, we show that in this  regime, ADDIS-Spending is provably more powerful than  Alpha-Spending.
	For simplicity, we consider fixed $\tau_i$ and $\lambda_i$, that is $\tau_i = \tau$ and $\lambda_i = \lambda$ for all $i \in \N$, and we denote the number of discoveries from  ADDIS-Spending as $D_{\textnormal{ADDIS}}(\lambda, \tau)$, and $D_{\textnormal{spend}}$ for Alpha-Spending. Below, we demonstrate that as long as the hyper-parameters $\lambda$ and $\tau$ are reasonably chosen, ADDIS-Spending is guaranteed to be more powerful than Alpha-Spending. 
	
	
		
	\begin{theorem}\label{thm:more}
		For the Gaussian mean testing problem in \defref{meantest}, 
		if the underlying $\seq{\gamma_i}$ is a log-$q$-series as defined in \defref{q-series} with $q>1$, then there exists some $c^{*}$ such that with probability one, 
		\begin{equation}
		    \EE{D_{\textnormal{ADDIS}}(\lambda, \tau)} \geq \EE{D_\textnormal{spend}} \quad \textnormal{for all }\ \lambda\in [0,c^{*}) \text{ and all } \tau \in (c^{*},1], \textnormal{ s.t. } \tau-\lambda < 1.
		\end{equation}
		Additionally, $c^{*}$ increases with $\pi_A$, $\mu_A$ and $\mu_N$, and particularly reaches one when $\mu_N = 0$.
	\end{theorem}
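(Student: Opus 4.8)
The plan is to turn the statement into an explicit comparison of two infinite series (equivalently, of the partial sums $\EEN{D}$, reading the expectations --- as the phrase ``with probability one'' suggests --- conditionally on the random set of non-null positions; by the strong law this changes nothing essential below). By \eqref{ED}, $\EE{D_{\textnormal{spend}}}=\pi_A\sum_{i\ge1}\Phi(\Phi^{-1}(\alpha\gamma_i)+\mu_A)$. For ADDIS-Spending with constant $\tau_i\equiv\tau$, $\lambda_i\equiv\lambda$, the concrete rule \eqref{concreteaddis} tests $H_i$ at level $\alpha(\tau-\lambda)\gamma_{t(i)}$ with $t(i)=1+\sum_{j<i}(S_j-C_j)=1+\#\{j<i:\lambda<P_j\le\tau\}$. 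The structural observation is that $t(i)$ depends only on $P_1,\dots,P_{i-1}$, hence is independent of $\one{i\in\nullsc}$ and of $P_i$, with $t(i)-1\sim\mathrm{Bin}(i-1,\rho)$ for $\rho=\rho(\lambda,\tau)\defn\PP{\lambda<P_1\le\tau}$. Conditioning on $t(i)$ and using $P_i=\Phi(-Z_i)$, $Z_i\sim N(\mu_A,1)$, for a non-null, we get $\EE{D_{\textnormal{ADDIS}}(\lambda,\tau)}=\pi_A\sum_{i\ge1}\EE{\Phi(\Phi^{-1}(\alpha(\tau-\lambda)\gamma_{t(i)})+\mu_A)}$, so the theorem reduces to comparing this series with $\pi_A\sum_{i\ge1}\Phi(\Phi^{-1}(\alpha\gamma_i)+\mu_A)$.

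Next I would pin down the constant $c^*$. Writing $g_\mu(x)=\Phi(\Phi^{-1}(x)+\mu)$, a direct computation gives $g_\mu'(x)=\exp(-\mu\Phi^{-1}(x)-\mu^2/2)$, so $g_{\mu_A}-\mathrm{id}$ is nonnegative and unimodal (increasing then decreasing, peak where $\Phi^{-1}(x)=-\mu_A/2$) while $g_{\mu_N}-\mathrm{id}$ is nonpositive (identically $0$ if $\mu_N=0$). Decomposing over the mixture, $\rho(\lambda,\tau)=G(\tau)-G(\lambda)$ with $G\defn\pi_A g_{\mu_A}+(1-\pi_A)g_{\mu_N}$; set $H\defn G-\mathrm{id}$, so $H(0)=H(1)=0$. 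Since $H'(x)=\pi_A e^{-\mu_A\Phi^{-1}(x)-\mu_A^2/2}+(1-\pi_A)e^{-\mu_N\Phi^{-1}(x)-\mu_N^2/2}-1$ is, as a function of $z=\Phi^{-1}(x)$, convex and tends to $+\infty$ at both ends, one checks that $H\ge0$ on $[0,c^*]$ and $H\le0$ on $[c^*,1]$, where $c^*$ is the unique zero of $H$ in $(0,1)$ if one exists and $c^*\defn1$ otherwise. This forces $c^*=1$ precisely when $\mu_N=0$ (then $H=\pi_A(g_{\mu_A}-\mathrm{id})\ge0$ has no interior zero), and because $g_\mu$ is increasing in $\mu$ and $g_{\mu_A}\ge g_{\mu_N}$, the function $H$ --- hence its zero $c^*$ --- is nondecreasing in $\mu_N$, in $\mu_A$ and in $\pi_A$, which yields the stated monotonicities. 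Finally, for any $\lambda<c^*<\tau$ with $(\lambda,\tau)\ne(0,1)$, $\rho(\lambda,\tau)=(H(\tau)-H(\lambda))+(\tau-\lambda)<\tau-\lambda$, so $\tfrac{\tau-\lambda}{\rho}=1+\delta$ with $\delta=\delta(\lambda,\tau)>0$.

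The main obstacle is converting $\rho<\tau-\lambda$ into domination of the series despite the deflating factor $\tau-\lambda<1$ multiplying ADDIS's thresholds. I would combine three ingredients: (i) the Gaussian tail asymptotic $\Phi(\Phi^{-1}(\alpha c\,x)+\mu_A)=(1+o(1))\,\alpha c\,x\,\exp(\mu_A\sqrt{2\log(1/x)}-\mu_A^2/2)$ as $x\downarrow0$ for any fixed $c>0$, so the constant $c=\tau-\lambda$ passes to the limit while its effect on the $\sqrt{\log}$ correction is lower order; (ii) for a log-$q$-series, $\gamma_{\lfloor\rho i\rfloor}/\gamma_i\to1/\rho$, and more generally $\gamma$ varies slowly enough that $\gamma_{t(i)}$ is insensitive to lower-order fluctuations of $t(i)$; (iii) Chernoff/Bernstein concentration of $t(i)=1+\mathrm{Bin}(i-1,\rho)$ around $1+(i-1)\rho$, so that off an event of exponentially small probability (whose total contribution is negligible) $\gamma_{t(i)}\ge(1-o(1))\gamma_{\lfloor\rho i\rfloor}$. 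Combining these, the $i$-th ADDIS term is at least $(1-o(1))\tfrac{\tau-\lambda}{\rho}$ times the $i$-th Alpha-Spending term; since $\tfrac{\tau-\lambda}{\rho}=1+\delta$ is bounded away from $1$, summing gives $\EEN{D_{\textnormal{ADDIS}}(\lambda,\tau)}\ge\EEN{D_{\textnormal{spend}}}$ for all large $N$, which is the claim. The genuinely delicate point is uniformity: since $\sum_i\Phi(\Phi^{-1}(\alpha\gamma_i)+\mu_A)$ grows without bound for a log-$q$-series, the $o(1)$ errors in (i) and (iii) must be shown summably small relative to the running partial sum so they cannot erode the constant-factor gain $\delta$; this bookkeeping, together with handling the finitely many small-$i$ terms (where $t(i)=i$ is forced and ADDIS is strictly behind), is where the real work lies.
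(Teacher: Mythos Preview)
Your proposal is correct and follows the same overall architecture as the paper: both identify the pivotal ratio $(\tau-\lambda)/\rho(\lambda,\tau)$, locate the threshold $c^{*}$ as the interior zero of $x\mapsto G(x)-x$ (your $H$ is the paper's $-J$), and then argue that because the log-$q$ series is slow enough for the relevant tail sum to diverge, the constant-factor gain $1+\delta=(\tau-\lambda)/\rho$ eventually swamps the finite early-index loss from the deflation factor $\tau-\lambda$.

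The executions differ in two places. First, the paper avoids your ingredients (i) and (ii) by the clean deterministic inequality
\[
\frac{\gamma_{t(i)}}{\gamma_i}=\frac{(i+1)\log^q(i+1)}{(t(i)+1)\log^q(t(i)+1)}\ \ge\ \frac{i}{t(i)}\qquad(\text{since }t(i)\le i),
\]
so that $\widetilde\alpha_i/\alpha_i\ge(\tau-\lambda)\,i/t(i)$ and a single weak-law statement on $i/t(i)\to1/\rho$ replaces both your slow-variation estimate for $\gamma$ and your Gaussian-tail asymptotic. Second, instead of an asymptotic expansion of $\Phi(\Phi^{-1}(\cdot)+\mu_A)$, the paper applies the mean-value theorem with the explicit derivative $e^{-\mu_A^2/2}e^{-\mu_A\Phi^{-1}(x)}$ and then invokes \lemref{lemmalim}(c) directly to get divergence of the gain term; this sidesteps your ``uniformity'' bookkeeping entirely, since the comparison reduces to (finite negative sum)$+$(divergent positive sum)$>0$. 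Your route via Chernoff concentration and full expectations is arguably more rigorous about the ``with probability one'' clause (the paper's weak-law argument is pointwise in $t$ and glosses over the joint event), and your monotonicity argument for $c^{*}$ is cleaner than the paper's verbal one; but the paper's deterministic ratio bound plus \lemref{lemmalim}(c) is the shorter path to the conclusion.
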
	 
	The proof is in \appref{pfmore}.
    \thmref{more} indicates that as long as we have reasonable prior information about the signal strength, density and conservativeness of nulls, we can utilize them to achieve higher  power over Alpha-Spending via appropriate choice of the two hyperparameters $\lambda$ and $\tau$ in ADDIS-Spending. In our experiments, $\lambda=1/4$ and $\tau=1/2$ perform just fine as we discuss next, and we suggest these as defaults. As \thmref{more} provides the theoretical justification for the benefits of discarding and adaptivity in terms of power, in the following \secref{numerical} we provide numerical analysis with both simulations and real data example to confirm these benefits. 
	
	\section{Numerical Studies}\label{sec:numerical}
	\subsection{Simulations}\label{sec:simu}
	In this section, we provide some numerical experiments to compare the performance of ADDIS-Spending, Discard-Spending, Adaptive-Spending, and Alpha-Spending. 
	In particular, for each method, we provide empirical evaluations of its power while ensuring that the FWER remains below a chosen value. Specifically, in all our experiments, we control the FWER under $\alpha = 0.2$ and estimate the FWER and power by averaging over 2000 independent trials. The constant sequences $\tau_i \equiv 1/2$ in Discard-Spending, $\lambda_i \equiv 1/2$ in Adaptive-Spending, and $\lambda_i \equiv  1/4, \tau_i \equiv  1/2$  in ADDIS-Spending for all $i \in \N$ were found to be generally successful, so as our default choice in this section and we drop the index for simplicity. Additionally, we choose the infinite sequence $\gamma_i \propto 1/(i+1)\log(i+1)^2$
	for all $i \in \N$ as default, which could be substituted by any constant infinite sequence that is nonnegative and sums to one. Those default choices turn out working pretty well in establishing the strength of our methods and confirm the theoretical results in \secref{more}, though they may not optimal in obtaining high power. 
	
	In what follows, we show the power superiority of ADDIS-Spending over all other three methods, especially under settings with both nonnegligible number of signals and conservative nulls. Specifically, we consider the simple experimental setup of Gaussian mean testing problem in \defref{meantest} with $T=1000$ components, where the nulls are uniformly conservative from the discussion in \secref{discarding}.
     We ran simulations for $\mu_N \in \{0, -0.5, -1, -1.5\}$, $\mu_A \in \{4, 5\}$, and $\pi_A \in \{0.1,0.2,\dots,0.9\}$, to see how the changes in  conservativeness of nulls and true signal fraction may affect the performance of algorithms. The results are shown in \figref{addis}, which indicates that (1) FWER is under control for all methods in all settings; (2) ADDIS-Spending enjoys appreciable power increase as compared to all the other three methods in all settings; (3) the more conservative the nulls are (the more negative $\mu_N$ is), or the higher the fraction of non-nulls is, the more significant the power increase of ADDIS-Spending is.
	
	\begin{figure}
		\begin{subfigure}{0.32\textwidth}
			\centering
			\includegraphics[width=\linewidth]{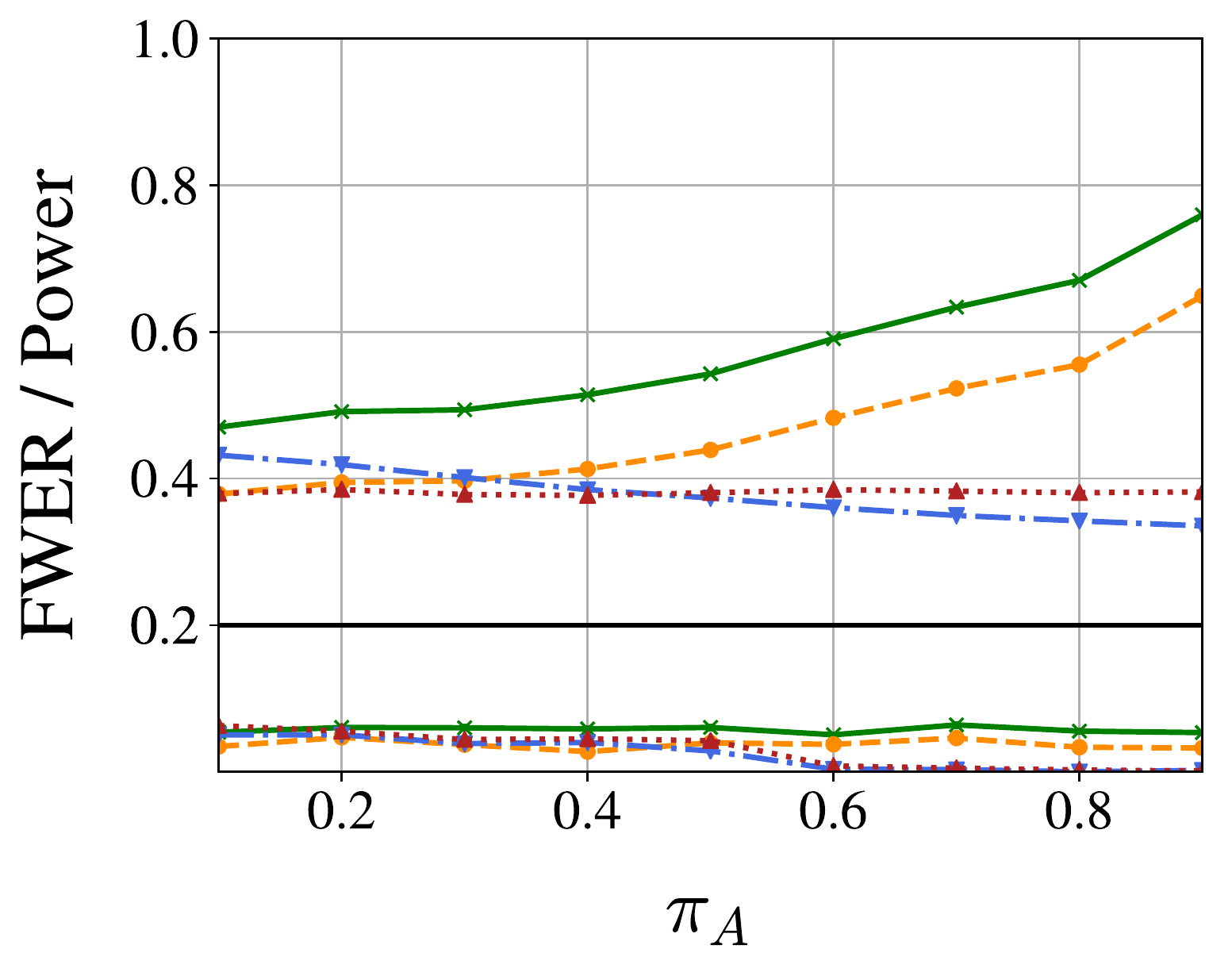}
			\caption{}
		\end{subfigure}%
		\begin{subfigure}{0.32\textwidth}
			\centering
			\includegraphics[width=\linewidth]{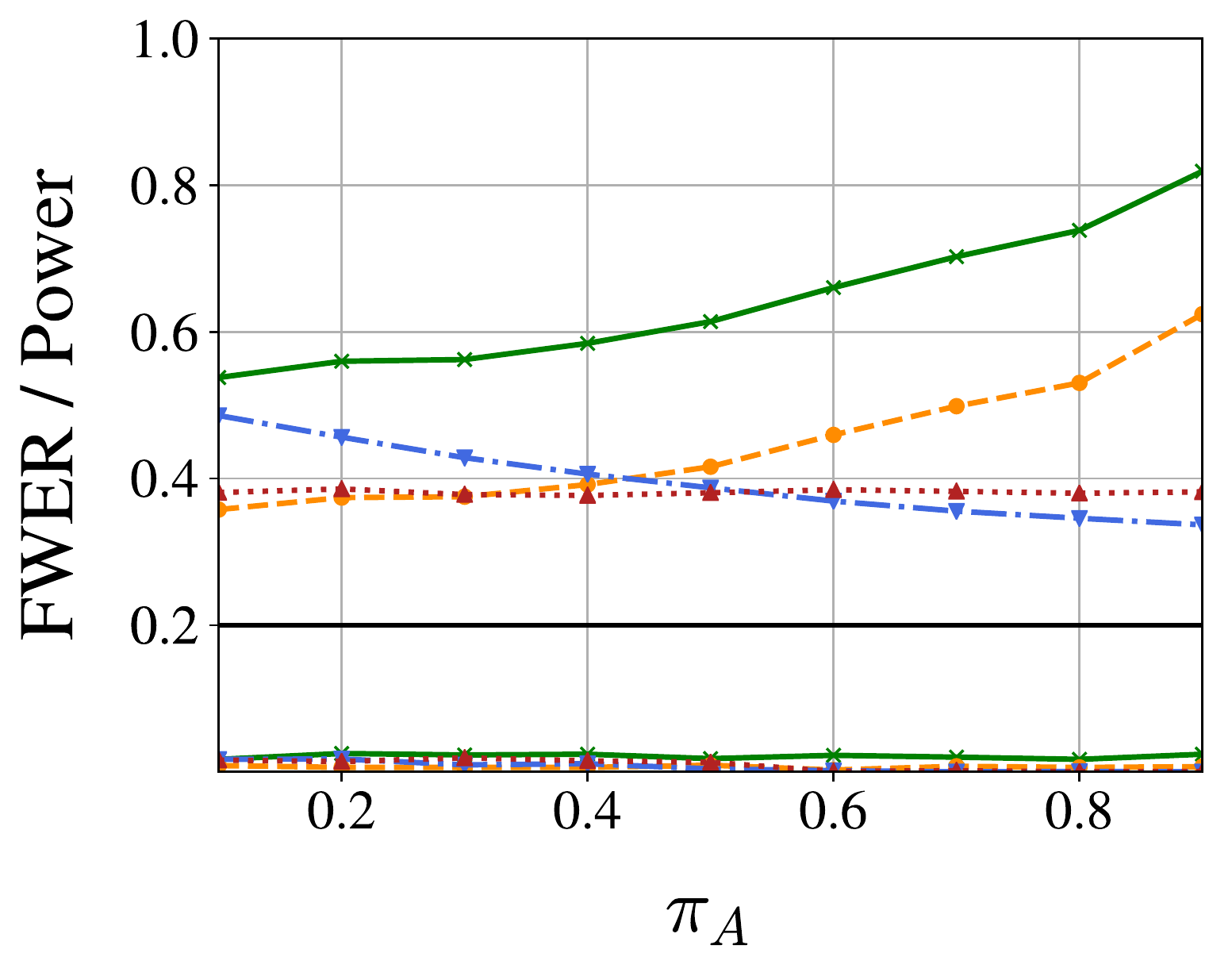}
			\caption{}
		\end{subfigure}	%
		\begin{subfigure}{0.32\textwidth}
			\centering
			\includegraphics[width=\linewidth]{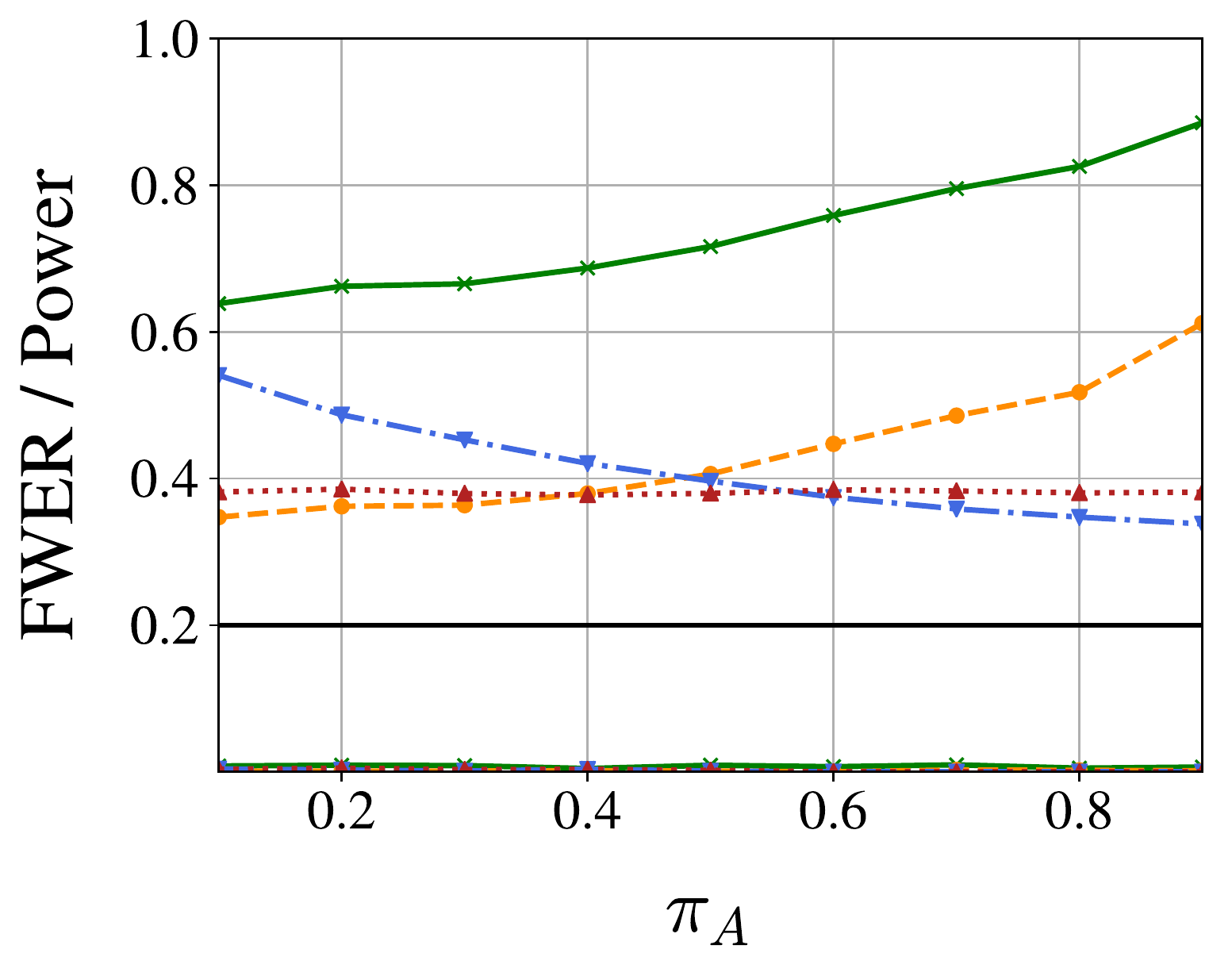}
			\caption{}
		\end{subfigure}\\
		
		\begin{subfigure}{0.32\textwidth}
			\centering
			\includegraphics[width=\linewidth]{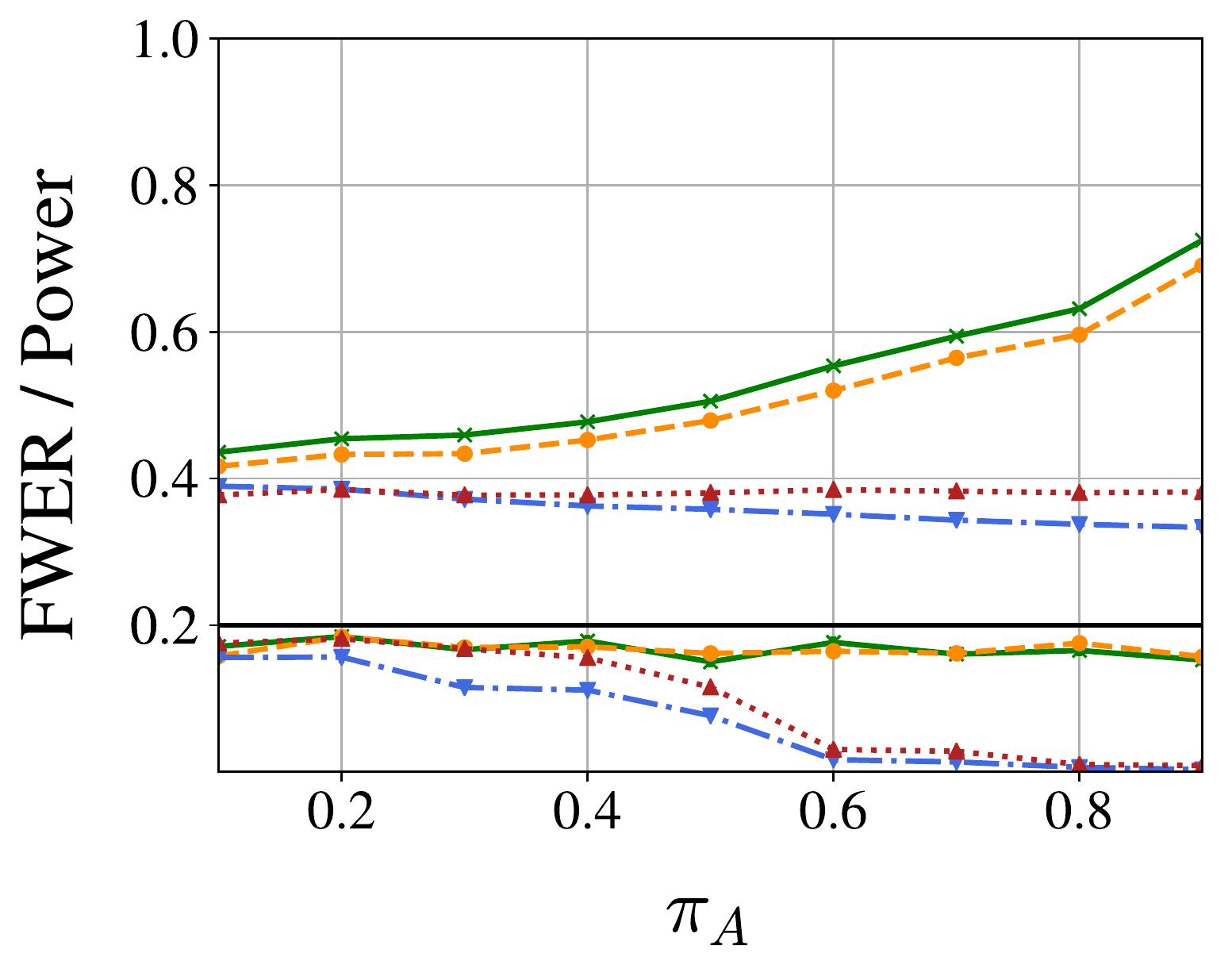}
			\caption{}
		\end{subfigure}%
		\begin{subfigure}{0.32\textwidth}
			\centering
			\includegraphics[width=\linewidth]{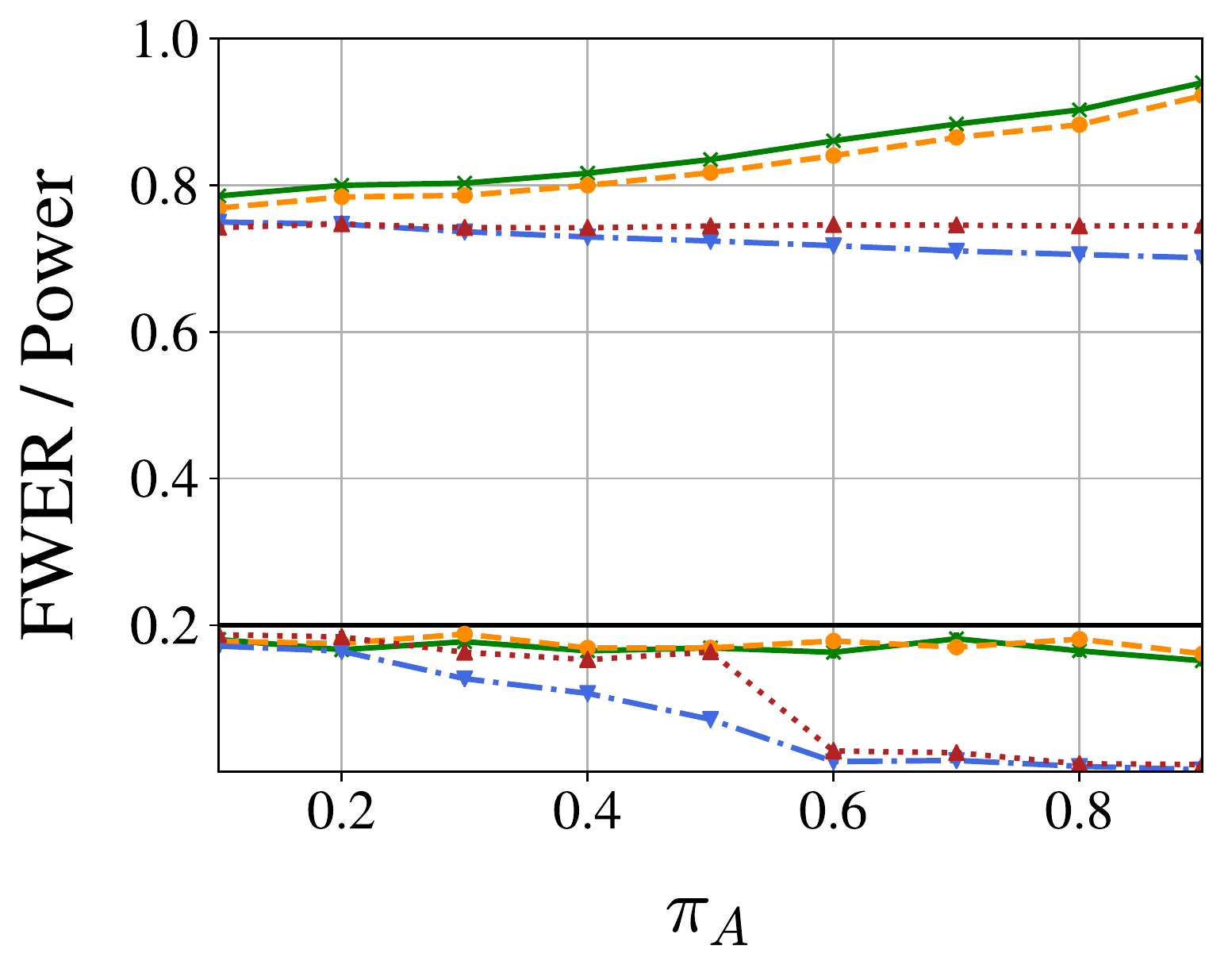}
			\caption{}
		\end{subfigure}
		\includegraphics[width=0.17\linewidth, valign=c]{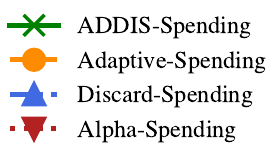}
		\caption{Statistical power and FDR versus fraction of non-null hypotheses $\pi_A$ for ADDIS-Spending, Discard-Spending, Adaptive-Spending and Alpha-Spending at target FWER level $\alpha$ = 0.2 (solid black line). The lines above the solid black line are the power of each methods versus $\pi_A$, and the lines below are the FWER of each methods versus $\pi_A$. The $p$-values are drawn using the Gaussian model as described in the text, while we set $\mu_N = -0.5$ in plot (a), $\mu_N = -1$ in plot (b), $\mu_N = -1.5$ in plot (c), and $\mu_N = 0$ in plots (d) and (e); and we set $\mu_A = 4$ in plots (a), (b), (c), (d), $\mu_A = 5$ in plot (e). Therefore nulls in (a), (b) and (c) are conservative, and the conservativeness is increasing; while nulls in (d) and (e) are not conservative (uniform). }\label{fig:addis}
	\end{figure}
	
	\subsection{Application to the International Mouse Phenotyping Consortium (IMPC)}\label{sec:real}
	 In the following, we introduce the application of ADDIS-Spending to real-life data about International Mouse Phenotyping Consortium (IMPC). The IMPC coordinates a large study to functionally annotate every protein coding gene by exploring the impact of the gene knockout on the resulting phenotype. Statistically speaking, for each phenotype $i$, people test the null hypothesis $H_j^{(i)}$: ``the knockout of gene j will not change the phenotype i" versus its alternative, via comparing the unmutated mouse (control case) to the mouse with gene j knockout. Since the dataset and resulting hypotheses constantly grow as new knockouts are studied, and a positive test outcome could lead to following up medical research that cannot be revised, therefore it is natural to view this as an online hypothesis testing problem as the ones considered in the previous sections. 
    
    We follow the analysis done by \citet{karp2017prevalence}, which resulted in a set of $p$-values for testing genotype effects. The data is available at Zenodo repository \url{https://zenodo.org/record/2396572}, organized by \citet{robertson2019onlinefdr}. This particular dataset admits a natural local dependence structure: the hypotheses are tested in small batches, with each batch using a different group of mice. Plot (a) in \figref{real} demonstrates this local dependence structure via showing the first 5000 $-log_{10}$ transformed $p$-values:  the transformed $p$-values are ordered by the time that its corresponding  data samples are collected, and the adjacent batches are distinguished using different colors.
    
    Due to this online nature and local dependence structure of the data, we hence apply the modified version of ADDIS-Spending described in \secref{local}. Since we do not know the underlying truth, therefore we only report the number of discoveries and argue that the corresponding FWER are all under control following our theoretical results. Plot (b) in \figref{real}
     shows the power advantage of ADDIS-Spending over Alpha-Spending and Online-Fallback, where we use the same underlying $\seq{\gamma_i}$ sequence with $\gamma_i \propto 1/i^{1.1}$, and the default setting $\tau = 0.5, \lambda = 0.25$ for ADDIS-Spending in comparison.
    
    The above real data example again supports our key idea: utilizing the independence or local dependence structure to incorporate with adaptability and discarding can improve the power of online testing procedure much.
    
    \begin{figure}
    \centering
		\begin{subfigure}{0.5\textwidth}
			\centering
			\includegraphics[width=\linewidth]{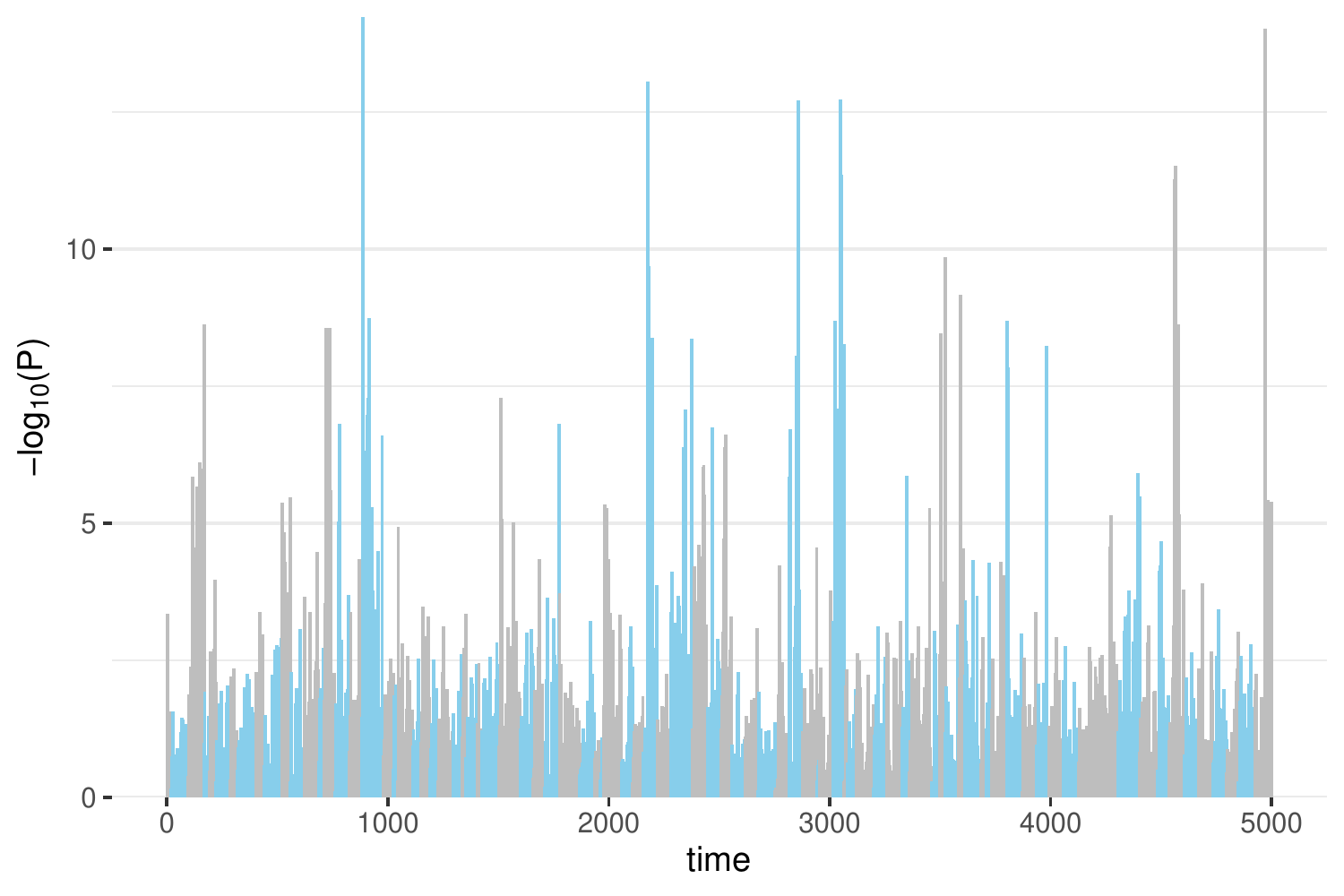}
			\caption{}
		\end{subfigure}%
		\begin{subfigure}{0.5\textwidth}
			\includegraphics[width=\linewidth]{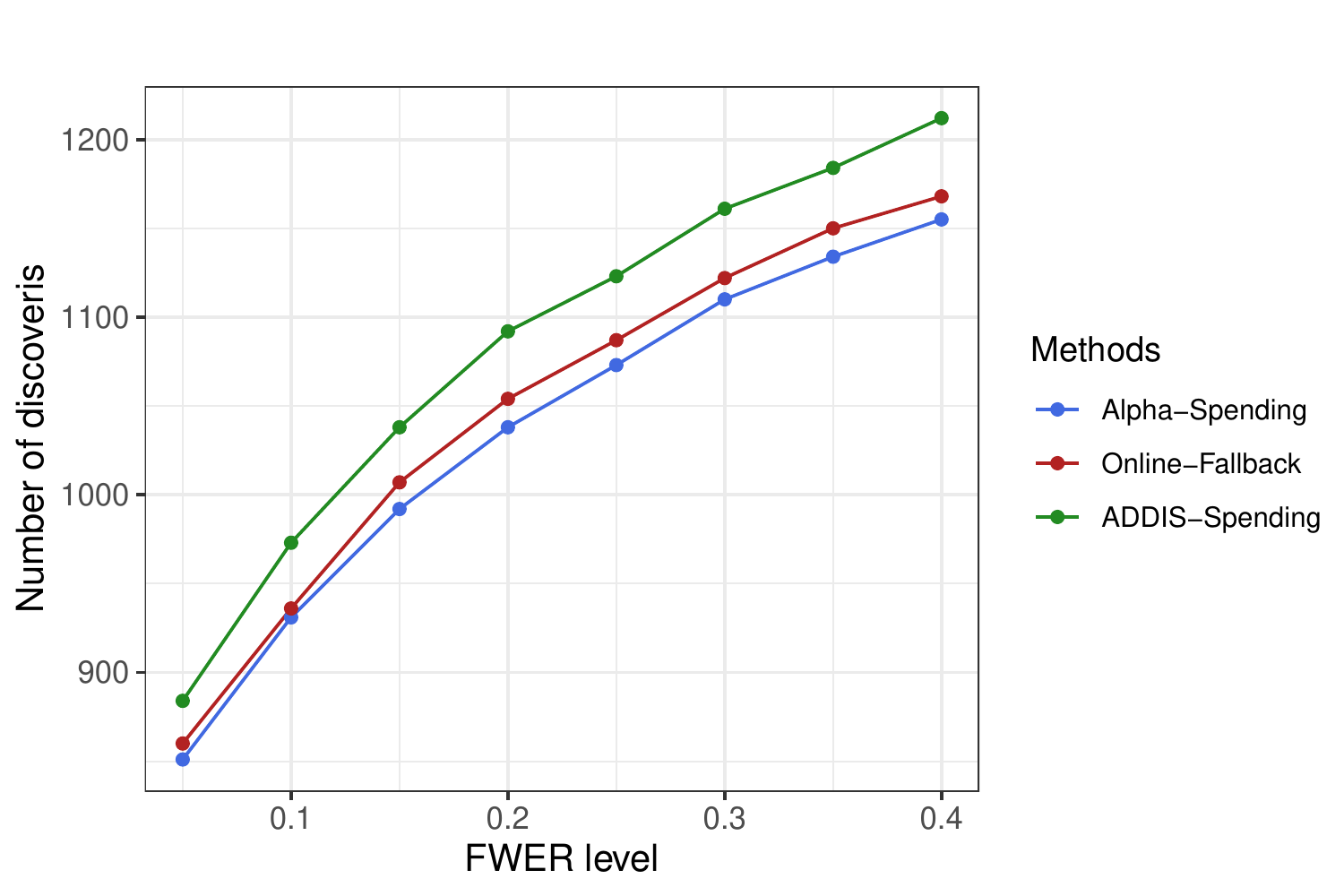}
			\caption{}
		\end{subfigure}	
		\caption{Figure (a) shows the barplot of $-\log_{10}$ transformed $p$-values, where the $p$-values are ordered in time, and each adjacent batches are distinguished with different colors. Figure (b) shows the number of discoveries versus FWER level, using different algorithms (Alpha-Sepnding, Online-Fallback, ADDIS-Spending).}\label{fig:real}
	\end{figure}
	

	\section{Conclusion and Discussion}\label{sec:conclude}
	Modern biology studies often require testing hypotheses in a sequential manner, and how to control familywise error rate in this setting leads to the statistical problem of online FWER control. This paper derives new algorithms for online FWER control, a problem for which no systematic treatment exists in the literature to the best of our knowledge. While we describe several new methods, each improving on Alpha-Spending (online Bonferroni) in different ways, the most promising of these in experiments seems to be ADDIS-spending, a new adaptive discarding algorithm that adapts to both unknown number of non-nulls and conservativeness of the nulls. 
	
     Though we find that ADDIS-Spending is the most promising method within current practices, we are also wondering whether there exists universal refinements over ADDIS-Spending. It is known that for any offline global null testing methods, there exists a closure such that the power of it is unimprovable. So we are wondering whether there exists similar logic for online multiple testing. We provide some initial attempts on developing the variant of the closure principle for online multiple testing in \appref{closedapp}. Several questions still remain: Is our proposed principle essentially unimprovable? Also, is the closure of an online method still be an online method? If not always, then what are the cases in which it is? We leave these as open questions for future work. 
     
     The application of adaptivity and discarding go much beyond the main contribution of this paper: they can also be applied to methods in \secref{current} to develop more powerful variants. We provide some concrete examples and corresponding proofs for their FWER control in \appref{addisapp} for interested readers.
		 
	 Throughout this paper, we mainly discuss the online algorithms for controlling FWER. In real applications, many prefer to control $k$-$\FWER$ instead, in order to obtain a less stringent error control.  The $k$-$\FWER$ is defined as $\PP{ V \geq k}$, which reduces to FWER as $k=1$. It is straightforward that for any methods that have PFER control, changing the sum of the test levels to $k\alpha$ will assure $k$-FWER controlled at level $\alpha$, simply using Markov's equality. Therefore, all our new algorithms that provably have PFER control may easily be extended to $k$-FWER control methods.

   \section{Code and data availability}
    The code to reproduce all figures in this paper are accessible at \url{https://github.com/jinjint/onlineFWER}, and the real data set used in the paper is available at Zenodo repository \url{https://zenodo.org/record/2396572}, organized by \citet{robertson2019onlinefdr}. Additionally, an R package called \texttt{onlineFDR} \citep{online} developed by David Robertson and the authors of this paper (among others), contains current state of arts in all aspect of online multiple testing, including online FWER control and also the new algorithms proposed here. 
    
 \section*{Acknowledgments}
The authors thank Jelle Goeman and David S. Robertson for a careful reading and useful suggestions, particularly pointing the authors to the related work of \cite{goeman2010sequential, ellis2017gaining}, and available real data set IMPC.

	\bibliographystyle{unsrtnat}
   \bibliography{refs}	

\renewcommand{\thesection}{\Alph{section}.\arabic{section}}
\setcounter{section}{0}





\newpage

\begin{appendices}
		
\section{Proof of \propref{recycling}}\label{app:recycling}
		The Online Fallback procedure is in fact equivalent to the following procedure, which may be regarded as an extension of the graphical procedure proposed by  \citet{bretz2009graphical} to the infinite case. Specifically, recalling the form of Online Fallback in \eqref{recycling} for testing hypotheses set $\{H_i\}_{i=0}^{\infty}$, we denote $\alpha_i = \alpha \gamma_i$ for all $i$, and 
		\begin{equation}\label{W}
			W = \begin{bmatrix}
				0     & w_{12} &  w_{13}&  \dots   & w_{1i} & \dots \\
				&     0    & w_{23} & w_{24} & \dots    & \dots       \\
				&          &    0     & w_{34} & \dots    & \dots        \\
				&          &          & \ddots   &    & \vdots \\
				&          &          &          & \ddots         & \vdots\\
			\end{bmatrix}.
		\end{equation}
		Note that the nonnegative $\{\alpha_i\}_{i=0}^{\infty}$ has summation no more than $\alpha$, while the nonnegative sequence $\{w_{ki}\}_{i=k+1}^{\infty}$ has summation less than one for each $k \in \mathrm{N}$. Now we consider testing the first $n$ hypotheses $\{H_i\}_{i=1}^{n}$ for arbitrary number $n$, using the graphical procedure Algorithm 1 in \citep{bretz2009graphical} with the initial significance levels $\{\alpha_i\}_{i=0}^{n}$, and transfer graph $W_n$, which takes the first $n$ rows and $n$ columns from $W$. Then the graphical procedure is the following:
		\begin{itemize}
		    \item{}Step 0: Set $M = \{1,2,\dots,n\}.$
		    \item{}Step 1: For the smallest $t \in M$ that $P_t \leq \alpha_t$, we reject $H_t$, and update $M = M \setminus \{t\}$, $\alpha_j = \alpha_j + \alpha_t w_{tj}$ for all $j \in M$, $w_{jk} = \frac{w_{jk} + w_{jt}w_{tk}}{1-w_{jt}w_{tj}}$ for all $j \neq k > t$.  If such $t$ does not exist, stop.
		    \item{}Step 2: If $|M| >1$, go to step 1; otherwise stop. 
		\end{itemize}
		Due to the upper triangle structure of the transfer graph, we have $w_{tk} \equiv 0$ for $k>t$ in the updating step in Step~1 above, therefore the updating rules with regard $W_n$ reduces to $w_{jk} = w_{jk}$, which means the transfer graph $W_n$ remains the same during the testing process. Due to the upper triangle structure of our graph, the significance level of the tested hypotheses will remain the same whenever we perform updating. Therefore, we do not need to re-evaluate the hypotheses we already checked (i.e. $H_1,\dots, H_t$), and could proceed the testing in the original sequential order. These facts imply that this graphical procedure based on $W_n$ is equivalent to testing the first $n$ hypotheses using the following procedure: 
		\begin{itemize}
		    \item[(*)] For $t = 1, 2, \dots, n$, we reject $H_t$ if $P_t \leq \alpha_t$, and update the significance level for the future test with $\alpha_j = \alpha_j + \alpha_t w_{tj}$ for all $j > t$, and proceed with the testing.
		\end{itemize}
		Note that, this procedure (*) always achieves the same rejection set with Online Fallback when testing the first $n$ hypotheses for any $n \in \N$, and it exactly recover Online Fallback when $n$ goes to infinity. 
		
		Based on the above discussion, we establish the FWER control of Online Fallback in the following. Denote $A_n:=\{V_n = 0\}$, and $V_n \defn \sum_{i=0}^{n}\one{P_i\leq \alpha_i, i \in \nulls}$ for all $n\in \N$, we can write the FWER for Online Fallback as the following:
		\begin{align}\label{pfrecycling}
			\textnormal{FWER} &= \PP{V \geq 1} = \PP{\sum_{i=0}^{\infty}\one{P_i\leq \alpha_i, i \in \nulls}\geq 1}\nonumber\\
			& = 1 - \PP{\sum_{i=0}^{\infty}\one{P_i\leq \alpha_i, i \in \nulls} = 0} = 1 - \PP{\lim_{n\to \infty} \{ \sum_{i=0}^{n}\one{P_i\leq \alpha_i, i \in \nulls} = 0\}} \nonumber\\
			& = 1 - \PP{\lim_{n\to \infty} A_n} \stackrel{(i)}{=} 1 - \lim_{n\to \infty} \PP{A_n},
		\end{align}
		where (i) is true since $A_1\supseteq A_2\supseteq \dots A_n \supseteq \dots$ is a monotonically decreasing sequence. As discussed above, for testing the first $n$ hypotheses, Online Fallback is equivalent to procedure (*), which is in turn equivalent to the graphical procedure based on $W_n$ and $\{\alpha_i\}_{i=0}^{n}$. Since the sum for each row in the transfer graph $W_n$ is less than one, and the updated significance level is always larger than the original one, therefore, the graphical procedure based on $W_n$ and $\{\alpha_i\}_{i=0}^{n}$ satisfies the regularity and monotonity conditions in \citep{bretz2009graphical} for FWER control, which leads to 
		\begin{equation}\label{fwer-n}
		  \PP{A_n} \leq \sum_{i=1}^{n}\alpha_i \leq \alpha.
		\end{equation}
		Plugging into \eqnref{pfrecycling}, we have FWER upper bounded by $\alpha$ as claimed. The power superiority follows trivially from the dominance of each individual testing levels of Online Fallback over that of Alpha-Spending. Here we complete the proof for FWER control of Online Fallback.

		\section{Proof of \thmref{local}}\label{app:pflocal}
		Recall that $V$ is the number of false discoveries, which can be written as
		\[
		V = \sum_{i \in \nulls} R_i S_i , 
		\]
		by simply reiterating the obvious fact that $\alpha_i < \tau_i $, and hence for a discovery to be false (each term on the right hand side), it must be a null, it must be tested, and it must be rejected. Therefore,  
		\begin{equation}
			\EE{V} = \sum_{i\in\nulls} \EE{R_i S_i} = \sum_{i\in\nulls} \EE{\EEst{R_i S_i }{S_i,\F^{i-L_i-1}}} 
			= \sum_{i\in\nulls} \EE{\EEst{R_i}{S_i=1,\F^{i-L_i-1}}\PP{S_i=1}},
		\end{equation}
		where we used linearity of expectation twice. Here is the key step: since $\alpha_i, \lambda_i, \tau_i \in \F^{i-L-1}$, each term inside the last sum equals
		\begin{align}\label{eqlocal}
			\EE{\PPst{\frac{P_i}{\tau_i} \leq \frac{\alpha_i}{\tau_i}}{P_i \leq \tau_i,\F^{i-L_i-1}} \PP{S_i=1}} &\leq \EE{\frac{\alpha_i}{\tau_i} \PP{S_i=1}} \nonumber\\
			&\leq \EE{\frac{\alpha_i}{\tau_i} \EEst{\frac{\one{P_i\geq \lambda_i}}{1-\lambda_i/\tau_i}}{P_i\leq \tau_i, \F^{i-L_i - 1}}\PP{S_i=1}}\nonumber\\
			&= \EE{\frac{\alpha_i}{\tau_i} \EEst{\frac{\one{\lambda_i < P_i\leq \tau_i}}{1-\lambda_i/\tau_i}}{P_i\leq \tau_i, \F^{i-L_i - 1}}\PP{S_i=1}},
		\end{align}
		where we used the property of uniformly conservative nulls in \eqnref{conserve-def} twice.  Then using the fact that $\alpha_i < \tau_i$ are measurable with regard $\F^{i - L_i -1}$, and the law of iterated expectation, we have the last term in  \eqnref{eqlocal} equals to
		\begin{align*}
			&\EE{ \EEst{\alpha_i\frac{\one{\lambda_i<P_i\leq \tau_i}}{\tau_i-\lambda_i}}{S_i = 1, \F^{i-L_i - 1}}\PP{S_i=1}} = \EE{\EEst{\alpha_i \frac{\one{\lambda_i < P_i \leq \tau_i}}{\tau_i-\lambda_i}}{S_i, \F^{i-L_i - 1}}} 
			=   \EE{\alpha_i \frac{\one{\lambda_i < P_i \leq \tau_i}}{\tau_i-\lambda_i}}. 
		\end{align*}
		Therefore $\EE{V} \leq \sum_{i\in \nulls}\EE{\alpha_i \frac{\one{\lambda_i<P_i\leq \tau_i}}{\tau_i-\lambda_i}} =\EE{\sum_{i\in \nulls \cap \S \setminus \C} \frac{\alpha_i}{\tau_i-\lambda_i}}\leq \EE{\sum_{i\in \S \setminus \C} \frac{\alpha_i}{\tau_i-\lambda_i}} \leq \alpha$ by construction. 
	
		\section{Useful lemmas}
		\begin{lemma}\label{lem:lemmalim}
			For any $q > 1$, denote $\zeta(q)$ as the summation of sequence $\seq{i^{-q}}$, and $\zeta_l(q)$ as the summation of sequence $\seq{1/i\log^{q}{i}}$. Then for any constant $C>0$ and $\alpha \in (0,1)$, we have:
			\begin{itemize}		
				\item[(a)] $\lim_{x\to 0^{+}} x\exp{(-C\Phi^{-1}(x))}\log(1/x) = 0;$
				\item[(b)]  $\sum_{i=1}^{\infty} i^{-q}\exp{(-C\Phi^{-1}(\alpha i^{-q}/\zeta(q)))\log{i}} < \infty;$
				\item[(c)]  $\sum_{i=2}^{\infty} \frac{1}{i\log^q{i}}\exp{(-C\Phi^{-1}(\alpha \frac{1}{i\log^q{i}}/\zeta_l(q)))\log{(\log{i})}} = \infty;$
			\end{itemize}
			where $\Phi$ is the CDF of standard normal distribution.
			
		\end{lemma}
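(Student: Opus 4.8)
The plan is to reduce all three claims to elementary Gaussian tail estimates. The unifying device is the substitution $y = -\Phi^{-1}(x)$, so that $x = \Phi(-y)$ and $y \to \infty$ as $x \to 0^{+}$, combined with the Mills' ratio bounds $\frac{y}{y^2+1}\phi(y) \le \Phi(-y) \le \frac{1}{y}\phi(y)$ for $y>0$ (here $\phi$ is the standard normal density). These yield the two-sided control $\sqrt{2\log(1/x)-\log(2\pi)}-1 \le -\Phi^{-1}(x) \le \sqrt{2\log(1/x)}$ for all sufficiently small $x$. Every quantity appearing in (a)--(c), after dividing out the prescribed sequence, is of the shape $x\exp(-C\Phi^{-1}(x)) = \Phi(-y)\,e^{Cy}$; only the input sequence $x=x_i$ and the required direction of the bound change. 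For part (a), the substitution turns the expression into $\Phi(-y)\,e^{Cy}\log(1/\Phi(-y))$; bounding $\Phi(-y)\le \frac{1}{y\sqrt{2\pi}}e^{-y^2/2}$ and $\log(1/\Phi(-y)) \le y^2$ for $y$ large gives the upper bound $\tfrac{y}{\sqrt{2\pi}}e^{-y^2/2+Cy}\to 0$, and since the expression is nonnegative a squeeze finishes (a).

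For part (b), take $x_i = \alpha i^{-q}/\zeta(q)$, so that $\log(1/x_i) = q\log i + O(1)$ and hence $-\Phi^{-1}(x_i)\le \sqrt{2\log(1/x_i)} \le 2\sqrt{q\log i}$ for all large $i$. Then $\exp(-C\Phi^{-1}(x_i)) \le e^{2C\sqrt{q\log i}}$, and because $\sqrt{\log i}=o(\log i)$ this is at most $i^{(q-1)/2}$ for all large $i$. The general term is then bounded by $i^{-q}\cdot i^{(q-1)/2}\cdot \log i = i^{-(q+1)/2}\log i$, and $\sum_i i^{-(q+1)/2}\log i < \infty$ since $q>1$; the finitely many remaining terms are finite, so the series converges.

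For part (c), take $x_i = \alpha/(\zeta_l(q)\,i\log^q i)$, so $\log(1/x_i) = \log i + q\log\log i + O(1)$, whence the lower Mills bound gives $-\Phi^{-1}(x_i) \ge \sqrt{2\log i}-1$ for all large $i$. Thus the general term is at least $e^{-C}\,\dfrac{\log\log i}{i\log^q i}\,e^{C\sqrt{2\log i}}$. This term is eventually decreasing (its logarithmic derivative is eventually negative), so the integral test applies, and the substitution $u=\log x$ converts $\int^{\infty}\frac{\log\log x}{x\log^q x}e^{C\sqrt{2\log x}}\,dx$ into $\int^{\infty}\frac{\log u}{u^{q}}\,e^{C\sqrt{2u}}\,du$; since $e^{C\sqrt{2u}}$ outgrows every power of $u$, the integrand tends to $+\infty$, so the integral — hence the series — diverges. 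Equivalently, Cauchy condensation works, since $2^k$ times the $2^k$-th term is $\asymp k^{-q}(\log k)\,e^{C'\sqrt{k}}\to\infty$.

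The main obstacle — really the only place care is needed — is obtaining clean \emph{two-sided} asymptotics for $\Phi^{-1}$ near $0$ and then correctly positioning the subpolynomial factor $\exp(-C\Phi^{-1}(x_i)) = e^{\Theta(\sqrt{\log i})}$ relative to the dominant polynomial term: in (b) it must be absorbed into the spare exponent $q-1>0$, while in (c) it must overpower the $\log^{q} i$ in the denominator. Once the Mills-ratio inequalities are pinned down, everything else is bookkeeping of the "for all large $i$" thresholds and the short monotonicity check needed for the integral test in (c).
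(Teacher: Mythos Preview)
Your proposal is correct and takes essentially the same route as the paper, with minor packaging differences. The paper invokes the explicit quantile expansion $\Phi^{-1}(x) = -\sqrt{2\log(1/x) - \log(2\log(1/x)) - \log(2\pi)} + o(1)$, whereas you derive the needed two-sided control directly from Mills' ratio; these are equivalent in strength. For (b), the paper compares the general term to $1/(i\log^{q}i)$ while you compare to $i^{-(q+1)/2}\log i$; your comparison is coarser but perfectly adequate since $(q+1)/2>1$. For (c), the paper shows the general term dominates $1/(i\log i)$ and invokes divergence of that series, while you lower-bound and apply the integral test (or Cauchy condensation) directly; both arguments hinge on the same fact that $e^{C\sqrt{\log i}}$ eventually beats any fixed power of $\log i$. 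Your approach is marginally more self-contained, since the Mills inequalities are explicit rather than asymptotic with an unspecified $o(1)$, but there is no substantive difference in the ideas.
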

		
		\begin{proof}
		    We prove this lemma mainly use the following approximation of standard normal quantile and the dominate convergence theorem for series.  Firstly, we introduce the useful approximation, that is when $x \to 0^{+}$, 
			\begin{equation}\label{inverse}
			\Phi^{-1}(x) = -\sqrt{\log{\frac{1}{x^2}} - \log{(\log{\frac{1}{x^2}})} - \log{2\pi}} + o(1).
			\end{equation}
		    \paragraph{Proof for part (a)} For part (a), when $x\to 0^{+}$, using \eqnref{inverse}, we have
			\[
			x\exp{(-C\Phi^{-1}(x))}\log{\frac1x} = x \exp{\left\{C\left(\sqrt{\log{\frac{1}{x^2}} - \log{(\log{\frac{1}{x^2}})}-\log{2\pi}} - o(1)\right)\right\}} \log{\frac1x} \defn f(x)
			\]
			Taking log on $f(x)$, and assuming that $x< 1$, we obtain
			\begin{align}\label{eqa}
				\log{(f(x))} & = \log{x}  + C\sqrt{\log{\frac{1}{x^2}} - \log{(\log{\frac{1}{x^2}})}-\log{2\pi}} + \log{(\log{\frac{1}{x}})}- o(1) \nonumber\\ 
				& = -\log{\frac{1}{x}} + \log{(\log{\frac{1}{x}})} + C \sqrt{2\log{\frac{1}{x}} - \log{(\log{\frac{1}{x}})}- \log{4\pi}} - o(1)\nonumber\\ 
				& < -\log{\frac{1}{x}} + \log{(\log{\frac{1}{x}})} + C \sqrt{3\log{\frac{1}{x}} - 3\log{(\log{\frac{1}{x}})}-\log{4\pi}} - o(1),
			\end{align}
			where the last inequality is obtained using the fact that $y>\log{y^2}\ $ for all $y>0$. Let $z = \sqrt{\log{\frac{1}{x}} - \log{(\log{\frac{1}{x}})}-\frac13\log{4\pi}}$, then $x \to 0^{+}$ implies $z \to \infty$. Therefore, following \eqnref{eqa}, as $x \to 0^{+}$, we have
			\begin{equation}
			    \log{(f(x))} < - z^2 + \sqrt{3}C z - \frac13\log{4\pi} + o(1) \to -\infty,
			\end{equation}
			which implies that $f(x)$ goes to zero as $x$ goes to to zero, as we claimed in part (a). 
			
			\paragraph{Proof for part (b)} For part (b), we consider
			\begin{equation}
			f(i) \defn \frac{i^{q-1}\log^{-q-1}{i}}{\exp{(-C\Phi^{-1}(\alpha i^{-q}/\zeta(q)))}}.    
			\end{equation}
			Again using same trick in (a), we obtain
			\begin{align}\label{eqb}
				\log{(f(i))} & =(q-1)\log{i} -(q+1) \log{(\log{i})}-C\sqrt{\log{(i^q\zeta(q)/\alpha)^2} - \log{(\log{(i^q\zeta(q)/\alpha)^2})}-\log{2\pi}} -o(1) \nonumber\\
				& = (q-1)\log{i} - (q+1)\log{(\log{i})}
				-C\sqrt{2q\log{i} + 2\log{(\zeta(q)/\alpha)} - \log{(2q\log{ i} + 2\log{(\zeta(q)/\alpha)})} -\log{2\pi}} -o(1) \nonumber\\
				& > (q-1)\log{i} - (q+1)\log{(\log{i})} -C\sqrt{2q\log{i} - \log{(2q\log{ i} )}+ 2\log{(\zeta(q)/\alpha)} -\log{2\pi}  } -o(1) \nonumber\\
				& = (q-1)\log{i} - (q+1)\log{(\log{i})}-C\sqrt{2q\log{i}  - \log{(\log{i})} -A(q, \alpha)}-o(1),
			\end{align}
			where the inequality is obtained using the fact that $\log{(\zeta(q)/\alpha)}>0$, and we let $A(q, \alpha) \defn \log(4q\pi) - 2\log{(\zeta(q)/\alpha)}$ in the last step.
			Since $q>1$ and $\sqrt{\log(i)} = o(\log{i})$, therefore as $i$ goes to infinity, we have \eqnref{eqb} goes to infinity, which implies that  $\log{(f(i))}$ goes to infinity. This result gives us
			\begin{equation}
			\exp{(-C\Phi^{-1}(\alpha i^{-q}/\zeta(q)))} = o\left(i^{q-1}\log^{-q-1}{i}\right),
			\end{equation}
			which indicates
			\begin{equation}
			i^{-q}\exp{(-C\Phi^{-1}(\alpha i^{-q}/\zeta(q)))}\log{i} = o(\frac{1}{i\log^{q}{i}}).			    
			\end{equation}
			Therefore, for some $0<c<1$, there exist $N \in \N$, such that for any $i \geq N$, we have \[
			i^{-q}\exp{(-C\Phi^{-1}(\alpha i^{-q}/\zeta(q)))}\log{i} \leq  \frac{c}{i\log^{q}{i}}.
			\]
			Hence the summation 
			\[
			\sum_{i=1}^{\infty} i^{-q}\exp{(-C\Phi^{-1}(\alpha i^{-q}/\zeta(q)))\log{i}} \leq  	\sum_{i=1}^{N-1} i^{-q}\exp{(-C\Phi^{-1}(\alpha i^{-q}/\zeta(q)))\log{i}} + c \sum_{i=N}^{\infty} \frac{1}{i\log^{q}{i}} < \infty,
			\]
			which concludes part (b). 
			
			\paragraph{Proof for part (c)}
			In the end, for part (c), we use the similar technique in (a) and (b). Specifically, we consider
			\begin{equation}
			f(i) \defn \frac{\log^{q-1}{i}}{\exp{(-C\Phi^{-1}(\frac{\alpha}{i\log^{q}{i}}/\zeta_l(q)))}}.   
			\end{equation}
			Assume $i$ is big enough, then using approximation of $\Phi^{-1}$ in \eqnref{inverse}, and taking $\log$ on $f(i)$, we obtain
			\begin{align}\label{eqc}
				\log{(f(i))} & = (q-1)\log{(\log{i})}-C\sqrt{\log{(i\log^{q}{(i)}\zeta_l(q)/\alpha)^2} - \log{(\log{(i\log^{q}{(i)}\zeta_l(q)/\alpha)^2})}-\log{2\pi}} -o(1) \nonumber\\
				& \leq (q-1)\log{(\log{i})}-C\sqrt{\log{(i\log^q{(i)}\zeta_l(q)/\alpha)} -\log{2\pi}} - o(1)\nonumber\\
				& = (q-1)\log{(\log{i})}-C\sqrt{\log{i} + q\log{(\log{i})} + \log{(\zeta_l(q)/2\pi \alpha)} } - o(1)\nonumber\\
				&\to -\infty, \text{ as }\ \ i \to \infty
			\end{align}
			where the first inequality is obtained again using the simple fact that $y > \log{y^2}$ for all $y>0$, and last step uses the fact that $\log{(\log{i})} = o(\sqrt{\log{i}})$. Therefore, we have that $\log{(f(i))}$ goes to negative infinity, which implies $f(i)$ goes to zero. In other words, we obtain
			\[
			\log^{q-1}{i} =	o\left(\exp{(-C\Phi^{-1}(\frac{\alpha}{i\log^{q}{i}}/\zeta_l(q)))}\right),
			\]
			and therefore 
			\begin{equation}\label{eqc2}
			\frac{1}{i\log{i}} =	o\left(\frac{1}{i\log^q{i}}\exp{(-C\Phi^{-1}(\frac{\alpha}{i\log^{q}{i}}/\zeta_l(q)))}\right).		  
			\end{equation}
			Since $\{1/i\log{i}\}_{i=1}^{\infty}$ has infinite summation, it is easy to verify that the summation of terms in $o(\cdot)$ in \eqnref{eqc2} must be infinite too, which concludes part (c). 
		\end{proof}

		\begin{lemma}\label{lem:lemzeta}
			Denote $\zeta(q)=\sum_{i=1}^{\infty}i^{-q}$, and $\zeta^{\prime}$ be the first derivative of $\zeta$ with regard $q$, and assume $q>1$. Then for any fixed $p \geq 1$, we have that
			\begin{itemize}
			\item[(a)]  $0<\zeta(q) < \infty$, and is a smooth function that is monotonically decreasing with $q$;
			\item[(b)]$\zeta^{\prime}(q)  = \sum_{i=2}^{\infty} i^{-q}\log{i}$,  $-\infty<\zeta^{\prime}(q) < 0$, and is a smooth function that is monotonically increasing with $q$.
			\item[(c)]  $\lim_{q \to 1^{+}}\zeta(q) = -\lim_{q \to 1^{+}}\zeta^{\prime}(q)  = \infty; \ \ \ \lim_{q\to\infty}\zeta(q) = 1; \ \ \   \lim_{q\to\infty}\zeta^{\prime}(q) = 0.$
   	       \item[(d)]
			$\lim_{q\to\infty}\frac{\zeta^{\prime}(q)}{p\zeta(q)} = 0^{-}, \quad
			\lim_{q\to 1^{+}}\frac{\zeta^{\prime}(q)}{p\zeta(q)} = -\infty, \quad \lim_{q\to 1^{+}}\frac{\zeta^{\prime}(q)}{\zeta^2(q)} \in (-1, -1/2)$.
			\end{itemize}
		\end{lemma}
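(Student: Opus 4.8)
The plan is to read every claim off two ingredients: elementary integral and $p$-series comparisons for the sums themselves, and uniform convergence on each half-line $[a,\infty)$ with $a>1$ for the term-by-term differentiation. Throughout I use that, once differentiation under the sum is justified, $\zeta'(q)=-\sum_{i\ge2}i^{-q}\log i$.

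For (a) and (b): all summands $i^{-q}$ are positive, and since $x^{-q}$ is decreasing, $\sum_{i\ge2}i^{-q}\le\int_1^\infty x^{-q}\,dx=\tfrac1{q-1}<\infty$, so $0<\zeta(q)<\infty$ for $q>1$. Fixing $a>1$, the series $\sum_i i^{-q}\log^k i$ on $[a,\infty)$ is dominated by $\sum_i i^{-a}\log^k i<\infty$ for every $k\ge0$ (because $i^{-a}\log^k i=o(i^{-(1+a)/2})$ with $(1+a)/2>1$), so by the Weierstrass $M$-test $\zeta$ is $C^\infty$ on $(1,\infty)$ and may be differentiated termwise. This gives $\zeta'(q)=-\sum_{i\ge2}i^{-q}\log i$, which is strictly negative (every term positive, the $i=2$ term strictly so) and finite (for fixed $q>1$ one has $\log i\le C_q i^{(q-1)/2}$, so the sum is at most a constant multiple of $\sum i^{-(q+1)/2}<\infty$), and also $\zeta''(q)=\sum_{i\ge2}i^{-q}\log^2 i>0$, so $\zeta'$ is increasing and $\zeta$ is convex; together with $\zeta'<0$ this makes $\zeta$ decreasing. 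That settles (a) and (b).

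For (c): as $q\downarrow1$, each $i^{-q}\uparrow i^{-1}$ and $i^{-q}\log i\uparrow i^{-1}\log i$, so monotone convergence gives $\zeta(q)\to\sum_i i^{-1}=\infty$ and $-\zeta'(q)\to\sum_{i\ge2}i^{-1}\log i=\infty$, i.e. $\zeta'(q)\to-\infty$. As $q\to\infty$, for $q\ge2$ the summands are dominated by the summable $i^{-2}$ and $i^{-2}\log i$ and tend pointwise to $0$ for $i\ge2$, so dominated convergence gives $\zeta(q)\to1$ and $\zeta'(q)\to0^-$. For (d) I would sharpen the comparisons near $q=1$: the integrals $\int_1^\infty x^{-q}\,dx=\tfrac1{q-1}$ and $\int_1^\infty x^{-q}\log x\,dx=\tfrac1{(q-1)^2}$ give $\tfrac1{q-1}\le\zeta(q)\le1+\tfrac1{q-1}$ and the two-sided bound $-\zeta'(q)=\tfrac1{(q-1)^2}(1+o(1))$ as $q\to1^+$, hence $(q-1)\zeta(q)\to1$ and $(q-1)^2(-\zeta'(q))\to1$. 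Then $\zeta'(q)/(p\zeta(q))\to0^-$ as $q\to\infty$ is just ``numerator $\to0^-$, denominator $\to p>0$''; as $q\to1^+$, $-\zeta'(q)/(p\zeta(q))=\tfrac{(q-1)^{-2}(1+o(1))}{p(q-1)^{-1}(1+o(1))}=\tfrac{1+o(1)}{p(q-1)}\to\infty$, so $\zeta'(q)/(p\zeta(q))\to-\infty$; and for the last limit, $\zeta^2(q)\ge(q-1)^{-2}$ with $-\zeta'(q)\le(q-1)^{-2}$ forces $\zeta'(q)/\zeta^2(q)>-1$, while $-\zeta'(q)$ and $\zeta^2(q)$ both being $(q-1)^{-2}(1+o(1))$ forces $-\zeta'(q)>\tfrac12\zeta^2(q)$ near $q=1$, i.e. $\zeta'(q)/\zeta^2(q)<-\tfrac12$, confining the limit to $(-1,-\tfrac12)$.

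The comparison inequalities and $M$-test bookkeeping are routine; the only delicate point is the lower bound $-\zeta'(q)\ge(q-1)^{-2}(1+o(1))$ used in (d), because $x\mapsto x^{-q}\log x$ is not monotone on $[1,\infty)$ (it increases up to $x=e^{1/q}$), so one must peel off the bounded contribution from $x\le3$ and only then compare the decreasing tail with $\int_3^\infty x^{-q}\log x\,dx=\tfrac1{(q-1)^2}(1+o(1))$. Everything else reduces to geometric and $p$-series tail estimates.
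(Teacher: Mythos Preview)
Your treatment of (a)--(c) is correct and is essentially what the paper does too (it simply cites these as standard). For the first two limits in (d) you take a more direct route than the paper: the paper argues the second limit by contradiction from the blow-up of $\psi_p(q)=\log(\zeta(q)^p)$ as $q\to1^+$, whereas you read both limits off the explicit asymptotics $\zeta(q)\sim(q-1)^{-1}$ and $-\zeta'(q)\sim(q-1)^{-2}$. Your approach is cleaner and actually yields more information.

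There is, however, a genuine gap in the third limit of (d), and it is worth flagging because the issue is with the \emph{statement} as much as with the proof. Your own asymptotics $(q-1)\zeta(q)\to1$ and $(q-1)^2(-\zeta'(q))\to1$ give
\[
\frac{\zeta'(q)}{\zeta^2(q)}=\frac{-(q-1)^{-2}(1+o(1))}{(q-1)^{-2}(1+o(1))}\longrightarrow-1,
\]
so the limit is exactly $-1$, which lies on the \emph{boundary} of the open interval $(-1,-1/2)$, not in its interior. Your ``confining'' step---showing the ratio lies strictly inside $(-1,-1/2)$ for each $q$ near $1$---only pins the limit to the closure $[-1,-1/2]$; it cannot exclude endpoints, and here the left endpoint is attained. (Equivalently, $\varphi(q)=1/\zeta(q)$ extends analytically across $q=1$ with a simple zero, and $\varphi'(1)=\lim_{q\to1^+}\tfrac{1}{(q-1)\zeta(q)}=1$.) The paper argues instead by squeezing $\varphi(q)$ between $(q-1)/2^q$ and $q-1$ and passing to derivatives at $q=1$, but that device has the same defect: two functions agreeing at a point with a strict inequality nearby only yields a \emph{non-strict} inequality between their one-sided derivatives there, so one gets $\varphi'(1)\in[1/2,1]$, not the open interval. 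Separately, the sharp bound $-\zeta'(q)\le(q-1)^{-2}$ you invoke happens to be true near $q=1$ (it amounts to the first Stieltjes constant $\gamma_1$ being negative), but it is not delivered by the naive integral comparison you have in mind, precisely because of the non-monotonicity of $x\mapsto x^{-q}\log x$ on $[1,3]$ that you yourself flag for the \emph{lower} bound.
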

		
		\begin{proof}
		    (a-c) are established well-known results, therefore we omit the proofs and refer readers to resources like  \cite{devlin2002millennium} for more clues.  In the following, we will put our effort on proving (d), using (a-c) as known facts.
		    
			The first equation in (d) is obvious from (c). As for the second equation, given the fixed $p \geq 1$, denote $\psi_p(q) = \log{(\zeta^p(q))}$. From the properties of $\zeta(q)$ in (a-c), we have that $\psi_p(q)$ is also a smooth function with regard $q$, goes to infinity as $q$ goes to one, and is monotonically decreasing with $q$. Therefore, using contradiction, we must have the first derivative of $\psi_p(q)$ decreases to negative infinity as $q$ goes to one, that is
			\[
			\frac{\partial \psi_p(q)}{\partial q} = p \frac{\partial \log{(\zeta(q))}}{\partial q} = \frac{\zeta^{\prime}(q)}{p\zeta(q)} \to -\infty, \quad \text{as } q \to 1^{+}
			\]
			as we claimed.
			
			As for the third equation, denote $\varphi(q) =1/\zeta(q)$. Since $\zeta(q) > \int_{1}^{\infty} x^{-q} \diff x = \frac{1}{q-1}$ for $q > 1$, so $\varphi(q) < q-1$ for $q > 1$. Additionally noting that $\varphi(q) = 0 = q-1$ for $q=1$, we have 
			\[
			\frac{\partial \varphi(q)}{\partial q}\mid_{q=1} < \frac{\partial (q-1)}{\partial q}\mid_{q=1} = 1.
			\]
			On the other hand, $\zeta(q) < \int_{1}^{\infty} 2^q x^{-q} \diff x = \frac{2^q}{q-1}$ for any $q > 1$, so $\varphi(q) > \frac{q-1}{2^q}$ for any $q > 1$. Additionally noting that $\varphi(q) = 0 =\frac{q-1}{2^q}$ for $q=1$, we have 
			\[
			\frac{\partial \varphi(q)}{\partial q}\mid_{q=1} > \frac{\partial \frac{q-1}{2^q}}{\partial q}\mid_{q=1} = 1/2.
			\]
			Since $\frac{\partial \varphi(q)}{\partial q} = -\frac{\zeta^{\prime}(q)}{\zeta^2(q)}$, we have $\lim_{q\to 1^{+}}\frac{\zeta^{\prime}(q)}{\zeta^2(q)}\in (-1, -1/2)$ as claimed.
			
		\end{proof}

		\section{Proof for \thmref{fixed}}\label{app:fixedpf}
		We prove this theorem mainly using \lemref{lemmalim},  \lemref{lemzeta}, and the philosophy of contradiction, together with the results about Gaussian tail behaviour.
		
		\paragraph{Proof for part (a)}
		 Denote $\alpha_x(q) = \alpha x^{-q}/\zeta(q)$, where $x \in [1,\infty), q \in (1,\infty)$. From (a) in \lemref{lemzeta}, $\alpha_1(q)$ is monotonically increasing with $q$; for $x>1$,
        taking derivative of $\alpha_x(q)$ with regard $q$, we have
		\begin{equation}
		     \frac{\partial\alpha_x(q)}{\partial q}  = -\frac{\alpha x^{-q}}{\zeta^2(q)} (\zeta(q) \log{x}  + \zeta^{\prime}(q)),
		\end{equation}
		which shares the same sign with
		\[
		\beta_x(q) \defn -\log(x)-\zeta^{\prime}(q)/\zeta(q).
		\]
		It is easy to verify that $\beta_x(q)$ is monotonically decreasing with $q$, and using result (d) in \lemref{lemzeta}, we have $\lim_{q \to 1^{+}}\beta_x(q) = \infty$, and  $\lim_{q \to \infty}\beta_x(q) = -\log(x)$. Therefore, for $x>1$, $\beta_x(q)$ has unique zero point $q^{o}(x)$, which indicates  $\alpha_x(q)$ is monotonically increasing with $q$ for $q<q^{o}(x)$, and then monotonically decreasing with $q$ for $q>q^{o}(x)$. Additionally, note that $q^{o}(x)$ is decreasing with $x$, and goes to $1$ as $x$ goes to $\infty$.
		
		Then, noting that we can write $q$-series $\seq{\gamma_i}$ as $\gamma_i = \alpha_i(q)$ for all $i$, we have
		\begin{equation}\label{endprof}
		    \EEN{D} = \sum_{i = 1}^N \Phi(\Phi^{-1}(\alpha_x(q)) + \mu_A).    
		\end{equation}
		Since $\Phi$ and $\Phi^{-1}$ are both monotonically increasing function, therefore $\Phi(\Phi^{-1}(\alpha_i(q)) + \mu_A)$ shares the same fluctuation behaviour with $\alpha_i(q)$, that is, monotonically increasing with $q$ when $i=1$; increasing first then decreasing with $q$ when $i>1$.
		
		 we have $\sum_{i\geq2}^N \Phi(\Phi^{-1}(\alpha_i(q)) + \mu_A)$ firstly increases then decreases with $q$, since the derivative with regard $q$ for each term in the summation is a monotonic decreasing function with regard $q$ that starts from positive value and achieves negative value as $q$ goes from $1^{+}$ to $\infty$. Note that  $\Phi(\Phi^{-1}(\alpha_1(q)) + \mu_A)$ is monotonically increasing with $q$, therefore $\EEN{D}$ will be either just monotonically increasing with $q$, or increasing first and then decreases with $q$. In order to settle this, it is suffice to look at the derivative of $\mathbb{E}_2[D]$ with regard $q$, and check whether it achieves negative value at some point. Specifically, we have
		\begin{align}\label{en}
			\frac{\partial\mathbb{E}_2[D]}{\partial q}  &= -C(\pi_A, \mu_A)\sum_{i=1}^{2} \alpha_i(q)\exp{\left(-\mu_A \Phi^{-1}(\alpha_i(q))\right)}\left(\log{i} + \zeta^{\prime}(q)/\zeta(q)\right)\defn e_2(q),
		\end{align}
		\noindent
		where $C(\pi_A, \mu_A) =  \pi_A \exp{(-\mu_A^2/2)}>0$. 
 		Note that $e_2(q)$ shares the same sign with
		\begin{align}\label{neg}
          & - \alpha_1(q)\exp{\left(-\mu_A \Phi^{-1}(\alpha_1(q))\right)} - \alpha \log{2}\alpha_2(q)\exp{\left(-\mu_A \Phi^{-1}(\alpha_2(q))\right)} (\zeta(q)/\zeta^{\prime}(q) + 1/\log(2))\nonumber\\
          &\defn -C_1(q) - C_2 \frac{2^{-q}\log{2}}{\zeta^{\prime}(q)}A_2(q) - C_3\alpha_2(q)A_2(q),
 		\end{align}
 		where $C_1(q) \defn \alpha_1(q)\exp{\left(-\mu_A \Phi^{-1}(\alpha_1(q))\right)}$, $C_2 \defn \alpha$, $C_3 \defn \alpha \log{2}$, $A_2(q) = \exp{\left(-\mu_A \Phi^{-1}(\alpha_2(q))\right)}$. 
 		 
 		Then we analyze each term in \eqref{neg} one by one, as $q\to \infty$. Using (a) in \lemref{lemzeta}, we have that, $\lim_{q\to\infty}C_1(q)= \alpha\exp{\left(-\mu_A \Phi^{-1}(\alpha)\right)}$, which is a positive constant; using (a) in \lemref{lemmalim}, and the fact that $1/\alpha_2(q) > 1$, and $\lim_{q\to \infty}\alpha_2(q)\to 0$, we have $\lim_{q\to \infty}\alpha_2(q) A_2(q) \to 0$; and $\lim_{q\to \infty} A_2(q) \to \infty$. Finally, we look at the term $\frac{2^{-q}\log{2}}{\zeta^{\prime}(q)}$, and we prove that it goes to one as $q$ goes to infinity. Using (b) in \lemref{lemzeta}, we have 
 		\begin{align}
 		    \frac{2^{-q}\log{2}}{\zeta^{\prime}(q)} &= \frac{2^{-q}\log{2}}{\sum_{i=2}i^{-q}\log{i}} = \frac{1}{1+ \frac{1}{\log{2}}\sum_{i=3}{(\frac{i}{2})}^{-q}\log{i}} > \frac{1}{1+ \frac{1}{\log{2}} \int_{x=2}^{\infty} (\frac{x}{2})^{-q} \log{x}} \nonumber\\
 		    &= \frac{1}{1+\frac{1}{\log{2}}  (\frac{1}{1-q}-\frac12)} \to \frac{1}{1-\frac{1}{2\log{2}}}\quad  \textnormal{ as } q \to \infty.  
 		\end{align}
 		Combining the above analysis, we have \eqref{neg} goes to $-\infty$ as $q\to\infty$, and therefore $\lim_{q\to\infty}e_2(q)<0$.

		Therefore, $\mathbb{E}_2[D]$ and hence $\EEN{D}$ must increases with $q$ first and then decreases with $q$ as we claimed, and there must exist an optimal $q^{*}$ which depend on $\mu_A$ and $N$, such that $\EEN{D}$ achieve its maximum. In fact,
		\begin{equation*}
			q^{*}(\mu_A, N) = \textnormal{argzero} \{e_N(q)\}, 
		\end{equation*}
		where \textnormal{argzero}$\{\cdot\}$ represents the zero point of a function.

		In the following, we prove that $q^{*}(\mu_A, N)$ goes to $1^{+}$ as $N$ goes to infinity. Specifically, we proceed this using contradiction. Since $\mu_A$ is fixed constant in this context, we write $q^{*}(\mu_A, N)$ as $q^{*}_N$ for simplicity from now on. Firstly, for any $N \in \N$, we must have
		\begin{equation}
		    \log{N} + \zeta^{\prime}(q^{*}_N)/\zeta(q^{*}_N) > 0,
		\end{equation}
		otherwise $e_N(q^{*}_N)$ will be negative which contradicts the the definition of $q^{*}_N$. Then, for $N_1 > N$, we have
		\begin{align}
		e_{N_1}(q^{*}_N) 
		&= e_{N}(q^{*}_N) -C \sum_{i = N+1}^{N_1}\alpha_i(q^{*}_N)\exp{(-\mu_A \Phi^{-1}(\alpha_i(q^{*}_N)))}(\log{i} + \zeta^{\prime}(q^{*}_N)/\zeta(q^{*}_N)) \nonumber\\
	    &\leq 0 -C \sum_{i = N+1}^{N_1}\alpha_i(q^{*}_N)\exp{(-\mu_A \Phi^{-1}(\alpha_i(q^{*}_N)))}(\log{N} + \zeta^{\prime}(q^{*}_N)/\zeta(q^{*}_N)) < 0.
		\end{align}
        Since $e_{N}$ must be decreasing with $q$ near the zero point for any $N$, therefore we must have $q^{*}_N < q^{*}_{N_1}$, which means $q^{*}_N$ is an monotonically decreasing function of $N$.
		
		We then use contradiction to prove the argument about the limit. If $\lim_{N\to \infty} q^{*}_N \neq 1$, without lose of generality, let's say the limit is some constant $c > 1$. Then, since $q^{*}_N$ is monotonically decreasing with $N$, so we must have that $q^{*}_N > c$ for any $N \in \mathbb{N}$, which means $e_{N}(c) > 0$ for any $N \in \mathbb{N}$, that is
		\begin{equation}
		\sum_{i=1}^{\infty} i^{-c}\exp{(-\mu_A\Phi^{-1}(\alpha i^{-c}/\zeta(c)))}(\log{i} + \zeta^{\prime}(c)/\zeta(c)) > 0.		    
		\end{equation}
		From \lemref{lemmalim}(b), we know that the summation is finite and hence the above inequality implies that
		\begin{equation}
		\sum_{i=1}^{\infty} i^{-c}\exp{(-\mu_A\Phi^{-1}(\alpha i^{-c}/\zeta(c)))}\log{i} > -\zeta^{\prime}(c)/\zeta(c)\sum_{i=1}^{\infty} i^{-c}\exp{(-\mu_A\Phi^{-1}(\alpha i^{-c}/\zeta(c)))},
		\end{equation}
		which can be rearranged to the claim that
		\begin{equation}\label{wrong}
		\frac{\sum_{i=1}^{\infty} i^{-c}\exp{(-\mu_A\Phi^{-1}(\alpha i^{-c}/\zeta(c)))}}{\sum_{i=2}^{\infty} i^{-c}\exp{(-\mu_A\Phi^{-1}(\alpha i^{-c}/\zeta(c)))}\log{i}} >  \frac{\sum_{i=1}^{\infty}i^{-c}}{\sum_{i=2}^{\infty}i^{-c}\log{i}}.		    
		\end{equation}
		However, for any increasing sequence $\seq{c_i}$ with $c_i > 1$ and $\lim_{i \to \infty}c_i = \infty$, and any sequences $\seq{b_i}$, and $\seq{a_i}$, such that $b_i > a_i > 0$, we in fact have
		\begin{equation}
		\frac{\sum_{i=1}^\infty c_i a_i}{\sum_{i=2}^\infty c_i b_i} < \frac{\sum_{i=1}^\infty a_i}{\sum_{i=2}^\infty b_i}.		    
		\end{equation}
		Substituting $a_i=i^{-c}, b_i = i^{-c} \log i$ and $c_i=\exp{(-\mu_A\Phi^{-1}(\alpha i^{-c}/\zeta(c)))}$ into the above fact, we conclude that inequality \eqref{wrong} cannot be true. In conclusion, such a limit $c>1$ does not exist. Therefore, the limit must equal $1$, completing the proof for part (a).
		
		\paragraph{Proof for part (b)}
		
		Denote $t(x)\defn \Phi^{-1}(C x^{-q})$, where $C \equiv \alpha /\zeta(q)$ is some positive constant depend on $q$ and $\alpha$, with $\zeta(q)$ defined in part (a). Taking second derivative of $t$ with regard $x$, we have
		\begin{equation}\label{second}
			\frac{\partial^2 t(x)}{\partial x^2} = \frac{\partial^2 \Phi^{-1}(Cx^{-q})}{\partial x^2} = Cq\frac{(q+1)\phi(\Phi^{-1}(Cx^{-q}))x^q-Cq\Phi^{-1}(Cx^{-q})}{\phi^2(\Phi^{-1}(x^{-q}))x^{2(q+1)}}.
		\end{equation}
		Note that the expression \eqref{second} shares the same sign with $ (q+1)\phi(-t)+q t\overline{\Phi}(-t)$, 
		which is positive when $t < 0$, from the fact that $\overline{\Phi}(y) < \phi(y)/y$ for all $y > 0$. Therefore, we have that, the second derivative is positive when $t(x)<0$, i.e. $x > \sqrt[q]{2C}$. In other words, denote the first derivative of $t$ with regard $x$ as $t^{\prime}$, we have that $t^{\prime}$ is an monotonically increasing function of $x$, for $x > \sqrt[q]{2C}$. 
		
	    On the other hand, for $q$-series $\seq{\gamma_i}$, the testing levels of Alpha-Spending are $\alpha_i \defn \alpha\gamma_i = C i^{-q}$ for all $i \in \N$, and 
		\begin{equation}
		    \Phi(\Phi^{-1}(\alpha_i)+\mu_A) \to 0, \quad \text{as } i \to \infty.
		\end{equation}
		Therefore, for all $N \in \N$, there exists $K(N)> N$, such that
		\begin{equation}\label{kn}
		    \Phi(\Phi^{-1}(\alpha_n)+\mu_A) \leq \alpha_N, \quad \text{ for all } n \geq K(N).
		\end{equation}
		Choose some $N > \sqrt[q]{2C}$, we have
		\begin{equation}\label{tn}
		t(N) - t(N+1) =  t^{\prime}(\epsilon_1) \leq t^{\prime}(\epsilon_2) = t(K(N)) - t(K(N)+1)
		\end{equation}
		holds true for some $\epsilon_1 \in (N, N+1)$, $\epsilon_2 \in (K(N), K(N)+1)$, and $ K(N) > N$, due to the monotonicity of $t^{\prime}$ as we described before. Plugging in the expression for $\alpha_n$, result \eqref{tn} is equivalent to
		\[
		\Phi^{-1}(\alpha_N)-\Phi^{-1}(\alpha_{N+1}) \leq  \Phi^{-1}(\alpha_{K(N)})-\Phi^{-1}(\alpha_{K(N)+1}).
		\]
		Rearranging the terms and using result \eqref{kn}, we have
		\[
		\Phi^{-1}(\alpha_{K(N)+1}) + \mu_A < \Phi^{-1}(\alpha_{K(N)}) + \mu_A - \Phi^{-1}(\alpha_N) + \Phi^{-1}(\alpha_{N+1}) \leq \Phi^{-1}(\alpha_{N+1}).
		\]
		Since $\Phi$ is a monotonically increasing onto mapping, we in turn have
		\[
		\Phi(\Phi^{-1}(\alpha_{K(N)+1}) + \mu_A) \leq \Phi(\Phi^{-1}(\alpha_{N+1})) = \alpha_{N+1}.
		\]
		Iteratively using the same tricks, we can prove that
		\[
		\Phi(\Phi^{-1}(\alpha_{K(N)+i}) + \mu_A) \leq \Phi(\Phi^{-1}(\alpha_{N+i})) = \alpha_{N+i}\ \ \text{ for all } i \in \mathbb{N}.
		\]
		Therefore, $\EE{D}\leq \sum_{j=0}^{K(N)}\pi_A\Phi(\Phi^{-1}(\alpha_{j})+\mu_A) + \sum_{i=1}^{\infty}\pi_A\alpha_{N+i} \leq \pi_A(C + \alpha)< \infty$ as we claimed. Also, as indicated by part (a), $\EE{D}$ is monotonically decreasing with $q$. Hence we finished the proof for part (b).

    \section{Optimal configuration for Alpha-Spending with varying signal strength and density}\label{app:varied}
	
	When we are provided some prior information about the hypothesis, we should be allowed to achieve optimal power by choosing some more ambitious (i.e. decays faster) sequence $\seq{\gamma_i}$. Specifically, we consider two forms of prior knowledge: (a) we know information about which hypotheses are more likely to be nulls; (b) we know information about the signal strength of different non-nulls. These two forms of prior knowledge just correspond to the setting that $\seq{\pi_{Ai}}$ and $\seq{\mu_{Ai}}$ are varied constant sequences. In this setting, we prove that when the sequences $\seq{\pi_{Ai}}$ and $\seq{\mu_{Ai}}$ are nice (i.e. satisfies some reasonable conditions), there exists an optimal sequence $\seq{\gamma_i}$ for maximizing the  power, where each term $\gamma_i$ is function of $\pi_{Ai}$ and $\mu_{Ai}$. We state the specific results as below:
	\begin{theorem}\label{thm:alpha}
		For $\seq{\pi_{Ai}}$ and $\seq{\mu_{Ai}}$, where for all $i\in \N$, $\pi_{Ai}\in (0,1)$ and $\mu_{Ai} > \epsilon$ for some $\epsilon > 0$, we define
		\begin{equation}\label{hidef}
		h_i(\eta) = \frac{1}{\mu_{Ai}}\log{\frac{\eta}{\pi_{Ai}}} + \mu_{Ai}, \quad \text{where } \eta \in (0,\infty). 
		\end{equation}
		If there exist a sequence $\seq{1/b_i}$ with finite summation, such that
		\begin{equation}\label{cond1}
			\frac{\sqrt{\log{b_i}}}{h_i(1)} = o(1),
		\end{equation}
		then there exists some $\eta^{*} > 0$, such that sequence $\seq{\gamma_i^{*}}$ where
		\begin{equation}\label{exp-a}
			\gamma_i^{*} = g_i(\eta^{*}) \defn  \frac{1}{\alpha}\Phi(-h_i(\eta^{*}))
		\end{equation}
		sums to one and achieves the highest power of Alpha-Spending. 
	\end{theorem}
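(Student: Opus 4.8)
The plan is to read Theorem~\ref{thm:alpha} as the solution of a constrained concave maximization, and to certify the sequence $\seq{\gamma_i^*}$ of~\eqref{exp-a} as its maximizer by a supporting-line (Lagrangian) argument. Writing $u_i = \alpha\gamma_i$, equation~\eqref{ED} says that the power of Alpha-Spending in the setting of \defref{meantest} with varying $\mu_{Ai},\pi_{Ai}$ equals $\EE{D} = \sum_{i=1}^{\infty} g_i(u_i)$, where $g_i(u) \defn \pi_{Ai}\,\Phi\!\left(\Phi^{-1}(u) + \mu_{Ai}\right)$, and we maximize over $u_i \ge 0$ subject to $\sum_i u_i = \alpha$. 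The first step is the analytic lemma that each $g_i$ is \emph{concave} and increasing on $(0,1)$: differentiating gives $g_i'(u) = \pi_{Ai}\exp\!\left(-\mu_{Ai}\Phi^{-1}(u) - \mu_{Ai}^2/2\right)$, which is positive and, because $\mu_{Ai} > 0$ and $\Phi^{-1}$ is increasing, strictly decreasing in $u$; moreover $g_i'(0^{+}) = +\infty$, so at any optimum no $u_i$ sits on the boundary.

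Granting concavity, the stationarity condition for this (essentially $\ell^1$-constrained) problem is that there is a single multiplier $\eta > 0$ with $g_i'(u_i^{*}) = \eta$ for every $i$. Solving $\pi_{Ai}\exp(-\mu_{Ai}\Phi^{-1}(u_i^{*}) - \mu_{Ai}^2/2) = \eta$ for $\Phi^{-1}(u_i^{*})$ and applying $\Phi$ pins down each $\alpha\gamma_i^{*}$ up to the scalar $\eta$ and produces exactly a sequence of the form~\eqref{exp-a} with $h_i$ as in~\eqref{hidef}. What then remains is to choose $\eta = \eta^{*}$ so that $\seq{\gamma_i^{*}}$ is feasible, i.e. so that $G(\eta) \defn \sum_{i=1}^{\infty} \Phi(-h_i(\eta)) = \alpha$. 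Here the hypotheses do the real work: since $\mu_{Ai} > \epsilon$, the shift $h_i(\eta) - h_i(1) = \tfrac{1}{\mu_{Ai}}\log\eta$ is bounded uniformly in $i$ on any compact range of $\eta$, while condition~\eqref{cond1} together with $\sum_i 1/b_i < \infty$ (which forces $h_i(1) \to \infty$ with $h_i(1)^2/\log b_i \to \infty$) gives, via the Gaussian tail bound $\Phi(-t) \le \phi(t)/t$, that $\sum_i \Phi(-h_i(\eta)) < \infty$ for every $\eta > 0$. Dominated convergence then makes $G$ continuous and strictly decreasing on $(0,\infty)$, with $G(\eta)\to\infty$ as $\eta\to 0^{+}$ (each term $\to 1$) and $G(\eta)\to 0$ as $\eta\to\infty$, so the intermediate value theorem yields a unique $\eta^{*}$ with $G(\eta^{*}) = \alpha$, and hence a feasible $\seq{\gamma_i^{*}}$.

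Finally I would upgrade this stationary point to a global maximizer by a tangent-line argument that bypasses any appeal to finite-dimensional KKT theory. For any feasible competitor $\seq{\gamma_i}$, concavity of $g_i$ gives $g_i(\alpha\gamma_i) \le g_i(\alpha\gamma_i^{*}) + g_i'(\alpha\gamma_i^{*})\,\alpha(\gamma_i - \gamma_i^{*})$; since $g_i'(\alpha\gamma_i^{*})$ is the \emph{same} value $\eta^{*}$ for all $i$ by the stationarity construction, summing the first $N$ of these and using $\sum_{i \le N}(\gamma_i - \gamma_i^{*}) \to 0$ (both series sum to one) yields $\EE{D}(\gamma) \le \EE{D}(\gamma^{*})$ on letting $N \to \infty$ --- an inequality that remains valid even if $\EE{D}(\gamma^{*}) = +\infty$. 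The main obstacle I anticipate is the careful bookkeeping of these infinite sums: verifying that the stationarity equation actually returns a \emph{summable} sequence $\seq{\gamma_i^{*}}$ (this is precisely what condition~\eqref{cond1} is engineered to guarantee) and justifying the interchanges of summation with limits in the definition of $G$ and in the tangent-line step. By contrast, the concavity lemma, the monotonicity of $G$, and the comparison itself are routine once the setup is in place.
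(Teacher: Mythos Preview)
Your proposal is correct and follows essentially the same route as the paper: set up the Lagrangian, solve the stationarity equations to obtain the form~\eqref{exp-a}, and then use condition~\eqref{cond1} together with the Gaussian tail to show $G(\eta)=\sum_i \Phi(-h_i(\eta))$ is finite, continuous, and strictly decreasing with the right endpoint limits, so the intermediate value theorem delivers $\eta^{*}$. Your version is in fact more complete than the paper's: the paper simply invokes Lagrange multipliers without checking concavity or arguing that the stationary point is a global maximizer, whereas your explicit concavity computation and tangent-line comparison close that gap rigorously in the infinite-dimensional setting.
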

	\noindent
	 The condition \eqref{cond1} is in fact easy to satisfy. For example, let $\pi_{Ai} = i^{-q}$ and $\mu_{Ai} = \log{i^{q}}$, where $q > 1$, then we have condition \eqref{cond1} holds true for any $b_i \propto i^{r}$, where $r<q$. \thmref{alpha} in fact tells us that, as long as the sequences $\seq{\pi_{Ai}}$ and $\seq{\mu_{Ai}}$ satisfy some reasonable conditions, we can find a closed form of optimal sequence $\seq{\gamma_i^{*}}$ for power function, and this optimal sequence $\seq{\gamma_i^{*}}$ does not need to be monotonically decreasing. Another interesting fact implied by \thmref{alpha} is that, if $\pi_{Ai}$ is fixed as $\pi_A$, then the optimal $\gamma_i^{*}$ remains the same for $\pi_A$ of different values, which is easy to verify from the form of $\gamma_i^{*}$. 
	
      \begin{proof}
		We would like to obtain the optimal sequence $\seq{\gamma_i^{*}}$ to maximize $\EE{D}$, under the constraint of $\sum_{i=1}^{\infty}\gamma_i^{*} = 1$. Naturally, we resort to Lagrange Multiplier for solutions. Specifically, the corresponding Lagrange multiplier is:
		\[
		\mathcal{L} = \sum_{i=0}^{\infty}\pi_{Ai}\Phi(\Phi^{-1}(\alpha\gamma_i) + \mu_{Ai}) - r(\sum_{i=1}^{\infty}\gamma_i^{*} -1).
		\]
		Taking derivative with regard each $\gamma_{i}$ and set them to zeros, where $i = 1, 2, \dots$, we have:
		\begin{equation}\label{bestgamma}
		\gamma_i^{*} =\frac{1}{\alpha} \Phi(\frac{1}{\mu_{Ai}}\log{(\alpha\pi_{Ai}/r)} -  \frac{\mu_{Ai}}{2}) = \frac{1}{\alpha}\Phi(-h_i(\eta^{*}))\defn y_i(\eta^{*})		    
		\end{equation}
		where $\eta^{*} \defn r/\alpha$ is chosen to satisfy the constraint $\sum_{i=1}^{\infty}\gamma_i^{*} = 1$, and $h_i(\eta)\defn\frac{1}{\mu_{Ai}}\log{(\pi_{Ai}/\eta)} -  \frac{\mu_{Ai}}{2}$. Next, we prove that condition \eqref{cond1} guarantees that such valid $\eta^{*}$ exists, by proving that given condition \eqref{cond1}, there will always exists some $\eta^{*} >0$, such that $\seq{\gamma_i^{*}}$ defined in \eqnref{bestgamma} have summation equals to one. 
		
		First we want to prove that condition \eqref{cond1} implies $\log{b_i}/|h_i(\eta)^2| = o(1)$ for all $\eta > 0$. Note that 
	    \begin{equation}\label{hi}
	        h_i(\eta) = \frac{1}{\mu_{Ai}}\log{(1/\pi_{Ai})} +\frac{1}{\mu_{Ai}}\log{\eta}+ \frac{\mu_{Ai}}{2} = h_i(1)\left( 1 + \frac{\log{\eta}}{\log{1/\pi_{Ai}} + \frac{\mu_{Ai}^2}{2}}\right) \defn  h_i(1) (1 + g_i(\eta)).
	    \end{equation}
	    The argument is obvious for $\eta>1$, since $\mu_{Ai} >0$ for any $i$, therefore $\log{b_i}/|h_i(\eta)^2| < \log{b_i}/|h_i(1)^2|$.
		
		For the case $0<\eta<1$, we consider the following. Condition \eqref{cond1} implies that
		\begin{equation}\label{hilim}
		 \lim_{i\to \infty}h_i(1) = \infty.  
		\end{equation}
		From the expression of $h_i(\eta)$ in \eqnref{hi}, result \eqref{hilim} indicates at least one of the following is true:
		\begin{equation}\label{twolim}
		\lim_{i \to \infty}\mu_{Ai} = \infty, \quad \lim_{i \to \infty}\log{(1/\pi_{Ai})}/\mu_{Ai} = \infty.
		\end{equation}
		If $\mu_{Ai}$ goes to infinity, then obviously $g_i(\eta)$ in \eqnref{hi} goes to zero; If $\mu_{Ai}$ is upper bounded, then $\pi_{Ai}$ must go to zero from \eqnref{twolim}, which leads to $g_i(\eta)$ goes to zero as well; If $\seq{\mu_{Ai}}$ is neither of the above two cases, then it must has a subsequence $\{\mu_{Ai_k}\}_{k = 1}^{\infty}$ that goes to infinite, and is finite for the rest terms, since we require it to be lower bounded by some positive constant. Therefore,  $\pi_{Ai}$ goes to zero for $i \neq i_k$, and thus $g_i(\eta)$ goes to zero for $i \neq i_k$. At the same time we also have $g_{i_k}(\eta)$ goes to zero, since $\mu_{Ai_k}$ goes to infinity. Therefore, we have $g_i(\eta)$ always goes to zero for any $\eta>0$, which indicates $h_i(\eta) = O( h_i(1))$ for any $\eta>0$, and consequently
		\begin{equation}\label{hieta}
		\log{b_i}/|h_i(\eta)^2| = o(1),\quad \text{ for all } \eta>0.		    
		\end{equation}
		Equation \eqref{hieta} indicates there exists $N \in \N$, such that for any $i > N$ we have that
		\[
		\Phi(-h_i(\eta))  \leq c\exp{(-h_i(1)^2/2)} \leq c\exp{(-\log{b_i})} = \frac{c}{b_i},
		\]
		where $c$ is some positive constant. Therefore, we have that, for all $\eta>0$,
		\[
		\sum_{i=1}^{\infty} y_i (\eta)\defn \frac{1}{\alpha} \sum_{i=1}^{\infty} \Phi(-h_i(\eta)) \leq  \frac{1}{\alpha}(\sum_{i=1}^{N} \Phi(-h_i(\eta))+ c\sum_{i>N}\frac{1}{b_i}) < \infty.
		\] 
		Since $h_i$ is an continuous increasing function with regard $\eta$ for all $i$, we have that $\sum_{i=1}^{\infty}  y_i$ is a continuous decreasing function with regard $\eta$.  Additionally, note that $h_i(0) = -\infty$, and $h_i(\infty) = \infty$, which implies $ y_i(0) = 1/\alpha$ and $y_i(\infty) = 0$.  Therefore, $\sum_{i=1}^{\infty}y_i $ is a one to one mapping from $(0,\infty)$ to $(0, \infty)$, which indicates that there must exist some $\eta^{*} > 0$, such that $\sum_{i=1}^{\infty}y_i(\eta^{*})  = 1 \in (0, \infty)$. Hence we finished proving that,  under condition \eqref{cond1}, there always exists a valid $\eta^{*}$ such that the optimal seqeunce $\seq{\gamma_i^{*}}$ where $\gamma_i^{*} = y_i(\eta^{*})$ for all $i$ sums to one. 
		\end{proof}

      \section{Proof for \thmref{more}} \label{app:pfmore}
		We prove this theorem mainly using the weak law of large numbers, which in turn allow us to deduce sufficient conditions with regard parameters $\tau$ and $\lambda$ for the arguments being true. Then we derive the closed form of these sufficient conditions, and comment on their properties.  
		
		\paragraph{Proof for part (a)} Firstly, in order to prove argument (a), we denote the individual testing level for the $i$-th $H_i$ given by Alpha-Spending as $\alpha_i$, and as $\widetilde{\alpha}_i$ given by our new method--- ADDIS-Spending. Specifically, we have
		\begin{equation*}
		   \alpha_i = \alpha \gamma_i, \quad \widetilde{\alpha}_i = \alpha (\tau-\lambda)  \gamma_{1 + \sum_{j < i} S_j-C_j},
		\end{equation*}
		where $\seq{\gamma_i}$ is some nonnegative sequence that sums to one. We particularly assume it is a log-$q$-series as defined in \defref{q-series}. Then, denoting $S^t = \sum_{j \leq t} S_j$ and denoting $C^t = \sum_{j \leq t} C_j$, using the weak law of large numbers, we have
		\begin{align}\label{probconv}
     	\frac{\widetilde{\alpha}_t}{\alpha_t} &= \frac{(\tau-\lambda) \gamma_{1+S^{t-1} - C^{t-1}}}{\gamma_t} \nonumber\\
     	&=(\tau-\lambda) \frac{t\log^q{t}}{(1+S^{t-1}-C^{t-1})\log^{q}{(1+S^{t-1}-C^{t-1})}}\nonumber\\
			& \geq  (\tau-\lambda) \frac{t}{1 + S^{t-1}-C^{t-1}} = (\tau-\lambda)  \left( \frac{1}{t}(1+\sum_{i< t}\one{ \lambda< P_i \leq \tau}) \right)^{-1} \nonumber\\
			&\xrightarrow{p} (\tau-\lambda) \left( \PP{ \lambda < P_i \leq \tau}\right)^{-1} \defn C(\lambda, \tau),
		\end{align}
		where $\xrightarrow{p}$ means converging in probability. In other words, \eqref{probconv} implies that, for any $\epsilon > 0$, $\delta > 0$, there exists $N(\delta, \epsilon) \in \mathbb{N}$, such that for any $t>N$, we have
		\[
		\PP{\left|\frac{t (\tau-\lambda) }{1 + S^{t-1} -  C^{t-1}} - C(\lambda, \tau)\right| > \epsilon} \leq \delta.
		\]
		If $C(\lambda, \tau)>1$, then there exists $\epsilon^{*}>0$, such that $C(\lambda, \tau) \geq 1+2\epsilon^{*}$. Given such $\epsilon^{*}$, and any $\delta > 0$, we have that for any $t> N(\epsilon^{*}, \delta)$, with probability at least 1- $\delta$,
		\[\frac{\widetilde{\alpha}_t}{\alpha_t} \geq \frac{t((\tau-\lambda))}{1 + S^{t-1}-C^{t-1}} \geq C(\lambda, \tau)-\epsilon^{*} \geq 1+\epsilon^{*}.\] 
		This result in turn gives us that, with probability at least $1-\delta$, for any $n > N$, we have
		\begin{equation}\label{disproof}
			\begin{split}
				&\ \ \ \ \  \EE{D_{\textnormal{ADDIS}}(\lambda, \tau)} - \EE{D_{\textnormal{spend}}}\\ 
				& = \pi_A \left[ \sum_{i=2}^{\infty} \Phi(\Phi^{-1}(\widetilde{\alpha}_i)+ \mu_A) -  \Phi(\Phi^{-1}(\alpha_i)+ \mu_A) \right] \\
				&\geq \pi_A \left[ \sum_{2\leq i\leq N} \Phi(\Phi^{-1}((\tau-\lambda)\alpha_i)+ \mu_A) -  \Phi(\Phi^{-1}(\alpha_i )+ \mu_A) + \sum_{i > N} \Phi(\Phi^{-1}((1+\epsilon^{*})\alpha_i ) + \mu_A) -  \Phi(\Phi^{-1}(\alpha_i) + \mu_A) \right]\\
				& \geq \pi_A \exp{(-\mu_A^2/2)} \left[\left((\tau-\lambda)- 1\right)\sum_{2\leq i\leq N} \exp{\left(-\mu_A\Phi^{-1}((\tau-\lambda)\alpha_i)\right)}\alpha_i  + \epsilon^{*}\sum_{i > N} \exp{\left(-\mu_A\Phi^{-1}((1+\epsilon^{*})\alpha_i)\right)}\alpha_i \right],
			\end{split}
		\end{equation}
		where the first inequality is obtained using the fact that $\Phi(\Phi^{-1}(x) + \mu_A)$ is an monotonically increasing function with regard $x$, and that $\widetilde{\alpha}_i> (\tau-\lambda)\alpha_i$ by construction, while $\widetilde{\alpha}_i\geq(1+\epsilon)\alpha_i$ for $i > N$. The second inequality comes from the smoothness of function $\Phi(\Phi^{-1}(x) + \mu_A)$ with regard $x$, and the fact that its derivative with regard $x$ is also a monotonically increasing function.
		
		From part(c) in \lemref{lemmalim}, we know that the second summation in \eqref{disproof} is unbounded, while the first summation is a finite constant, therefore expression  \eqref{disproof} is bigger than zero. In conclusion, as long as  $C(\lambda, \tau)>1$, we will have $\EE{D_{\textnormal{ADDIS}}(\lambda, \tau)} \geq \EE{D}$ with probability at least $1-\delta$ for any $\delta > 0$, which is saying,  $\EE{D_{\textnormal{ADDIS}}(\lambda, \tau)} \geq \EE{D}$ with probability one if $C(\lambda, \tau)>1$.\\
		
		\noindent
		In the following, we derive the closed form for $\tau$ to satisfy the sufficient condition for our claim--- $C(\lambda, \tau)>1$. Denote $\mathcal{I}\defn \{\tau \in [0,1]:  C(\lambda,\tau)>1\}$. Note that
		\begin{equation}\label{Clt}
		    C(\lambda,\tau)>1\ \Leftrightarrow\ (\tau-\lambda) (G(\tau) - G(\lambda))^{-1} > 1\ \Leftarrow\ \tau \geq G(\tau),\ \lambda \leq G(\lambda),\ 0 < \tau-\lambda < 1.
		\end{equation}
	    Denote $J(x) \defn x - G(x)$,
		and $G(x) = (1-\pi_A)\Phi(\Phi^{-1}(x)+\mu_N) + \pi_A \Phi(\Phi^{-1}(x)+\mu_A)$, which is the joint CDF of both nulls and non-nulls, where $\Phi$ is the CDF of standard Gaussian as always. 
		
		As an first obvious fact, $J(0) = J(1) = 0$. Then, note that  $G^{\prime}(x) = C_1 \exp{(c_1 \Phi^{-1}(x))} + C_2 \exp{(-c_2\Phi^{-1}(x))}$, where $C_1, C_2, c_1, c_2 $ are some positive constants depend on $\mu_A, \mu_N, \pi_A$. Letting $y = \exp{(\Phi^{-1}(x))}$, we have
		\begin{equation}
		J^{\prime}(x)  = h(y) \defn 1 - C_1 y^{c_1} - C_2 \frac{1}{y^{c_2}}.
		\end{equation} 
		Since $y > 0$, so $h(y)$ is increasing with $y$ first and then decreasing with $y$, that is  $J^{\prime}(x)$ is increasing with $x$ first and then decreasing with $x$. On the other hand, particularly, using the tail behaviour of Gaussian, it is also easy to verify that  
		\begin{equation}
		\lim_{x \to 0} J^{\prime}(x) =  \lim_{x \to 1} J^{\prime}(x) =  -\infty.
		\end{equation}
		Based on above discussion, we can conclude that, $J$ must be a continuous function that is zero at zero, decreasing with $x$ first, and then increasing to some positive constant, and then decreasing to zero at the endpoint one. Denote the only zero point of $J$ that is not endpoint as $c^{*}$, it is now obvious that 
		\[J(x)\geq 0 \Leftrightarrow x \in \{0\}\cup [c^{*}, 1], \textnormal{ and } J(x) \leq 0 \Leftrightarrow x \in [0, c^{*}] \cup\{1\}.\]
		 
		Therefore, the sufficient condition for $C(\lambda, \tau) >1$ in \eqref{Clt} is equivalent to 
		\[
		\lambda \in [0, c^{\star}),\quad \tau \in (c^{\star}, 1],\quad \tau-\lambda < 1,
		\]
		which concludes our argument.

		By construction, the lower bound $c^{*}$ depends on parameters $\pi_A, \mu_N$ and $\mu_A$. From the following intuitive interpretation of $J$, we will see that it is in fact increasing with $\pi_A, \mu_N$ and $\mu_A$. Since $\Phi(\Phi^{-1}(x) + \mu_N) < x$, $\Phi(\Phi^{-1}(x) + \mu_A) > x$, therefore, $\pi_A$ represents the weight of the part that makes $J(x)$ smaller, which indicates that $J(x)$ should decreases with $\pi_A$. Consequently, as $\pi_A$ increases, the zero point $c^{*}$ will increase, since $J$ is increasing near the zero point. Similarly, one may argue that $c^{*}$ is an increasing function of $\mu_A$ and $\mu_N$. An additional interesting fact that is easy to derive using the above steam of analysis is that, as $\mu_N$ achieves exactly zero, that is the nulls are exactly uniform, then $c^{*}$ is exactly one, for any $\pi_A \in (0,1)$, and $\mu_A >0$. 
		
		


	\section{Application of adaptivity and discarding}\label{app:addisapp}
	In fact, the ideas of adaptivity and discarding in \secref{ADDIS-Spending} can also be applied to methods in \secref{current} to develop more powerful variants. Recall the definitions in and right after  \eqnref{indicators}, we present the following examples.
	\begin{itemize}
		\item{\textbf{Discard-Sidak}} For FWER level $\alpha$, let the fileration $\F^i = \sigma(S_{1:i}, R_{1:i})$ for some predictable discarding level $\seq{\tau_i} \in (0,1)$ where $\tau_i \geq \alpha$ for all $i$. Discard-Sidak then tests each hypothesis $H_i$ at level $\alpha_i$, where
		\begin{equation}
			\alpha_i = \tau_i(1 - (1-\alpha)^{\beta_i}), \quad \text{with } \sum_{j \in \S} \beta_j \leq 1.
		\end{equation}
		\item{\textbf{Adaptive-Sidak}} For FWER level $\alpha$, let the fileration $\F^i = \sigma\{C_{1:i}, R_{1:i}\}$ for some predictable candidate level $\seq{\lambda_i} \in (0,1)$. Adaptive-Sidak then tests each hypothesis $H_i$ at level $\alpha_i$, where 
		\begin{equation}
			\alpha_i = 1 - (1-\alpha)^{\beta_i}, \quad \text{with } \sum_{j \notin \C} \beta_j/(1-\lambda_i) \leq 1.
		\end{equation}

		\item{\textbf{ADDIS-Sidak}} For FWER level $\alpha$, let the fileration $\F^{i} = \sigma(S_{1:i}, C_{1:i}, R_{1:i})$ for some predictable candidate level $\seq{\lambda_i}$ and discarding level $\seq{\tau_i}$ where we require $\max\{\lambda_i, \alpha\} \leq \tau_i  \in (0,1)$  for all $i$. ADDIS-Sidak then tests each hypothesis $H_i$ at level $\alpha_i$, where 
		\begin{equation}
			\alpha_i =\tau_i(1 - (1-\alpha)^{\beta_i}), \quad \text{with }\sum_{j \in \S \setminus{\C}} \beta_j/(1-\lambda_i) \leq 1.
		\end{equation}
		
		\item{\textbf{Discard-Fallback}}
		For FWER level $\alpha$, let the fileration $\F^i = \sigma(S_{1:i}, R_{1:i})$ for some predictable discarding level $\seq{\tau_i}$ where each term $\tau_i \geq \alpha$. Then Discard-Fallback tests $H_i$ at level 
		\begin{equation}
			\alpha_i := \tau_i (\alpha \gamma_{1+\sum_{j<i}S_i} + \sum_{k=1}^{i-1}w_{\delta_k, i}R_{\delta_k} \alpha_{\delta_k}),
		\end{equation} 
		where $\delta_k$ is the index of the last hypothesis that is not discarded before $k+1$, and $\{w_{ki}\}_{i=k+1}^{\infty}$ is some infinite sequence that is nonnegative and sums to one for all $k \in \N$.
	\end{itemize}
	
	\noindent 
	Like methods in \secref{ADDIS-Spending}, all the variants mentioned above have FWER control when all null $p$-values are independent of each other and of non-nulls. Specifically, we present the following \propref{thmextend}.
	\begin{proposition} \label{prop:thmextend}
		When all null $p$-values are independent of each other and of non-nulls, Adaptive-Sidak controls the FWER.
		Additionally, if the nulls are uniformly conservative \eqref{conserve-def}, then we have FWER control for Discard-Sidak, ADDIS-Sidak and Discard-Fallback.
	\end{proposition}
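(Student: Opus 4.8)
The plan is to treat the three Šidák-type procedures (Adaptive-Sidak, Discard-Sidak, ADDIS-Sidak) by a single supermartingale argument that is the multiplicative analogue of the proofs of \propref{discard}, \propref{thmadaptive} and \thmref{local}, and to treat Discard-Fallback by grafting the discarding device onto the graphical-procedure argument of \appref{recycling}. Fixing the (unknown) null set $\nulls$, as is legitimate for strong control, note that in each Šidák variant we have $\alpha_i<\tau_i$ (with $\tau_i\equiv 1$ for Adaptive-Sidak), so a false rejection is necessarily a selected rejected null and $\{V=0\}=\bigcap_{i\in\nulls}\{R_i=0\}=:E_\infty$. Writing $E_i$ for the event that no null with index $<i$ has been rejected, the crucial structural point is that $\one{E_i}$ is measurable with respect to the relevant filtration $\F^{i-1}$. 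A union bound over nulls is \emph{too lossy} here (it only yields $\PP{V\ge 1}\le\log\frac{1}{1-\alpha}$, which slightly exceeds $\alpha$), so one must keep the Šidák product intact.

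Concretely, I would introduce the process
\[
X_i \;=\; \one{E_i}\prod_{j\in\nulls,\ j<i} g_j,
\]
where $g_j$ is a predictable-up-to-$\F^{j-1}$ correction whose value depends only on which ``cell'' $P_j$ landed in: $g_j=1$ whenever $P_j$ was discarded ($P_j>\tau_j$) or was a candidate ($P_j\le\lambda_j$), and $g_j=(1-\alpha)^{-\beta_j/(1-\lambda_j)}$ (respectively $\tfrac{\tau_j}{\tau_j-\alpha_j}$ for Discard-Sidak) when $P_j$ was selected but not a candidate. One checks $X_i=1$ at the first null index, $X_{i+1}=X_i$ at non-null steps, and the whole argument reduces to the one-step inequality $\EE{g_i\one{P_i>\alpha_i}\mid\F^{i-1}}\ge 1$ for $i\in\nulls$. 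Expanding this conditional expectation over $\{P_i\le\lambda_i\}$, $\{\lambda_i<P_i\le\tau_i\}$, $\{P_i>\tau_i\}$ and using (a) validity of the null for Adaptive-Sidak, or uniform conservativeness in the conditional form of \eqnref{conserve-def} for the others (legitimate since $\alpha_i,\lambda_i,\tau_i\in\F^{i-1}$ and $P_i\perp\F^{i-1}$), plus (b) the elementary bounds $e^x-1\ge x$ and $1-e^{-x}\le x$, it collapses to a scalar inequality in $u=1-\alpha$, $\beta_i$, $\lambda_i$ (and $\tau_i$). For the purely adaptive case this is exactly $\mu u^{-\beta/\mu}+u^{\beta}\ge 1+\mu$ with $\mu=1-\lambda_i$, which I would prove by writing $\mu(u^{-\beta/\mu}-1)\ge\beta\log(1/u)\ge 1-u^{\beta}$.

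Granting the submartingale property, $\EE{X_n}$ is nondecreasing and $X_n$ is uniformly bounded by $(1-\alpha)^{-1}$ because the budget constraint ($\sum_{j\in\S\setminus\C}\beta_j/(1-\lambda_j)\le1$, and its analogues) holds \emph{pathwise} by construction; bounded convergence then gives $\EE{\one{V=0}\prod_{j\in\nulls}g_j}\ge 1$, and on $\{V=0\}$ the same constraint forces $\prod_{j\in\nulls}g_j\le(1-\alpha)^{-1}$, hence $\PP{V=0}\ge 1-\alpha$. For Discard-Fallback the levels are additive with recycling rather than multiplicative, so instead I would follow \appref{recycling}: truncate to the first $n$ hypotheses, recognise the truncated procedure as the (infinite) graphical procedure of \citet{bretz2009graphical} applied along the selected indices with initial masses $\alpha_i/\tau_i$ and the given transfer weights — discarding merely re-indexes the transfer graph over $\S$ — and combine the graphical-procedure FWER guarantee with the discarding estimate $\PP{P_i\le\alpha_i\mid S_i=1,\F^{i-1}}\le\alpha_i/\tau_i$, then let $n\to\infty$ as in \eqnref{pfrecycling}.

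The main obstacle will be matching the correction factors $g_j$ and the accompanying bookkeeping exactly to the definitions: the compensation for candidate or discarded nulls cannot be a constant — it must be the specific predictable factor that makes the submartingale step and the final product bound go through simultaneously — and the scalar inequality has to hold for \emph{all} admissible $(\beta_i,\lambda_i,\tau_i)$, which is where a naive multiplicative analogue of the Spending proofs fails. I would also verify the mild regularity $\alpha_i\le\lambda_i\le\tau_i$ (automatic for the concrete examples of each procedure) so that the cell decomposition used in the one-step inequality is the one claimed; the general case, if needed, requires only a routine case split on the sign of $\lambda_i-\alpha_i$.
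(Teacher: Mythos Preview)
Your approach is genuinely different from the paper's. For the three Sidak variants the paper invokes a positive-association inequality (\lemref{pos-asso}) to factorise $\EE{\prod_{i\in\nulls}\one{P_i>\alpha_i}}\ge\prod_{i\in\nulls}\EE{\one{P_i>\alpha_i}}$, bounds each factor by $(1-\alpha)^{\beta_i}$, and then pushes the adaptivity/discarding corrections into the exponent via Jensen. Your submartingale keeps the product intact and compensates multiplicatively step by step; it avoids the positive-association lemma entirely (and hence the implicit requirement, which the paper does not check, that each $\one{P_i>\alpha_i}$ be coordinatewise monotone in the past $p$-values), at the price of the side condition $\alpha_i\le\lambda_i$ or the case split you mention. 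For Adaptive-Sidak your scalar inequality $\mu u^{-\beta/\mu}+u^\beta\ge 1+\mu$ is correct, and for Discard-Sidak the one-step bound is even exact in the worst case. For Discard-Fallback the paper does not use the graphical machinery at all: it conditions on $\S$, orders the selected nulls $j_1<j_2<\cdots$, takes a union bound, and observes that on $\{R_{j_k}=0\}$ the ratio $\alpha_{j_{k+1}}/\tau$ is dominated by a disjoint block of $\gamma_v$'s; your re-indexing of the graphical procedure along $\S$ is a different but equally valid route.

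There is, however, a real gap in your ADDIS-Sidak step. With $g_j=(1-\alpha)^{-\beta_j/(1-\lambda_j)}$ on $\{\lambda_j<P_j\le\tau_j\}$, the one-step inequality after uniform conservativeness becomes
\[
\bigl(u^{-\beta/(1-\lambda)}-1\bigr)\Bigl(1-\tfrac{\lambda}{\tau}\Bigr)\ \ge\ 1-u^\beta,\qquad u=1-\alpha,
\]
and this is \emph{false}: at $u=0.8,\ \beta=0.5,\ \lambda=0.49,\ \tau=0.5$ the left side is about $0.005$ while the right side is about $0.106$. The compensation that makes both the submartingale step and the terminal product bound go through simultaneously is $g_j=(1-\alpha)^{-\beta_j/(1-\lambda_j/\tau_j)}$, and then the pathwise bound $\prod g_j\le(1-\alpha)^{-1}$ needs the constraint $\sum_{j\in\S\setminus\C}\beta_j/(1-\lambda_j/\tau_j)\le 1$, which is strictly stronger than the constraint with denominator $1-\lambda_j$ that the paper states. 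The paper's own argument has the same slip at its step (iv), where it uses $\PPst{P_i\ge\lambda_i}{\F^{i-1},\S}\ge 1-\lambda_i$ although conditioning on $P_i\le\tau_i$ only yields $\ge 1-\lambda_i/\tau_i$; so this is a definitional issue in the ADDIS-Sidak constraint rather than a flaw in your submartingale idea, and your method recovers a correct proof once the denominator is fixed to $1-\lambda_j/\tau_j$.
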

	\begin{proof}
		We prove this Proposition using the similar techniques in the proof of theorems in \secref{ADDIS-Spending}, with the following \lemref{pos-asso} that comes from the positive association property of independent random variable introduced in \citep{esary1967association}. 
		
		\begin{lemma}\label{lem:pos-asso} Let $X_1, X_2, \dots$ be a sequence of independent random variables, then for any nonnegative functions $\{g_{j}(X_i, X_2, \dots)\}_{j=1}^{\infty}$ that each $g_{j}$ is either nondecreasing or nonincreasing in each $X_i$, we have
		\[
		\EE{\prod_{j=1}^{\infty}g_j(X_1, X_2, \dots)} \geq \prod_{j=1}^{\infty}\EE{g_j(X_1, X_2, \dots)}.
		\] 
		\end{lemma}
		Recall the definitions in and right after \eqnref{indicators}, and that $V$ is the number of false discoveries. 
		As for the proof for FWER control of Sidak variants, since the hypotheses are independent with each other, therefore the probability of no false discovery among all infinite decisions is 
		\begin{equation}
		    \PP{V=0} = \EE{\prod_{i \in \nulls} \one{P_i > \alpha_i} } \stackrel{(i)}{\geq} \prod_{i \in \nulls} \EE{\one{P_i > \alpha_i} } \stackrel{(ii)}{=} \prod_{i \in \nulls} \EE{\EEst{\one{P_i > \alpha_i}}{\F^{i-1}}},
		\end{equation}
		where (i) uses \lemref{pos-asso}, and (ii) uses the law of iterated expectation. Then, for Adaptive-Sidak, we have
		\begin{align*}
            \prod_{i \in \nulls} \EE{\EEst{\one{P_i > \alpha_i}}{\F^{i-1}}} 
			&\geq \prod_{i \in \nulls} \EE{ (1 - \alpha_i)} =  \prod_{i \in \nulls} \EE{(1-\alpha)^{\beta_i}}\stackrel{(ii)}{\geq} \prod_{i \in \nulls} (1-\alpha)^{\EE{\beta_i}}\\&
			\stackrel{(iii)}{\geq} \prod_{i \in \nulls} (1-\alpha)^{\EE{\beta_i\EEst{\frac{\one{P_i \geq \lambda_i}}{1-\lambda_i}}{\F^{i-1}}}}\stackrel{(iv)}{=} (1-\alpha)^{\EE{\sum_{i \in \nulls \setminus \C} \beta_i/(1-\lambda_i)}} \stackrel{(v)}{\geq} 1-\alpha,
			\end{align*}
			where (ii) is obtained using Jensen inequality, while (iii) is obatained using predictability of $\lambda_i$ and the validness of null $p$-values; (iv) are obtained using the predictability of $\lambda_i$ and $\beta_i$, and the law of iterated expectation; and (v) is true since $\sum_{i\in \nulls \setminus \C} \beta_i/(1-\lambda_i) \leq 1$ and that $1-\alpha < 1$ by construction. Hence, the probability of at least one false discovery (i.e. FWER) is at most $\alpha$.
			
			Noting that ADDIS-Sidak reduces to Discard-Sidak when setting $\lambda_i \equiv 0$ for all $i$, therefore we omit the proof of FWER control for Discard-Sidak and only present the proof for the more general algorithm ADDIS-Sidak in the following. When we additionally have that the null $p$-values are uniformly conservative as defined in \eqnref{conserve-def}, then the probability of no false discovery among all infinite decisions for ADDIS-Sidak given tested set $\S$ is:
			\begin{align*}
				\PPst{V=0}{\S} & = \EEst{\prod_{i \in \nulls \cap \S} \one{P_i > \alpha_i}}{\S} \stackrel{(i)}{\geq} \prod_{i \in \nulls \cap \S}\EEst{ \one{P_i > \alpha_i}}{\S} = \prod_{i \in \nulls \cap \S}\EEst{\EEst{ \one{P_i > \alpha_i}}{\F^{i-1},\S}}{\S} \\ 
				& \stackrel{(ii)}{\geq} \EEst{\prod_{i \in \nulls \cap \S} (1 - \alpha_i/\tau_i)}{\S} = \EEst{\prod_{i \in \nulls \cap \S} (1-\alpha)^{\beta_i}}{\S}\\
				& = \EEst{(1-\alpha)^{\sum_{i \in \nulls \cap \S} \beta_i}}{\S} \stackrel{(iii)}{\geq} (1-\alpha)^{\EEst{\sum_{i \in \nulls \cap \S} \beta_i}{\S}}\\
				& \stackrel{(iv)}{\geq} (1-\alpha)^{\EEst{\sum_{i \in \nulls \cap \S} \beta_i\EEst{\one{P_i\geq \lambda_i}/(1-\lambda_i)}{\F^{i-1},\S}}{\S}}\\
				& = (1-\alpha)^{\EEst{\sum_{i \in \nulls \cap \S \setminus{\C}} \beta_i/(1-\lambda_i)}{\S}}
			\stackrel{(v)}{\geq} (1-\alpha)
			\end{align*}
			where (i) is obtained using \lemref{pos-asso}, and (ii) is obtained using the uniformly conservative property of nulls, and the fact that the $p$-values are independent with each other, (iii) is obtained using Jesen inequality, and (iv) is obtained using the predictability of $\lambda_i$, $\tau_i$ and $\beta_i$, and the law of iterated expectation; finally (v) is true since $\sum_{i\in \nulls \cap \S\setminus{\C}} \beta_i \leq 1$ and that $1-\alpha < 1$. Hence, the probability of at least one false discovery (i.e. FWER) is at most $\alpha$. Therefore, we prove the FWER control argument for ADDIS-Sidak.		
			
			In the end, as for the proof for FWER control of Discard-Fallback, let $\nulls\cap\S = \{j_1,j_2,\dots\}$ be the possibly infinite sequence of null indices. Then, given $\S$, the event of making a false discovery is given by
			\[
			\{P_{j_1} \leq \alpha_{j_1}\} \bigcup \{P_{j_1} > \alpha_{j_1}, P_{j_2} \leq \alpha_{j_2}\} \bigcup \{P_{j_1} > \alpha_{j_1},\dots,P_{j_k} > \alpha_{j_k},P_{j_{k+1}} \leq \alpha_{j_{k+1}}\} \bigcup \dots
			\]
			Using a union bound, we have:
			\begin{align*}
				\PPst{V\geq 1}{\S} &\leq \sum_{k \geq 0} \PPst{P_{j_1} > \alpha_{j_1},\dots, P_{j_k} > \alpha_{j_k}, P_{j_{k+1}} \leq \alpha_{j_{k+1}}}{\S}\\
				&\leq \sum_{k \geq 0} \PPst{P_{j_{k+1}} \leq \alpha_{j_{k+1}}}{P_{j_1} > \alpha_{j_1},\dots, P_{j_k} > \alpha_{j_k},\S}\\
				& \stackrel{(i)}{=} \sum_{k \geq 0} \PPst{P_{j_{k+1}} \leq \alpha_{j_{k+1}}}{\S} \leq \sum_{k \geq 0} \alpha_{j_{k+1}}/\tau\\	&\stackrel{(ii)}{\leq}  \sum_{k \geq 0}\sum_{\sum_{u\leq j_k+1}S_u \leq v \leq \sum_{u\leq j_{k+1}}S_u} \gamma_v \stackrel{(iii)}{\leq} \alpha.
			\end{align*}
			where equality (i) is true since the null $p$-values are independent with each other, and inequality (ii) is true because each term in the sum is the largest possible value that $\alpha_{j_{k+1}}$ can take when $R_{j_k}=0$, and independence between the nulls means that the relevant conditional distribution of $P_{j_{k+1}}$ is still stochastically larger than uniform. Then, inequality (iii) follows because each $\gamma_v$ appears at most once in the sum. Therefore, we have:
			\[
			\textnormal{FWER} = \PP{V\geq 1} \leq \alpha,
			\]
			which complete the proof for FWER control of Discard-Fallback.
		\end{proof}
		
		\section{Closure principle for online multiple testing}\label{app:closedapp}
		We present a variant of the closure principle for online multiple testing. Given any online global null test, our closure principle constructs an algorithm that controls the FWER at level $\alpha$. Let $T_\alpha$ be a valid online global null test: (a) by online, we mean that it observes $p$-values one at a time, and at each step it either decides to stop and reject (output 1), or to observe the next $p$-value (if it never stops, that is considered to be an output of 0), and (b) by valid, we mean that if all hypotheses in the infinite sequence are truly null, then the probability that $T_\alpha$ outputs 1 at some point and stops is at most $\alpha$. The closed online testing method rejects $H_i$ if and only if $T_\alpha$ rejects every (potentially infinite) subsequence that contains $H_i$. 
   
       This method controls the FWER at level $\alpha$, as seen by the following simple argument. For there to be at least one false discovery, say $H_b$, every single subsequence involving $H_b$ must have been rejected. Specifically, the sequence involving only the nulls $\nulls$ must have been rejected by $T_\alpha$. However, we know that $\PP{T_\alpha(\nulls)=1} \leq \alpha$ by definition of validity of $T_\alpha$. Hence $\PP{\text{some null is rejected}} \leq \PP{T_\alpha(\nulls)=1} \leq \alpha$.

		\end{appendices}

\end{document}